\newif\ifarxiv
\newif\ifjournal
\definecolor{applegreen}{rgb}{0.55, 0.71, 0.0}
\newtcolorbox{mybox}[1][]{%
  % attach boxed title to top center
               % = {yshift=-8pt},
  boxsep = 0mm,
  colback      = applegreen!45!white,
  colframe     = applegreen,
  halign       = flush left,
  % font    = \bfseries\sffamily,
  % % colbacktitle = blue!85!black,
  % title        = #2,#1,
  enhanced,
}
\newclass{\stoqma}{StoqMA}
\newclass{\classP}{P}
\newclass{\np}{NP}
\newclass{\sharpP}{\#P}
\newclass{\ma}{MA}
\newclass{\qma}{QMA}
\newclass{\am}{AM}
\newclass{\maxcut}{MAXCUT}
\newclass{\sat}{SAT}
\newclass{\maxtwosat}{MAX2SAT}
\newclass{\twosat}{2SAT}
\newclass{\threesat}{3SAT}
\newclass{\se}{Sign Easing}
\newtheorem{theorem}{Theorem}%[section]
\newtheorem{definition}[theorem]{Definition}
\newtheorem{lemma}[theorem]{Lemma}
\newtheorem{example}[theorem]{Example}
\DeclareMathOperator{\tr}{Tr}
\newcommand{\ket}[1]{\vert{#1}\rangle}
\newcommand{\bra}[1]{\langle{#1}\vert}
\newcommand{\braket}[2]{\langle{#1}\vert #2 \rangle}
\newcommand{\abs}[1]{\vert #1 \vert}
\newcommand{\norm}[1]{\Vert #1 \Vert}
\newcommand{\nonstoq}[1]{\ensuremath #1_{\neg}}
\newcommand{\stoqabs}[1]{( #1 - 2 \nonstoq{#1})}
\DeclareMathOperator{\sign}{sign}
\DeclareMathOperator{\var}{Var}
\DeclareMathOperator{\diag}{diag}
\newcommand{\mc}[1]{\mathcal #1}
\newcommand{\mb}[1]{\mathbb #1}
\renewcommand{\l}[0]{\langle}
\renewcommand{\r}[0]{\rangle}
\newcommand{\id}{\mathbbm{1}}
\newcommand{\ee}{\mathrm{e}}
\newcommand{\jmodel}{$J_0$-$J_1$-$J_2$-$J_3$}
\definecolor{jens}{rgb}{0,.8,.5}
\definecolor{daniel}{rgb}{.7,.1,0}
\definecolor{dominik}{rgb}{0.4,.0,0.6}
\definecolor{ingo}{rgb}{0.4,.5,0.6}
\newcommand{\fu}{Dahlem Center for Complex Quantum Systems, Freie Universit\"{a}t Berlin, Germany}
\newcommand{\detailsandproof}{the Supplementary Material~\cite{suppmaterial}}
\begin{document}%%% 
%%%%=========================================================
%%% =========================================================================
\ifjournal
\begin{bibunit}
\fi

\title{Easing the Monte Carlo sign problem}

\author{Dominik Hangleiter}
\email[Corresponding author: ]{dominik.hangleiter@fu-berlin.de}

\affiliation{\fu}

\author{Ingo Roth}
\affiliation{\fu}

\author{Daniel Nagaj}
\affiliation{RCQI, Institute of Physics, Slovak Academy of Sciences, Bratislava, Slovakia}

\author{Jens Eisert}
\affiliation{\fu}
%\affiliation{\maa}

\keywords{Quantum Monte Carlo; stoquastic Hamiltonians; computational complexity; manifold optimization}

\begin{abstract}
	Quantum Monte Carlo (QMC) methods are the gold standard for studying equilibrium properties of quantum many-body systems -- their phase transitions, ground and thermal state properties.
	However, in many interesting situations QMC methods are faced with a \emph{sign problem}, causing the severe limitation of an exponential increase in the sampling complexity and hence the run-time of the QMC algorithm. 
	In this work, we develop a systematic, generally applicable, and practically feasible methodology for \emph{easing the sign problem} by efficiently computable basis changes and use it to rigorously assess the sign problem.
	Our framework introduces measures of non-stoquasticity that -- as we demonstrate analytically and numerically -- at the same time provide a practically relevant and efficiently computable figure of merit for the severity of the sign problem. 
	We show that those measures can practically be brought to a good use to ease the sign problem. 
	To do so, we use geometric algorithms for optimization over the orthogonal group and ease the sign problem of frustrated Heisenberg ladders. 
	Complementing this pragmatic mindset, we prove that easing the sign problem in terms of those measures is in general an \np-complete task for nearest-neighbour Hamiltonians and simple basis choices by a polynomial reduction to the \maxcut-problem.
	Intriguingly, easing remains hard even in cases in which we can efficiently assert that no exact solution exists.
\end{abstract}

%%% ============================================

\maketitle

\ifjournal
\section*{Introduction}
\fi
%%% ------------------------------------------------
%%%	Intro
%%% ---------------------------------------------

\emph{Quantum Monte Carlo (QMC)} techniques are central to our understanding of the equilibrium physics of many-body quantum systems. They provide arguably one of the most powerful workhorses for efficiently calculating expectation values of observables in ground and thermal states of various classes of many-body Hamiltonians
\cite{MonteCarloOld,troyer_nonlocal_2003,ReviewMonteCarlo,MonteCarloValidator}.
For a Hamiltonian $H$ in dimension $D$, the idea at the heart of the most prominent variant of QMC is to sample out world lines in a corresponding $(D+1)$-dimensional system, where the additional dimension is the (Monte Carlo) time dimension.
These world lines correspond to paths through an $m$-fold expansion of $\ee^{-\beta H} = (\ee^{-\beta H/m})^m$ where an entry of $\ee^{-\beta H/m}$ in a local basis is selected in each step. 
Each such path is associated with a probability which is proportional to the product of the selected entries. 
To sample from the resulting distribution, one can construct a suitable Markov chain of paths satisfying detailed balance, which -- if gapped -- eventually converges to its equilibrium distribution representing the thermal state. 
Generally speaking, concentration-of-measure phenomena often make such a procedure efficient.

In the classical variant of Monte Carlo, the Hamiltonian is always diagonal, giving rise to positive weights. 
In QMC, in contrast, positive (in general even complex) off-diagonal matrix elements of $H$ potentially give rise to negative weights of the paths. 
This leads to what is famously known as the \emph{sign problem} of QMC, namely that now one is faced with the task of sampling a quasi-probability distribution (normalized but non-positive) as opposed to a non-negative probability distribution.
This task can be achieved by introducing a suitable probability distribution that reproduces the desired sampling averages but typically comes at the cost of an exponential increase in the sampling complexity and hence the runtime of the algorithm.
For example, in world-line Monte Carlo one takes the absolute value of the quasi-probability distribution and then computes the \emph{average sign} which is given by the expectation value of the signs of the quasi-probabilities with respect to the new distribution. 
The sign problem is particularly severe for fermionic Hamiltonians, as the particle-exchange anti-symmetry forces their matrix elements to have alternating signs in the standard basis. Naturally, though, it also appears for bosonic or spin Hamiltonians.
The sign problem therefore severely limits our understanding of quantum materials. 
One can go as far as seeing it to divide strongly correlated systems into easy and intractable cases. 

A basic but fundamental insight is that the QMC sign problem is a \emph{basis-dependent} property \cite{hatano_representation_1992,hastings_how_2015}. 
For this reason, saying that `a Hamiltonian does or does not exhibit a sign-problem' is meaningless without specifying a basis. 
Since physical quantities of interest are independent of the basis choice, the observation that the sign problem is basis-dependent gives immediate hope to actually mitigate the sign problem of QMC by expressing the Hamiltonian in a suitable basis. 
This is not guaranteed to improve the overall runtime of QMC as governed not only by the sampling complexity but also by the computational complexity of producing an individual sample. 
Nonetheless, mitigating the sign problem is widely expected to render QMC efficient in many situations. 

In this work, we establish a comprehensive novel framework for assessing, understanding, and optimizing the sign problem computationally, asking the questions: 
\emph{What is the optimal computationally meaningful local basis choice for a QMC simulation of a Hamiltonian problem, can we find it, and how hard is this task in general?}  

\ifarxiv 
\subsection*{Curing the sign problem}
\fi

In fact, it is known that one can completely \emph{cure} the sign problem using basis rotations in certain situations. 
For specific models, sign-problem free bases can be found analytically,  
involving non-local bases, for example by using so-called auxiliary-field \cite{wu_exact_2003}, Jordan-Wigner \cite{okunishi_symmetry-protected_2014} or Majorana \cite{li_solving_2015,li_majorana-time-reversal_2016} transformations. 
One can also exploit specific known properties of the system such as that the system dimerizes \cite{nakamura_vanishing_1998,alet_sign-problem-free_2016,honecker_thermodynamic_2016,wessel_efficient_2017}.
Such findings motivate the quest for a more broadly applicable systematic search for basis changes that avoid the sign problem, in a way that 
\emph{does not depend} on the specific physics of the problem at hand. 
After all, in a QMC simulation one wants to \emph{learn about the physics} of a system in the first place and, indeed, the optimal basis choice may very well be closely related to that physics. 

Clearly, a useful notion of curing has to restrict the set of allowed basis transformation such that expressing the Hamiltonian in the new basis is still computationally tractable. 
For example, in its eigenbasis every Hamiltonian is diagonal and thus sign-problem free, but even writing down this basis typically requires an exponential amount of resources. 
The \emph{intrinsic sign problem} of a Hamiltonian is thus a property of its \emph{equivalence classes} under conjugation with some suitable subgroup of the unitary group. 
The simplest examples of such choices include local Hadamard, Clifford or unitary transformations. 
Most generally, one can allow for quasi-local circuits which are efficiently computable~\cite{hastings_how_2015}, including short circuits and matrix product unitaries~\cite{cirac_matrix_2017,sahinoglu_matrix_2017}, but also invertible transformations~\cite{dobrautz_compact_2019}. 

A both useful and simple sufficient condition for the absence of a sign problem, independent of the specifics of a simulation, is that the Hamiltonian matrix is \emph{stoquastic}, i.e., has only non-positive off-diagonal entries. 
In fact, stoquasticity provides a useful framework to assess the computational complexity of a systematic approach to curing the sign problem~\cite{troyer_computational_2005}. 
Only recently has the curing problem, to decide whether a stoquastic local basis exists, been shown to be an \np-complete task under on-site unitary transformations for $2$-local Hamiltonians with additional local fields~\cite{marvian_computational_2018,klassen_hardness_2019}, while it remains efficiently solvable for strictly $2$-local Hamiltonians~\cite{klassen_two-local_2018,klassen_hardness_2019}. 
But any such approach is faced with the question: 
Is all hope lost for simulating a Hamiltonian problem via QMC more efficiently even when a stoquastic basis cannot be found in polynomial time?

\ifjournal
\section*{Results}
\fi
% \ifarxiv
\subsection*{A pragmatic approach: Easing the sign problem}
% \fi

This leads us to the first part of the initially posed question: what is the \emph{optimal computationally meaningful choice} of basis?
In any Monte Carlo algorithm, computational hardness due to a sign problem
is manifested in a super-polynomial increase in its 
sample complexity as the system size grows. 
Intuitively speaking, the sample complexity increases because the variance of the Monte Carlo estimator does. 
In this mindset, finding a QMC algorithm with feasible runtime for Hamiltonians with a sign problem does not require the much stronger task of finding a basis in which the Hamiltonian is fully stoquastic. 
Indeed, in many cases such a basis may not even exist within a given subgroup of the unitaries. 
Rather, often it is sufficient to merely find a basis in which the Hamiltonian is \emph{approximately stoquastic} so that the scaling of the variance of the corresponding estimator with the system size is more favourable -- ideally polynomial. 
More pragmatically still, practitioners in QMC are increasingly less worried about small sign problems for which simulations are still feasible for reasonable system sizes using state-of-the-art computing power. 
This remains true even if the sampling effort may strictly speaking diverge exponentially with the system size.  
Consequently, we argue that practical computational approaches towards the sign problem, rather than focusing on \emph{exactly curing} it, should target the less ambitious yet practically meaningful task of approximately solving or \emph{easing} it in the best possible way. 

Here, we propose a systematic, generally applicable, and practically feasible methodology for easing the sign problem via basis rotations that allows for a meaningful rigorous assessment of this task.
An appealing feature of our framework is that it neither requires any a priori knowledge about the physics of a problem
 nor depends on specifics of a given simulation procedure, in contrast to other known refinements of QMC. 
At the heart of our approach lies a formulation of the easing problem in terms of a simple, efficiently computable measure of approximate stoquasticity that generically quantifies the sampling complexity. 

The sample complexity of a QMC algorithm can be linked to the size of the inverse average sign, which directly bounds the variance of the QMC estimator~\cite{troyer_computational_2005}. 
In an attempt to ease the sign problem of a given Hamiltonian it is therefore natural to try and improve the average sign. 
For a few specific models such improvements have indeed been achieved by different means: 
for example, one can exploit known physics to find bases with improved average sign~\cite{shinaoka_negative_2015,wessel_efficient_2017} that are often induced by sparse representations~\cite{mcclean_clock_2015,thomas_stochastic_2015,dobrautz_compact_2019}.
For particular observables, one can also exploit clever decompositions of the Monte Carlo estimator into clusters with non-negative sign~\cite{bietenholz_meron-cluster_1995,chandrasekharan_meron-cluster_1999,henelius_sign_2000,nyfeler_nested_2008,huffman_solution_2016,hann_solution_2017,hen_resolution_2019}. 

However, the sample complexity of computing the average sign via QMC is given by its very value and typically scales exponentially in the system size. 
Ironically, easing the sign problem by optimizing the average sign is therefore typically infeasible whenever there is a sign problem. 
One would hence like to quantify the severeness sign problem in terms of a quantity that is efficiently computable for physical Hamiltonians -- a crucial property to be practically useful in a general approach to easing the sign problem. 

Building on the notion of stoquasticity, for a real $D \times D$ Hamiltonian matrix $H$, we propose the sum of all non-stoquastic matrix entries
\begin{align}
\label{eq: non-stoquasticity measure}
	\nu_1(H) \coloneqq D^{-1}\norm{ \nonstoq H }_{\ell_1}  ,
\end{align}
as a natural measure of non-stoquasticity\ifarxiv~\footnote{
In contrast to Ref.~\cite{marvian_computational_2018} 
where \emph{term-wise stoquasticity} is considered, our definition remains on the level of the global Hamiltonian.
This is because positive matrix elements of the local Hamiltonian terms may cancel in the global matrix representation. 
Term-wise stoquasticity is thus meaningful (only) in the following sense: 
if a Hamiltonian admits a term-wise stoquastic local basis, it has no sign problem on an \emph{arbitrary} lattice. 
However, for any given graph, it might well be possible to fully cure the sign problem using an allowed set of transformations even if the Hamiltonian \emph{cannot} be made term-wise stoquastic. 
}\fi\ in order to quantify the sampling complexity of a QMC algorithm in generic instances. 
Here, as throughout this work, we denote the non-stoquastic part of the Hamiltonian by $\nonstoq H$ which is defined by $(\nonstoq H)_{i,j} = h_{i,j}$ for $h_{i,j} > 0 $ and $ i \neq j$, and zero otherwise. 
Moreover, $\norm{H}_{\ell_1} = \sum_{i,j} \abs{h_{i,j}}$ is the vector-$\ell_1$-norm. 

For local Hamiltonians on bounded-degree graphs such as regular lattices this measure can be efficiently computed from the non-stoquastic entries of the local terms themselves -- for translation-invariant Hamiltonians even with constant effort. 
But we can also go beyond that and prove that, for $2$-local Hamiltonians acting on any graph, the measure $\nu_1$ can be efficiently approximated up to any inverse polynomial error\ifarxiv; see Theorem~\ref{thm:approximating non-stoq arb graph}\fi\ifjournal; see Sec.~\ref{sec:computing non-stoq} of the Supplementary Material for details\fi. 
This result renders our measure applicable to problems with long-range and low-degree interactions as they arise, for example, in quantum chemistry.

In principle, one can also conceive of other measures of non-stoquasticity such as the $\ell_{1 \rightarrow 1}$-norm or the $\ell_2$-norm of the non-stoquastic part of $H$. 
We argue that the $\ell_1$-norm is the most meaningful measure that is agnostic to any particular structure of the Hamiltonian matrix and therefore the most versatile measure for a general approach to easing the sign problem.
What is more, it acts as a natural regularizer promoting a sparse representation~\cite{foucart_mathematical_2013} in the spirit of Refs.~\cite{mcclean_clock_2015,thomas_stochastic_2015,dobrautz_compact_2019}.

But how does the non-stoquasticity relate to the sample complexity of a QMC simulation?  
We find that it is in fact impossible to directly connect a continuous measure of non-stoquasticity to the average sign, which takes on its maximal value at unity and achieves this value for stoquastic Hamiltonians: 
We can construct exotic examples of highly non-stoquastic Hamiltonians with large positive off-diagonal entries which also have unit average sign. 
Conversely, we provide an example of a Hamiltonian with arbitrarily small non-stoquasticity for which the average sign nearly vanishes.

On the one hand, our examples demonstrate a high sensitivity of the average sign to the Monte Carlo parameters.  
On the other hand, they also require a malicious interplay between the Hamiltonian matrix entries and highly fine-tuned Monte Carlo parameters. 
We therefore expect that, in generic situations, the non-stoquasticity measure $\nu_1$ meaningfully quantifies the sample complexity of QMC.
We give analytical arguments that this is actually the case and numerically find that the average sign of generic two-local Hamiltonians scales exponentially in $\nu_1$; see Sec.~\ref{sec:av sign vs nonstoq}\ifjournal\, of the Supplementary Material~\cite{suppmaterial} for details\fi. 
Thus, we provide evidence that the non-stoquasticity of a local Hamiltonian meaningfully quantifies its sign problem.

% \ifarxiv
\subsection*{Easing in practice}
% \fi

This leads us to the question:
Can we practically ease the sign problem of physical Hamiltonians by minimizing non-stoquasticity? 
To study this second question, we consider translation-invariant nearest-neighbour Hamiltonians in a quasi one-dimensional geometry \cite{mikeska_one-dimensional_2004}. 
Quasi one-dimensional systems, such as anti-ferromagnetic Heisenberg Hamiltonians on ladder geometries \cite{dagotto_surprises_1996,takano_spin_1996} are the simplest non-trivial systems that exhibit a sign problem since they admit the phenomenon of geometric frustration \cite{sandvik_computational_2010}. 
Frustration gives rise to a plethora of phenomena arising in quasi one-dimensional systems such as the emergence of quantum spin liquids \cite{meng_coupled-wire_2015,huang_coupled_2017} and the interplay of spin-$1/2$ and spin-$1$ physics \cite{nietner_composite_2017}. 
They are also somewhat more realistic descriptions of actual low-dimensional experimental situations than simple one-dimensional chains, serving as a model for small couplings in the transverse direction~\cite{takano_spin_1996,yoshida_spin-disordered_2015,lai_coexistence_2017}. 
Therefore quasi one-dimensional systems are often seen as a stepping stone towards studying higher dimensions~\cite{iglovikov_geometry_2015}, where the sign problem inhibits QMC simulations~\cite{carrasquilla_two-dimensional_2015}, and thus serve as the perfect playground for a proof of principle.

\begin{figure*}[t]
	\includegraphics[width = \textwidth]{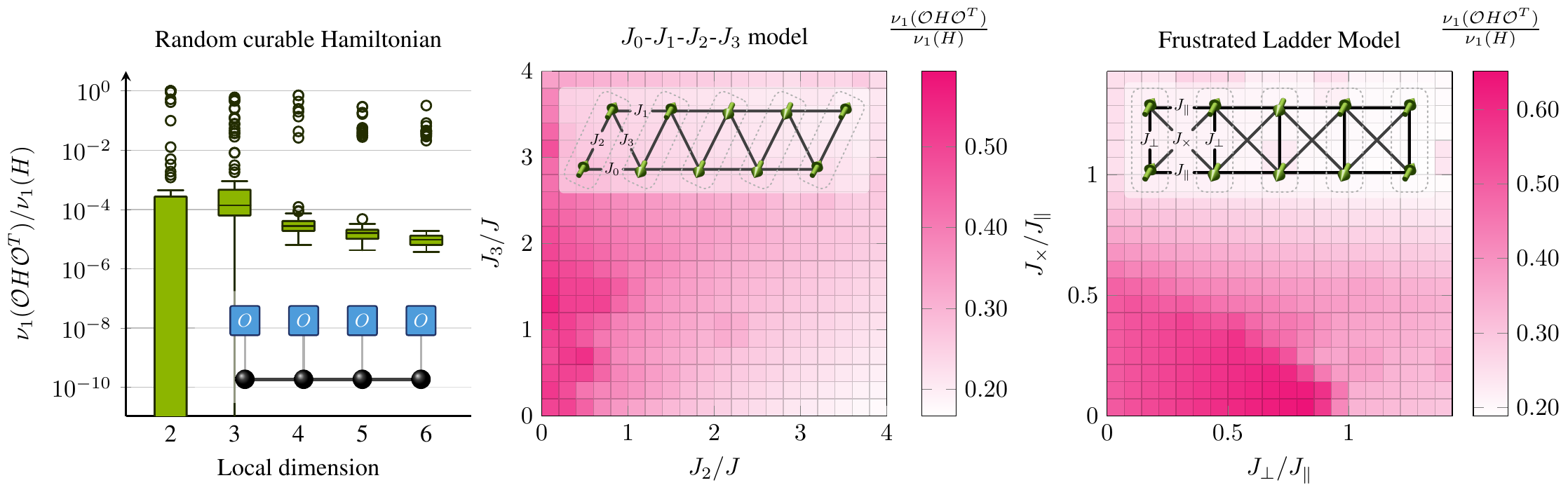}
	\caption{ We optimize the non-stoquasticity $\nu_1$ of translation-invariant, two-local Hamiltonians over on-site orthogonal transformations $\mc O = O^ {\otimes n} $ using a conjugate gradient method for manifold optimization \cite{abrudan_conjugate_2009,optimization_numerics}. 
	Figure
	\emph{(a)} shows the relative non-stoquasticity improvement of random two-local Hamiltonians that are known to admit an on-site stoquastic basis. 
	For each local dimension $100$ instances are drawn and the results displayed as a box plot according to Ref.~\cite[2.16]{boxplot_norm}, where whiskers are placed at $1.5$ times the interquartile range and circles denote outliers. 
	This serves as a benchmark of our algorithm, which for almost all instances accurately recovers a stoquastic on-site basis. 
	Figure
	\emph{(b)} displays the optimized non-stoquasticity of the anti-ferromagnetic \jmodel-Heisenberg model relative to the computational basis as a function of $J_2/J,J_3/J$, where $J_0 = J_1 = J$. 
	The algorithm is initialized in a Haar random orthogonal on-site basis. 
	This model is known to admit a \emph{non-local} stoquastic basis for $J_3 \geq J_0 + J_1$ \cite{nakamura_vanishing_1998}. Figure
	\emph{(c)} shows the optimized non-stoquasticity of the anti-ferromagnetic Heisenberg ladder illustrated in the inset with couplings $J_\parallel, J_\perp, J_\times$ relative to the computational basis as a function of $J_\perp/J_\parallel$ and $J_\times /J_\parallel$.
	We initialized the algorithm at the identity matrix (that was randomly perturbed by a small amount). 
	The phase diagram of the non-stoquasticity qualitatively agrees with the findings of Ref.~\cite{wessel_efficient_2017}, where the stochastic series expansion (SSE) QMC method was studied. 
	There, it was found that the sign problem can be completely eliminated for a completely frustrated arrangement where $J_\times = J_\parallel$, while the sign problem remains present for partially frustrated couplings $J_\times \neq J_\parallel$. 
	However, throughout the parameter regime the stoquasticity remains non-trivial, which may be due to the fact that the optimization algorithm converges to local minima.
	\label{fig:stoquastic plots}
	}
\end{figure*}

As a meaningful simple ansatz class, we consider on-site orthogonal transformations $O \in O(d)$ of the type 
\begin{align}
\label{eq:local basis choice main text}
	H = \sum_{i = 1}^{n} T_i(h) \mapsto O^{\otimes n} H (O^T)^{\otimes n} ,
\end{align}
for Hamiltonians $H$ acting on $n$ qudits with local dimension $d$. 
Here, $T_i(h) $ denotes the translation of a two-local term $h$ to site $i$. On-site transformations can be handled particularly well as they preserve locality and translation-invariance of local Hamiltonians. 
In particular, for such transformations, the global non-stoquasticity measure can be expressed locally in terms of the transformed term $O^{\otimes 2 } h (O^T)^{\otimes 2} $ so that the optimization problem has constant complexity in the system size. 
This constitutes an exponential improvement over approaches that directly optimize the average sign.  

To optimize the non-stoquasticity in this setting, we have implemented a geometric optimization method suitable for group manifolds, namely, a conjugate gradient descent algorithm over the orthogonal group $O(d)$ \cite{abrudan_conjugate_2009,optimization_numerics}\ifjournal; see Sec.~\ref{sec:practical easing} of the Supplementary Material~\cite{suppmaterial} for details\fi. 
In Fig.~\ref{fig:stoquastic plots}(a) we show that, generically, the algorithm accurately recovers an on-site stoquastic basis for random Hamiltonians which are known to admit such a basis \emph{a priori}. 
This shows that the heuristic algorithm successfully minimizes the non-stoquasticity and thus serves as a benchmark for its functioning. 

We now apply the algorithm to frustrated anti-ferromagnetic Heisenberg Hamiltonians on different ladder geometries; see Fig.~\ref{fig:stoquastic plots}(b) and (c). 
Ladder geometries are not only interesting for the reasons described above, but also because in spite of frustration effects they often admit sign-problem free QMC methods \cite{nakamura_vanishing_1998,honecker_thermodynamic_2016,wessel_efficient_2017}. 
For both the \jmodel-model studied in Ref.~\cite{nakamura_vanishing_1998} and the frustrated Heisenberg ladder studied in Refs.~\cite{honecker_thermodynamic_2016,wessel_efficient_2017}, we find a rich optimization landscape in which a relative improvement of the non-stoquasticity by a factor of $2$ to $5$ can be achieved depending on the region in the phase diagram. 
Importantly and in spite of those seemingly moderate improvements of non-stoquasticity, we find that the sample complexity of QMC as governed by the inverse average sign is greatly diminished to approximate unity in large regions of the parameter space for the frustrated ladder model; see Fig.~\ref{fig:frustrated ladder av sign small}. 

\begin{figure}
	\includegraphics{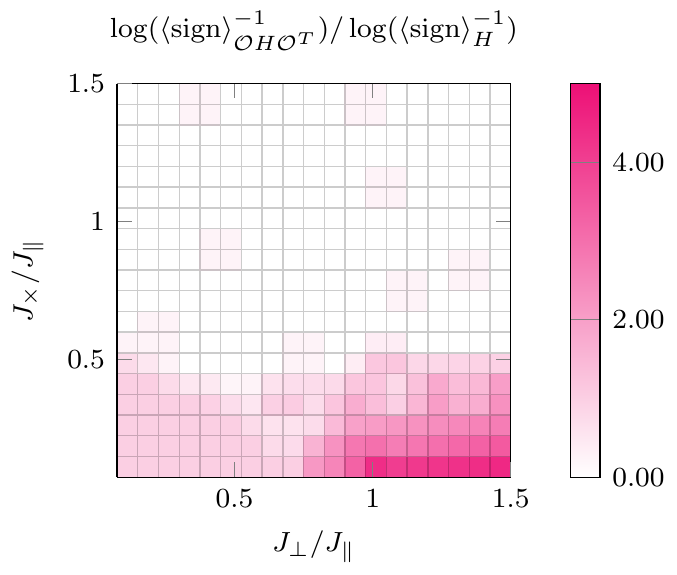}
	\caption{
	Improvement of the inverse average sign $\l \sign \r^{-1}$ concomitant with the improvement in non-stoquasticity of Fig.~\ref{fig:stoquastic plots}(c) for the frustrated ladder model as measured by the ratio of its logarithm before optimization compared to that after optimization. 
  	We compute the average sign via exact diagonalization for a ladder of $2 \times 4$-sites, $m = 100$ Monte Carlo steps and inverse temperature $\beta = 1$.
 }
	\label{fig:frustrated ladder av sign small}
\end{figure}

It may well be the case that no stoquastic dimer basis exists \emph{even though} other variants of QMC do not incur a sign problem for such basis choices:  
in Ref.~\cite{nakamura_vanishing_1998} a stoquastic but \emph{non-local} basis of the \jmodel-model is identified for values of $J_2 \geq J_0 + J_1$, indicating that more general ansatz classes may well help to further improve the non-stoquasticity. 
We also observe that first-order optimization algorithms such as the employed conjugate gradient method encounter obstacles due to the rugged non-stoquasticity landscape. 
Intuitively, this landscape is governed by the combinatorial increase of possible assignments of signs to the Hamiltonian matrix elements. 

% Our findings are twofold: 
The findings of our proof-of-principle study are twofold: 
on the one hand, they show that one can in fact efficiently optimize the non-stoquasticity for translation-invariant problems that admit a stoquastic basis lying within the ansatz orbit.
They also further substantiate the claim that optimizing non-stoquasticity typically eases the sign problem and dampens the increase in sampling complexity.
What is more, they indicate that more general ansatz classes such as quasi-local circuits yield the promise to further reduce the non-stoquasticity of ladder models.
We therefore expect that optimizing non-stoquasticity is a feasible and promising means to reduce the sign problem for many different systems, including two-dimensional lattice systems, by exploiting the flexibility offered by larger ansatz classes within our framework. 
On the other hand, already in our small study we encountered obstacles preventing efficient optimization of the non-stoquasticity in the guise of a complicated and rugged optimization landscape.

\begin{figure*}[t]
	\centering
	\includegraphics{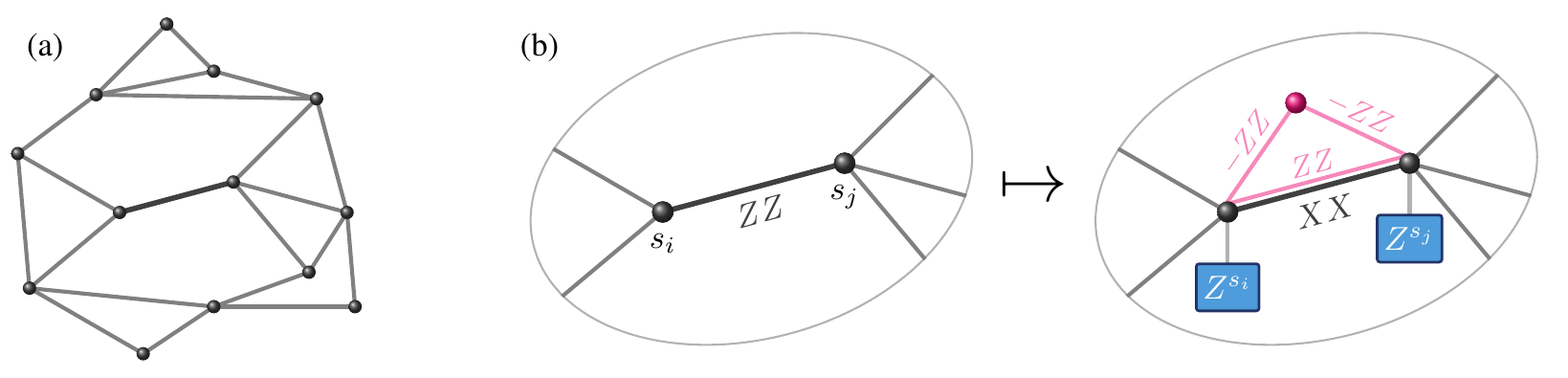}
	\caption{
	\label{fig:maxcut embedding}
	Constructing a Hamiltonian whose sign problem is \np-hard to ease under orthogonal on-site transformations.
	\emph{(a)} To prove \np-completeness of \se, we reduce it to the \maxcut-problem which asks for the ground-state energy of an anti-ferromagnetic Ising Hamiltonian $H$ on a graph $G$. 
	\emph{(b)} 
	In our encoding, we map $H$ to a Hamiltonian $H'$ in which all $ZZ$-interactions are replaced by $XX$-interactions and translate the spin 
	configurations $(s_1, \ldots, s_n) \in \{0,1\}^n$ of the anti-ferromagnetic Ising model to on-site transformations $Z_1^{s_1} \cdots Z_n^{s_n}$. 
	To achieve this restriction, we penalize all other transformations by adding an ancilla qubit $\xi_{i,j}$ for every edge $(i,j)$ of $G$ and adding the interaction term $C(Z_i Z_j - Z_i Z_{\xi_{i,j}} - Z_j Z_{\xi_{i,j}})$ with a suitably chosen constant $C > 0$. 
	We obtain that $\nu_1(H')$ can be eased below a certain value if and only if the ground state energy of $H$ is below that value to begin with, thus establishing the reduction. 
	}
\end{figure*}

% \ifarxiv
\subsection*{The computational complexity of \se}
% \fi

Fundamentally, our findings thus raise the third question: 
How far can an approach to easing the sign problem using optimization over local bases carry in principle? 
In our main complexity-theoretic result, we systematically study the fundamental limits of minimizing non-stoquasticity as a means to ease the sign problem.
To do so, we complement the pragmatic mindset of this work with the rigorous machinery of computational complexity theory, asking the question: 
What is the computational complexity of optimally easing the sign problem?
% \ifarxiv
In order to formalize this question, 
we introduce the corresponding decision problem: 
\begin{definition}[\se]
	Given an $n$-qubit Hamiltonian $H$, constants $B > A \geq 0$ with $B - A \geq 1/\poly(n)$, and a set of allowed unitary transformations $\mc U$, decide which of the following is the case: 
	\begin{align}
		\text{YES}:& \quad \exists U \in \mc U: \nu_1(U H U^\dagger) \leq A, \text{ or } \label{eq:se yes}\\
		\text{NO}:&  \quad\forall U \in \mc U: \nu_1(U H U^\dagger) \geq B. \label{eq:se no}
	\end{align}
\end{definition}

We derive the computational complexity of the sign easing problem in simple settings, namely for $2$-local Hamiltonians, allowing for on-site orthogonal Clifford operations as well as for on-site general orthogonal transformations. 
We prove that under both classes of transformations \se\ is \np-complete. 
Intriguingly, this holds true even in cases in which the curing problem can be decided efficiently, namely, for strictly $2$-local XYZ Hamiltonians of the type considered in Refs.~\cite{klassen_two-local_2018,klassen_hardness_2019}. 
% Our results are based on a polynomial reduction to the \np-complete \maxcut-problem, which asks for the ground state energy of an anti-ferromagnetic Ising Hamiltonian on a general graph. 

\begin{theorem}[Complexity of \se]
\label{thm:complexity of se}
	\se\, is \np-complete for $2$-local (XYZ) Hamiltonians under 
	\begin{enumerate}[i.]
		\item \label{item:se i} on-site orthogonal Clifford 
		transformations, and 
		\item \label{item:se ii}on-site general orthogonal 
		transformations. 
	\end{enumerate}
\end{theorem}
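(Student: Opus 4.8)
The plan is to prove \se\ both lies in \np\ and is \np-hard, the latter via a polynomial reduction from \maxcut. Membership is the easier direction: a YES instance is certified by the tuple of single-site orthogonal matrices $(O_1,\dots,O_N)$ defining $U = \bigotimes_i O_i$ over all $N = \poly(n)$ sites, each matrix given to polynomially many bits. Since on-site transformations preserve $2$-locality, $U H U^\dagger$ is again $2$-local, so Theorem~\ref{thm:approximating non-stoq arb graph} lets the verifier approximate $\nu_1(U H U^\dagger)$ to any inverse-polynomial additive error in polynomial time. Taking this error below $(B-A)/2$, which is legitimate because $B - A \ge 1/\poly(n)$, separates the YES case $\nu_1 \le A$ from the NO case $\nu_1 \ge B$; for the Clifford variant the single-site orthogonal Cliffords form a finite set, so the certificate and its verification are even exact.

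For hardness I would reduce from \maxcut, phrased as computing the ground-state energy of the anti-ferromagnetic Ising Hamiltonian on a graph $G = (V,E)$. The reduction replaces every Ising edge by an $XX$-coupling and attaches to it a fresh ancilla $\xi_{i,j}$ through a diagonal penalty gadget,
\begin{align}
  H' = \sum_{(i,j)\in E}\Bigl( X_i X_j + C\,(Z_i Z_j - Z_i Z_{\xi_{i,j}} - Z_j Z_{\xi_{i,j}})\Bigr),
\end{align}
a strictly $2$-local XYZ Hamiltonian of the advertised form. The guiding identity is that a diagonal transformation $\prod_i Z_i^{s_i}$ leaves every penalty term invariant while sending $X_i X_j \mapsto (-1)^{s_i+s_j} X_i X_j$. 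As $X\otimes X$ has all off-diagonal entries $+1$ whereas $-X\otimes X$ is stoquastic, and as distinct edges induce distinct bit-flip patterns so that no cancellation occurs in the global matrix, one checks that under $\prod_i Z_i^{s_i}$ the normalised measure evaluates \emph{exactly} to $\nu_1(H') = \#\{(i,j)\in E : s_i = s_j\}$, the number of uncut edges. Minimizing over this diagonal family therefore coincides with \maxcut\ on $G$.

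The heart of the argument --- and the step I expect to be the main obstacle --- is a gadget lemma showing that no on-site transformation outside the eigenbasis-preserving family can do better. I would prove that, per edge, the three terms $Z_iZ_j$, $-Z_iZ_{\xi_{i,j}}$, $-Z_jZ_{\xi_{i,j}}$ contribute zero to $\nu_1$ simultaneously if and only if each of $O_i, O_j, O_{\xi_{i,j}}$ maps $Z$ to $\pm Z$ (equivalently preserves the computational eigenbasis and hence sends $X \to \pm X$), and that any deviation produces a positive off-diagonal entry of magnitude $\Omega(C)$. The frustrated sign pattern $(+,-,-)$ together with the shared ancilla is precisely what prevents a joint rotation of $i$ and $j$ from hiding the non-stoquasticity of a lone $ZZ$-term, thereby pinning each single-site transformation down individually; in particular it forbids the Hadamard-type rotation $Z \leftrightarrow X$ that would otherwise diagonalize the problem terms and trivialize the instance. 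Choosing $C$ to be a polynomial exceeding the maximal problem-term non-stoquasticity $|E|$ then guarantees $\min_{U}\nu_1(H') = |E| - \mathrm{MaxCut}(G)$, an integer; setting $A = |E| - k$ and $B = A + 1/2$ makes easing below $A$ equivalent to $\mathrm{MaxCut}(G) \ge k$ while respecting the promise gap.

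The gadget lemma is immediate for part~\ref{item:se i}, where the on-site orthogonal Cliffords form a finite set and the condition is verified by a finite case check. For part~\ref{item:se ii} it becomes a genuine continuous problem: one must lower-bound the $\ell_1$-non-stoquasticity of the transformed gadget, as a function of the rotation and reflection angles of $O_i, O_j, O_{\xi_{i,j}} \in O(2)$, by a positive constant away from the finitely many admissible configurations, and check that gadgets sharing a site impose jointly consistent constraints satisfied only by the $Z$-preserving choices. This continuous lower bound is the technically demanding piece; the remaining steps --- assembling per-edge contributions, fixing $C$, $A$, $B$, and invoking \np-completeness of \maxcut{} --- are routine.
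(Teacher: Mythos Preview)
Your proposal is correct and follows essentially the same route as the paper: identical gadget Hamiltonian $H'$, identical $Z$-string correspondence with Ising configurations, membership via Theorem~\ref{thm:approximating non-stoq arb graph}, and the same division of labour between a finite case check for the Clifford group and a continuous angle analysis for $O(2)$.

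One technical point is worth flagging. Your statement that ``any deviation produces a positive off-diagonal entry of magnitude $\Omega(C)$'' is true for a \emph{single} transformed $ZZ$ term, but in the global matrix several $X_iZ_j$ terms at a shared vertex $i$ can partially cancel, so a large entry does not automatically yield a large contribution to $\nu_1$. The paper handles this via a dedicated lemma showing that each $x_{i,j}X_iZ_j$ term contributes at least $|x_{i,j}|/(2\deg G')$ to $\nu_1$, and then chooses $C$ proportional to a power of $\deg(G)$ (namely $C=4\deg G$ for Cliffords, $C=(4\deg G)^2$ for general orthogonals) so that the per-edge penalty for any deviation is at least $1$, matching the cost of an uncured $X_iX_j$. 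Your cruder choice $C>|E|$ would also work once this degree dilution is accounted for, but the lemma is what makes the per-edge bookkeeping rigorous.
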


From a practical perspective, our results pose limitations on the worst-case runtime of algorithms designed to find optimal QMC bases for the physically relevant case of $2$-local Hamiltonians. 
From a complexity-theoretic perspective, they manifest a sign problem variant of the dichotomy between the efficiently solvable \twosat-problem to decide whether there exists a satisfying assignment for a $2$-local sentence, and the \np-complete \maxtwosat-problem asking what is the least possible number of broken clauses. 
They thus complete the picture drawn by Refs.~\cite{klassen_two-local_2018,marvian_computational_2018,klassen_hardness_2019} regarding the connection between satisfiability problems and the problems of curing and easing the sign problem on arbitrary graphs, a state of affairs which we illustrate in Table~\ref{tab:2sat-max2sat}. 
It is natural to ask the question how far this connection extends and what we can learn from it about efficiently solvable instances. 
For example, one may ask, whether results about the hard regions of \threesat\ and \maxtwosat\ carry over to the problems of curing and easing the sign problem.

\begin{table}[b]
\centering
	\begin{tabular}{c >{\ }c<{\quad}  c  >{\quad}c }
	\toprule
		Satisfiability & Stoquasticity & Complexity &  Refs.\\ \midrule
		\threesat &Curing $2$+$1$-local $H$ &  \np-complete & \cite{marvian_computational_2018,klassen_hardness_2019}\\
		 \twosat & Curing strictly $2$-local $H$& in \P & \cite{klassen_two-local_2018,klassen_hardness_2019}\\
		 \maxtwosat  & Easing strictly $2$-local $H$& \np-complete & \emph{here} \\ \bottomrule
	\end{tabular}
	\caption{
	The satisfiability equivalent of curing the sign problem is to decide whether a given sentence is satisfiable, while the equivalent of easing is to find the minimal number of clauses that are violated by a sentence.
	Similarly, results on the computational complexity of curing and easing the non-stoquasticity of a local Hamiltonian $H$ are in one-to-one correspondence with the hardness of satisfiability problems. 
	\label{tab:2sat-max2sat}
	}
\end{table}

\ifjournal
\section*{Materials and Methods}
\fi

We prove Theorem~\ref{thm:complexity of se}\,\ref{item:se i} and \ref{item:se ii} as Theorems \ref{thm:clifford hardness} and \ref{thm:orthogonal hardness}\ifjournal\, in the Supplementary Material~\cite{suppmaterial}\fi. 
The essential idea of our proof, sketched below and illustrated in Fig.~\ref{fig:maxcut embedding}, is to design a corresponding Hamiltonian such that if the sign problem could be optimally eased for this Hamiltonian under the respective ansatz class, one could also find the ground state energy of the original anti-ferromagnetic Ising Hamiltonian, a task that is \np-hard to begin with. 
It is straightforward to prove versions of Theorem~\ref{thm:complexity of se} for any $\ell_p$-norm of the non-stoquastic part of $H$ with finite $p$ as a measure of non-stoquasticity. 
Our result is therefore independent of the particular choice of ($\ell_p$) non-stoquasticity measure. 

% \ifarxiv
\begin{proof}[Proof sketch]
\se\ for arbitrary $2$-local Hamiltonians is contained in \np\ -- given a basis transformation, we can approximate the measure of non-stoquasticity from the transformed local terms up to any inverse polynomial error and hence verify the YES-case~\eqref{eq:se yes}; see Theorem~\ref{thm:approximating non-stoq arb graph}\ifjournal \, in the Supplementary Material~\cite{suppmaterial}\fi. 

The key idea of the harder direction of the proof is to encode the promise version of the \maxcut-problem into the \se-problem. 
An instance of \maxcut\ is given by a graph $G = (V,E)$, and the problem is to decide whether the ground-state energy of the anti-ferromagnetic (AF) Ising Hamiltonian 
\begin{align}
	H = \sum_{(i,j) \in E} Z_i Z_j \, ,
\end{align}
is below a constant $A$ or above $B$. 
Here, $Z_i$ is the Pauli-$Z$-operator acting on site $i$. 
We now define a Hamiltonian $H'$ in which we replace every $Z_iZ_j$ interaction of $H$ by an $X_i X_j$ interaction as we illustrate in Fig.~\ref{fig:maxcut embedding}. 
To understand our embedding, suppose that we perform basis changes only by applying $Z$ or $\id$ at every site. 
In this case a Hamiltonian term can be made stoquastic if and only if $X_i X_j \mapsto - X_i X_j$ which is achieved by a transformation $Z^{s_i} Z^{s_j}$ with $(s_i ,s_j)= (0,1) \vee (1,0) $. 
A term remains stoquastic for $(s_i, s_j) = (1,1) \vee (0,0)$.
This provides a direct mapping between spin configurations $(1,0)$ and $(0,1)$, which do not contribute to the ground state energy of the anti-ferromagnetic Ising model and transformations that make local terms in $H'$ stoquastic and thus decrease the non-stoquasticity. 

To prove the theorem for arbitrary on-site Clifford and orthogonal transformations, we introduce an additional qubit $\xi_{i,j}$ for every edge $(i,j)$ and add interaction terms $C(Z_i Z_j - Z_i Z_{\xi_{i,j}} - Z_j Z_{\xi_{i,j}})$ to $H'$ with constant $C = 2 \deg(G)$, where $\deg(G)$ is the degree of the interaction graph $G$, see Fig.~\ref{fig:maxcut embedding}(b). 
These terms penalize all other transformations such that the optimal non-stoquasticity of $H'$ is always achieved for transformations of the form $Z_1^{s_1} \cdots Z_n^{s_n}$ with $(s_1, \ldots, s_n) \in \{ 0,1\}^n$. 
For example, suppose that we apply Hadamard transformations to all sites $i,j,\xi_{i,j}$, then the $ZZ$ interactions and $XX$ interactions change roles so that the non-stoquasticity cannot be decreased by such a transformation. 
Showing this for all possible transformations constitutes the main technical part of the proof. 
\end{proof}
% \fi

Since \maxcut\, is a variant of the \maxtwosat-problem our results not only manifest but also crucially utilise the \twosat-\maxtwosat\, dichotomy.
Notice that since the \maxcut-problem is \np-hard already on 
subgraphs of the double-layered square lattice \cite{barahona_computational_1982}, which has degree six, hard instances of the sign-easing problem occur already for low-dimensional lattices with small (constant) interaction strength.

In our complexity-theoretic analysis, we have focused on the computational complexity of easing the sign problem as the size of an arbitrary input graph is scaled up, in the same mindset as Refs.~\cite{troyer_computational_2005,klassen_two-local_2018,marvian_computational_2018,klassen_hardness_2019}. 
We expect, however, that the complexity of \se\ scales similarly in the size of the lattice unit cell and the local dimension of translation-invariant systems such as those discussed above. 

\ifjournal
\section*{Discussion}
\fi
% \ifarxiv
\subsection*{Summary}
% \fi

Let us summarize: 
Our work introduces the sign easing methodology as a systematic novel paradigm useful for assessing and understanding the sign problem of QMC simulations. 
We ask and answer three central questions using complementary methods from theoretical and applied computer science as well as 
from physics. 
First, we define a measure of non-stoquasticity suitable for easing the sign problem and extensively discussed its relation to the average sign. 
Second, we demonstrate that one can feasibly optimize this measure over local bases in simple settings by applying geometric optimization methods. 
Finally, we establish the computational complexity of sign easing in a broader but still simple setting.
In this way, our work not only identifies a means of easing the sign problem and demonstrates its feasibility and potential, but also shows up its fundamental limitations in terms of computational complexity. 
Even more so, we are confident that the framework of our work provides both valuable guidance and the practical means for future research on systematically easing the sign problem of Hamiltonians that are particularly interesting and relevant in condensed-matter and material science applications. 

% \ifarxiv
\subsection*{Outlook}

As a first general and systematic attempt to easing the sign problem,
we have restricted the focus of this work in several ways. 
As such, a number of questions, generalizing our results in different directions, are left open. 

First, we have restricted our discussion to the prominent world-line Monte Carlo method to maintain clarity throughout the manuscript. 
We are confident, however, that our results find immediate application for other Monte Carlo methods such as stochastic series expansion Monte Carlo and determinantal Monte Carlo \cite{landau_guide_2000,sandvik_computational_2010} as well as diffusion Monte Carlo techniques such as full-configuration-interaction Monte Carlo~\cite{booth_fermion_2009}. 
Similar sign problems involving the sampling from quasi-probability distributions also appear in different contexts, for example, in 
approaches to the classical simulation of quantum circuits~\cite{dawson_quantum_2004,jordan_quantum-merlin-arthurcomplete_2010,pashayan_estimating_2015} or high-energy physics~\cite{PhysRevD.66.106008}.
In these contexts, too, the problem of finding better bases in which to perform the sampling appears. 
While the framework developed in this work uses the specific features of QMC, the general idea and mindset behind it applies to all basis-dependent sign problems. Our work thus paves the way towards easing sign problems in a 
plethora of contexts. 

Second, we have only considered real-valued Hamiltonians and transformations which preserve this property. 
For general complex-valued Hamiltonians, the sign problem takes the form of a \emph{complex phase problem}. 
A natural follow-up of our work is to explore how our results on easing the sign problem generalize to the complex phase problem.

Third, we have put an emphasis on the conjugation of 
Hamiltonians under on-site
Clifford and orthogonal circuits. 
In principle, one may also allow for arbitrary quasi-local
circuits, as long as the conjugation can be efficiently computed; albeit of exponentially increasing effort with the support of the involved unitaries. 
This leads to the interesting insight that within the trivial phase of matter, one can always remove the sign problem: 
One has to conjugate the Hamiltonian with the quasi-local
unitary that brings a given Hamiltonian into an on-site form of a \emph{fixed point Hamiltonian}.
For given Hamiltonians, this may be impractical, of course. 
In this sense, one can identify trivial quantum phases of matter as \emph{efficiently computable phases of matter}, an intriguing state of affairs from a conceptual perspective. 
Conversely, for topologically
ordered systems, there may be topological obstructions to curing the sign problem by any quasi-local circuit~\cite{hastings_how_2015,ringel_quantized_2017}, giving rise to an entire phase of matter that exhibits an intrinsic sign problem. 
For example, the fixed point Hamiltonians
of the most general class of non-chiral topologically ordered systems, the Levin-Wen models
\cite{PhysRevB.71.045110}, are associated with $12$-local Hamiltonians, many of which are expected to not be
curable from their sign problem. 
This insight further motivates to study the sign easing problem for efficiently computable subgroups of local unitaries from a perspective of topological phases of matter.

Our work also opens up several paths for future research. 
The immediate and practically most relevant direction is of course to find the best possible way of minimizing the non-stoquasticity of translation-invariant systems and to explore how well the sign problem can be eased in systems that are not yet amenable to QMC. 
We have already introduce a flexible optimization approach which can be straightforwardly applied to a wide range of translation-invariant systems and ansatz classes in any dimensionality.
In this respect, it will be interesting to compare possible ways of optimizing the sign problem via different measures~\cite{levy_mitigating_2019} and optimization algorithms~\cite{torlai_wavefunction_2019} in various systems \cite{kim_alleviating_2019}.

Furthermore, in our hardness proof we have shown that the easing problem is intricately related to satisfiability problems. 
Building on this connection, an exciting direction of research is to combine highly efficient \sat-solvers that are capable of exploring 
combinatorically large sets, with manifold optimization techniques that are able to handle rich geometrical structures, in the spirit of recent work~\cite{shoukry_smc:_2017}. 
While our hardness result shows up fundamental limitations of \se\ in the general case, it thus also opens the door to potentially solve the sign easing problem in relevant instances by applying methods well known in computer science to relaxed versions of the easing problem. 
One may thus hope that for large classes of relevant instances for which minimizing non-stoquasticity is actually tractable.

A question closely related to the sign easing problem is the following: 
How hard is it to find the ground state energy of a stoquastic Hamiltonian -- a sub-problem of the so-called local Hamiltonian problem. 
The computational complexity of this 
\emph{stoquastic local Hamiltonian problem} poses fundamental 
limitations on the classical simulatability of Hamiltonians which 
do not suffer from a sign problem and are therefore amenable to QMC simulations. 
It has been shown that the $2$-local stoquastic Hamiltonian problem is complete for the class {\stoqma} \cite{bravyi_complexity_2006,bravyi_complexity_2009}, a class intermediate between {\am} and {\ma} that also functions as a genuinely intermediate class in the complexity classification of local Hamiltonian problems \cite{cubitt_complexity_2016}, even when extending to the full low-energy spectrum \cite{cubitt_universal_2018}. 
The results of Ref.~\cite{bravyi_complexity_2006} also imply that we cannot expect to efficiently find a stoquastic local basis for arbitrary local Hamiltonians unless the unlikely complexity-theoretic equality $\am = \qma$ holds.

Indeed, for efficiently curable Hamiltonians, the local Hamiltonian problem is reduced to a stoquastic local Hamiltonian problem. 
Conversely, both the easing problem and the stoquastic local Hamiltonian problem contribute to the hardness of a QMC procedure. 
For a given Hamiltonian, QMC may thus be computationally intractable for two reasons: 
it is hard to find a basis in which the Hamiltonian is stoquastic, or cooling to its ground state is computationally hard in its own right.
In a QMC algorithm, the latter hardness is manifested as a Markov chain Monte Carlo algorithm not converging in polynomial time. 
This may be the case even for classical models such as Ising spin glasses \cite{barahona_computational_1982}. 

An important open question is how the hardness of easing the sign problem and the hardness of sampling from the estimator distribution are related in specific cases.
For example, when improving the average sign, the hardness of a problem that was manifest in an increased sample complexity of the Monte Carlo estimator, might be `transferred' to the hardness of sampling from the resulting distribution. 
On the other hand, there might be instances in which the only obstacle in the way of an efficient simulation is to find a certain basis in which the corresponding Hamiltonian has a large average sign, but, given that basis, QMC runs efficiently.

\ifarxiv
\subsection*{Overview}

The plan for the technical part of this work is as follows: 
In Section~\ref{sec:sign problem exposition} we sketch the idea of world-line QMC methods and explain how the sign problem arises there. 
In Section~\ref{sec:av sign vs nonstoq} we then discuss the relation between the average sign and non-stoquasticity. 
There, we construct examples showing that the two are in general unrelated (\ref{sec:examples}), but then continue to argue both analytically (\ref{sec:motivating non-stoquasticity}) and numerically (\ref{sec:av sign numerics}) that the non-stoquasticity $\nu_1$ defined in Eq.~\eqref{eq: non-stoquasticity measure} is a meaningful and efficiently computable (\ref{sec:computing non-stoq}) measure of the sign problem. 
In Section~\ref{sec:practical easing} we perform a proof-of-principle numerical study showing that easing is both feasible and meaningful for translationally invariant models with a sign problem. 
In Section~\ref{sec:complexity} we then study the fundamental limitations of a systematic approach to the sign problem in proving the computational hardness of \se\ when allowing for both orthogonal Clifford (Theorem~\ref{thm:clifford hardness}) and general orthogonal transformations (Theorem~\ref{thm:orthogonal hardness}). 

\fi

\ifjournal
	\paragraph*{Acknowledgements}
	%!TEX root = qmc_approx_stoq_paper.tex
We are immensely grateful for the many fruitful discussions which have helped shape this work -- with Albert 
Werner, Martin Schwarz, Juani Bermejo-Vega, 
and Christian Krumnow in early stages of the project; more recently with Matthias Troyer, Joel Klassen, Marios Ioannou, Maria Laura Baez, Hakop Pashayan, Simon Trebst, Augustine Kshetrimayum, Alexander Studt and Alex Nietner. 
We also thank Barbara Terhal, Marios Ioannou, Jarrod McClean and Maria Laura Baez for helpful comments on our draft. 
D.\,H., I.\,R.\ and J.\,E.\ acknowledge financial support 
from the ERC (TAQ), 
the Templeton Foundation, the DFG 
(EI 519/14-1, EI 519/9-1, EI 519/7-1, CRC 183 in project B01), 
and the European Union's Horizon 2020 research and innovation programme under grant agreement No 817482 (PASQuanS). D.\,N.\ 
has received funding from the People Programme (Marie Curie Actions) EU’s 7th Framework Programme under REA grant agreement No. 609427. His research has been further co-funded by the Slovak Academy of Sciences, 
as well as by the Slovak
Research and Development Agency grant QETWORK APVV-14-0878 and VEGA MAXAP 2/0173/17.

% \paragraph*{Note added} Subsequent to the preprint publication of this work several related works appeared  on the arXiv preprint server~\cite{torlai_wavefunction_2019,klassen_hardness_2019,levy_mitigating_2019,kim_alleviating_2019}.

	\paragraph*{Author contributions}
	D.\,H.\ and I.\,R.\ conceived of the non-stoquasticity measure and its relation to the average sign, and carried out all analytical and numerical calculations.
	D.\,H.\ and D.\,N.\ conceived of the complexity proof idea. 
	J.\,E.\ contributed to all aspects of this work.
	All authors participated in discussions and contributed to writing the manuscript.

	\paragraph*{Data Availability.} The python package designed for the numerical simulations is available publicly at~\cite{optimization_numerics}, making the results shown reproducible. 
	All analytical calculations, in particular the explicit proof of Theorem~\ref{thm:complexity of se} are presented in \detailsandproof.

	\paragraph*{Competing interests} The authors declare no competing interests.

	\putbib
	\end{bibunit}

	% \begin{bibunit}
	% \input{methods}
	% \putbib 
	% \end{bibunit}

	\begin{bibunit}
	\onecolumngrid
	\cleardoublepage
	\setcounter{page}{0}
	\setcounter{equation}{0}
	\setcounter{footnote}{0}
	\setcounter{figure}{0}
	\thispagestyle{empty}
	\begin{center}
	\textbf{\large Supplementary Material for ``Easing the Monte Carlo Sign Problem''}\\
	\vspace{2ex}
	Dominik Hangleiter, Ingo Roth, Daniel Nagaj, and Jens Eisert
	\vspace{2ex}
	\end{center}

	\twocolumngrid
	\renewcommand\thesection{S\arabic{section}}
	\renewcommand\thefigure{S\arabic{figure}}
\fi

%!TEX root = qmc_approx_stoq_paper.tex

%%% -------------------------------------------------------------------
%%% ------ Section: Monte Carlo Basics -------------------------
%%% -------------------------------------------------------------------

\section{The sign problem of Quantum Monte Carlo}
\label{sec:sign problem exposition}

We begin the technical part of this work with an exposition of the basics of Quantum Monte Carlo methods.
For the purpose of this work, we focus on the prominent \emph{world-line Monte Carlo} method of calculating partition functions and thermal expectation values of a Hamiltonian $H$ at inverse temperature $\beta$ \cite{landau_guide_2000}.
Here, both quantities are expressed as 
\begin{align}
	\label{eq:partition function monte carlo}
	Z_{\beta,H}  & \simeq \tr [ T_m^m] = \sum_{\vec \lambda \in \Lambda_{m+1}, \, \lambda_{m+1} = \lambda_1}a (\vec \lambda) \\ 
	\label{eq:observable monte carlo}
	\l O \r _{\beta,H}&  \simeq \frac 1 {Z_{\beta,H}}\tr [ T_m^m O ] =  \frac 1 {Z_{\beta,H}} \sum_{\vec \lambda \in \Lambda_{m+1}}a (\vec \lambda) O(\lambda_m|\lambda_1) ,
\end{align}
for large enough $m \in \mb N $ \emph{Monte Carlo steps} in terms of the amplitudes 
\begin{align}
	a(\vec \lambda) = T_m(\lambda_1| \lambda_2)  T_m(\lambda_2| \lambda_3)\cdots  T_m(\lambda_{m} | \lambda_{m+1}), 
\end{align}
on the configuration space $\Lambda_{m+1} = [\dim \mc H]^{\times ( m+1 )}$.
Here, we have defined the transfer matrix $ T_m(\lambda'|\lambda) = \bra{\lambda'} \id - \beta H/m \ket{\lambda}$ and in general denote the entries of a matrix $A$ as $A(\lambda_1|\lambda_2) = \bra{\lambda_1} A\ket{\lambda_2}$.
The computation of the partition function involves a summation over all closed paths of length $m$ (i.e., paths with periodic boundary conditions); the computation of general observables involves a summation over all open paths. 

For non-negative path weights, both quantities may be rewritten as expectation values in a probability distribution $q(\vec \lambda) = a(\vec \lambda)/\sum_{\vec \lambda} a(\vec \lambda)$, which reduces to $q(\vec \lambda) = a(\vec \lambda)/Z_{\beta,H}$ when computing the expectation value of diagonal observables. 
The sign problem is manifested in the fact that the off-diagonal entries of $H$ may be positive potentially implying that $a(\vec \lambda) < 0$. Therefore $q(\vec \lambda)$ is in general a quasi-probability distribution. 

To compute the quantities \eqref{eq:partition function monte carlo} and \eqref{eq:observable monte carlo} via Monte Carlo sampling, one constructs a linear estimator as the expectation value $\langle f \rangle_p = \sum_{\vec \lambda} p(\vec \lambda)  f(\vec \lambda)$ of a random variable $f$ distributed according to a probability distribution $p$.
By Chebyshev's inequality the statistical error $\epsilon$, i.e.\ the deviation from the mean, when averaging $s$ samples of an i.i.d.\ random variable $X$ is upper bounded by its variance 
\begin{align}
	\label{eq: error chebyshev}
	\epsilon \leq \sqrt{\var(X)/(s(1-\delta))}  \, , 
\end{align}
with probability at least $1- \delta$. 
Hence, to achieve any relative error $\tilde \epsilon$, the number of samples needs to grow with the variance of the random variable normalized by its expectation value. 
In fact, it can be easily shown that the variance-optimal estimator for the partition function $Z_{\beta, H}$ is given by the probability distribution $p(\vec \lambda) = |a(\vec \lambda)| /\norm{a}_{\ell_1}$ with $\norm{a}_{\ell_1} = \sum_{\vec \lambda} |a(\vec \lambda) |$ and the estimator $f(\vec \lambda) = \sign(a(\vec \lambda)) \cdot \norm{a}_{\ell_1}$ \cite{pashayan_estimating_2015}. 
The variance of this estimator is given by 
\begin{align}
	 \var_p(f)    & =  \norm{a}_{\ell_1}^2 ( \norm{q}_{\ell_1}^{2} - 1)  
\end{align}
and hence the relative error of the approximation by 
\begin{align}
\label{eq:relative error average sign}
	 \frac{\var_p(f)}{\l f \r_p^2 }& 
	 = \norm{q}_{\ell_1}^2 - 1 \equiv \l \sign \r_p^{-2} - 1 , 
\end{align}
where $\l \sign \r_p = 1/ \norm{q}_{\ell_1}$ is called the \emph{average sign} 
of the quasi-probability distribution $q$. 
One may interpret the average sign as the ratio between the partition functions of the original system with Hamiltonian $H$ acting on $n$ qubits and a corresponding `bosonic system' with Hamiltonian $H' = \stoqabs H$ as $\l \sign \r_p = \tr [\ee^{-\beta H}]/\tr [\ee^{-\beta H'}]$. 
Generically, such a quantity is expected to scale as $\ee^{- \beta n \Delta f}$, that is, inverse exponentially in the particle number $n$, the inverse temperature $\beta$, and the free energy density difference $\Delta f = f' - f \geq 0 $ between `bosonic' and original system~\cite{troyer_computational_2005}. 

In order to minimize the relative approximation error of a QMC algorithm, we therefore need to minimize the inverse average sign, or equivalently $\norm{q}_{\ell_1}$, over the allowed set of basis choices which we denote by $\mc U$. 
To optimally ease the sign problem in terms of its sample (and hence computational) complexity one therefore needs to solve the following minimization problem
\begin{align}
	\label{eq:relative error minimization problem}
	\min_{U \in \mc U } \norm{q}_{\ell_1}^2 - 1 = \min_{U \in \mc U } \frac{\tr[|U T_m U^\dagger|^m]^2}{\tr [T_m^m]^2}  -1, 
\end{align}
where as throughout this work $\abs{\cdot} $ denotes taking the entry-wise absolute value and not the matrix absolute value.

%%% -------------------------------------------------------------------
%%% ------ Section: Av. Sign vs. non-stoquasticity---------------------
%%% -------------------------------------------------------------------

\section{The relation between the average sign and non-stoquasticity}
\label{sec:av sign vs nonstoq}

The difficulty in dealing with the minimization problem \eqref{eq:relative error minimization problem} is manifold. 
First, determining the quantity $\norm{q}_{\ell_1} = \tr[\abs{T_m}^m]/\tr[T_m^m]$ via QMC suffers from the very sign problem it quantifies: 
it can easily be checked that the relative variance of $\l \sign \r_p$ is precisely given by $\l \sign \r_p^{-2} -1$. 
It thus inherits the complexity of computing the partition function $Z_{\beta,H}$ in the first place. 
Na\"ive optimization of the term $\tr[\abs{T_m}^m]/\tr[T_m^m]$ even incurs the cost of diagonalizing the exponential-size matrices $T_m$ and $|T_m|$. 
Second, the optimization problem is non-convex and highly non-linear in the unitary transformation $ T \mapsto U T U ^\dagger$ with $U \in \mc U$.

While it might be possible to minimize the unitarily dependent term $\tr[\abs{T_m}^m]$ and its gradient stochastically via QMC in some cases~\cite{spencer_sign_2012,levy_mitigating_2019}, such approaches cannot yield certificates for the quality of the obtained basis as the average sign itself is not computed. 
Moreover, they are dependent on the distribution defined by $\abs{T_m}$ being well-behaved (i.e., ergodic and satisfying detailed balance) for QMC algorithms.

It therefore seems infeasible to find a converging and efficient algorithm for minimizing the average sign for general Hamiltonians directly.
Ideally, one could find a simple quantity measuring the non-stoquasticity of the Hamiltonian which can be connected to the inverse average sign in a meaningful way while at the same time admitting efficient evaluation. 

\subsection{Case studies}
\label{sec:examples}

We now show that this hope is in vain in its most general formulation. 
Specifically, we provide an example of a Hamiltonian which has large positive entries but is nevertheless sign-problem free (has unit average sign) for specific choices of $\beta$ and $m$, as well as an example of an Hamiltonian that is close to stoquastic but incurs an arbitrarily small average sign for certain choices of $\beta$ and $m$ in a specific QMC procedure. 

Here, as throughout this work, whenever we consider systems of multiple qubits, for $A \in \mb C^{2 \times 2}$ we define 
\begin{align}
	A_i = A \otimes \id_{\{i\}^c} , 
\end{align}
to be the operator that acts as $A$ on qubit $i$ and trivially on its complement $\{ i\}^c$.

\begin{example}[Highly non-stoquastic but sign-problem free Hamiltonians]
	Let us define a Hamiltonian term acting on two qubits with label $i,j$ as 
	\begin{align}
		h_{i,j} = - \frac 12 (X_i X_j - Y_iY_j )  +  X_i . 
	\end{align}
	Then this Hamiltonian term is non-stoquastic with total weight 
	$\nu_1(h_{i,j})  = 1 $. 
	What is more, the $n$-qubit Hamiltonian 
	\begin{align}
		H = \id + \sum_{i< j}^n h_{i,j}
	\end{align}
	is highly non-stoquastic with total weight $\nu_1(H) =n$.
	At the same time, the QMC algorithm for computing the partition function of $H$ with parameters $\beta, m$, has average sign $\l \sign \r_{\beta,m} = 1$. 
\end{example}

\begin{proof}
	We first determine the non-stoquasticity of $H$ as 
	\begin{align}
		\nu_1(H) = \sum_i \nu_1(X_i) = n . 
	\end{align}
	To see why the QMC algorithm has unit average sign, note that the transfer matrix $T_m = \id - \beta H /m$ has negative entries $T_m(\lambda|\lambda') < 0$ only if the parity of $\lambda \oplus \lambda'$ is odd since for these terms only a single $X$ term contributes. 
	Whenever $\lambda \oplus \lambda' = 0 $, i.e., has even parity, we have $T_m(\lambda | \lambda')\geq 0$ since only $XX -YY$ terms or the diagonal contribute -- both of which have non-negative matrix elements. 
	
	In the calculation of the partition function, the summation runs over closed paths only. 
	But for any closed path $\lambda_1 \rightarrow \lambda_2 \rightarrow \cdots \rightarrow \lambda_m \rightarrow \lambda_1$, it is necessary that the total parity $(\lambda_1 \oplus \lambda_2) \oplus \ldots \oplus (\lambda_m \oplus \lambda_1) $ vanishes. 
	In particular, this implies that every allowed path incurs an even number of odd-parity steps and therefore an even number of negative signs. 
	Therefore, only non-negative paths contribute to the path integral and the average sign is attained at unity. 
\end{proof}

\begin{example}[Barely non-stoquastic Hamiltonians with arbitrarily small average sign]
Let us define the $2$-qubit Hamiltonian 
\begin{align}
	H_{a,b} = & \frac m \beta \bigg( \id \otimes \id - \id \otimes X 
	- \frac 12( X \otimes X + Y\otimes Y) \\ 
	& + \frac 12 [ (a + b) X \otimes Z + (b - 
	a) X \otimes \id] \bigg),
\end{align}
with $b\geq a >0$ positive numbers and $m \in 2\mathbb{N} + 1$ .
The non-stoquasticity of $H_{a,b}$ is given by $\nu_1(H_{a,b}) = b m/(2 \beta)$, the average sign of QMC with parameters $\beta$ and $m$ is dominated by $|\l \sign \r_{\beta,m}|  \leq C (b - a)/{a}$, where $C$ is a constant.
Thus, even for arbitrarily small non-stoquasticity we can make the sign problem unboundedly severe as we tune $a$ to be close to $b$.
\end{example}

\begin{proof}
We derive the bound on the average sign.
For the given Hamiltonian, the corresponding transfer matrix for a QMC algorithm for inverse temperature $\beta$ with $m$ steps is given by
\begin{equation}
	T_m \equiv T_{a,b} = \begin{pmatrix}
		0 & 1 & -b & 0 \\
		1 & 0 & 1 & a \\
		-b & 1 & 0 & 1 \\
		0 & a & 1 & 0 \\
	\end{pmatrix}.
\end{equation}

Recall that the average sign can be written as 
\begin{equation}
	\l\sign\r_{\beta,m} = \frac{\tr[T_m^m]}{\tr[{|T_m|}^m]}. 
\end{equation}
We denote by $\overline T_m$ a matrix similar to $T_m$ but where the positions of $a$ and $-b$ are exchanged. Due to the symmetry of the problem we have that $\tr[T_m^m] = \tr[\overline T_m^m]$ and $\tr[|T_m|^m] = \tr[{|\overline T_m|}^m]$. 
Hence, 
\begin{align}
	\tr[T_m^m] &= \frac12\left[ \tr[T_m^m] + \tr[\overline{T}_m^m]\right] \\
		&= \frac12 \sum_{\vec\lambda \in \Lambda_m} \Bigg[ T_m(\lambda_1\mid\lambda_2) \cdots T_m(\lambda_m \mid \lambda_1)\\
		&\quad\quad\quad + \overline{T}_m(\lambda_1\mid\lambda_2) \cdots \overline{T}_m(\lambda_m \mid \lambda_1) \Bigg] \\
		&= \frac12  \sum_{\vec \lambda} \left[ a^{f(\vec\lambda)}(-b)^{g(\vec\lambda)} + a^{g(\lambda)}(-b)^{f(\vec\lambda)} \right],
\end{align}
where in the last line we have used the fact that every summand is a polynomial in the entries of $T_{a,b}$. 
The functions $g,f: \Lambda_m \to [m]$ describe the corresponding exponents. 
A little thought reveals that since all path are closed and $m$ is odd $g(\vec \lambda) + f(\vec \lambda)$ is larger than $1$ and also odd for all $\vec \lambda$. 
We thus find that one of the two terms for each $\vec\lambda$ must be negative and 
\begin{align}
	|\tr T_m^m| &\leq \frac12 \sum_{\vec\lambda} |a^{f(\vec\lambda)} b^{g(\vec\lambda)} - b^{f(\lambda)} a^{g(\vec\lambda)} | \\
	& \leq \frac12 \sum_{\vec\lambda}(2^{g(\vec\lambda)} - 1) a^{f(\vec\lambda)+g(\vec\lambda) - 1} | b - a |. 
\end{align}
Furthermore, we have 
\begin{align}
	\left|\tr |T_m|^m \right| &= \frac12 \sum_{\vec\lambda} (a^{f(\vec\lambda)} b^{g(\vec\lambda)} + b^{f(\lambda)} a^{g(\vec\lambda)} ) \\
	&\geq \sum_{\vec\lambda} \left(a^{f(\vec\lambda)+g(\vec\lambda) }\right).
\end{align}
Combining these two bounds and using $g(\vec \lambda) \leq m$, we conclude that 
\begin{equation}
	|\langle\sign \rangle | \leq \left(2^{m-1} - \frac12\right)\frac{|b-a|}a.
\end{equation}
\end{proof}

The second example shows that in principle also Hamiltonians with arbitrarily small positive entries can cause a severe increase of the sampling complexity of specific Monte Carlo algorithms. Interestingly, in this situation the sign problem cannot be eased by a basis change: the average sign vanishes since the \emph{unitarily invariant} term $|\tr T^m_m|$ is tuned to be small. 
On the contrary, the sign problem can be completely avoided in this example by choosing the Monte-Carlo path length to be even instead of odd. 

These simple examples illustrate the following important observation: The sign problem as measured by the average sign can in certain situations be avoided or intensified by fine-tuning the problem and parameters of the QMC procedure independently of the actual magnitude of the positive entries of the Hamiltonian. 
But such examples seem to rely on an intricate conspiracy of the structure of the Hamiltonian and the chooen QMC procedure, e.g., the discretization. It is plausible to assume that the most pathological cases are unlikely to appear in practical applications, and can at least be rather easily overcome by slightly modifying the QMC algorithm.

\subsection{Measures of non-stoquasticity}
\label{sec:motivating non-stoquasticity}

In this work, our goal is to develop a more general methodology for the task of easing the sign problem that is independent of the details of the QMC algorithm and the combinatorial properties of potential paths that can be constructed from the entries of the transfer matrix.
Very much in the spirit of the notion of stoquasticity we aim at a property of the Hamiltonian in a given basis to measure its deviation from having a good sampling complexity.
Natural candidates for such a non-stoquasticity measure of a Hamiltonian are entry-wise norms of its positive entries. 
For any $p \geq 1$ we define the non-stoquasticitiy of $H$ as 
\begin{align}
	\nu_p(H)  = D^{-1} \norm{\nonstoq H}_{\ell_p} , 
\end{align}
where $\norm{\cdot}_{\ell_p}$ denotes the vector-$\ell_p$ norm. 
For every $p$, $\nu_p$ is efficiently computable for local Hamiltonians on bounded-degree graphs and therefore suitable for easing the sign problem of a large class of Hamiltonians by local basis choices.  
It is also obviously a measure of the non-stoquasticity in the sense that $\nu_p(H) = 0$ if and only if $H$ is stoquastic. 
We note that we have chosen our definition such that the non-stoquasticity measure $\nu_p$ scales extensively in the number of non-stoquastic terms of a local Hamiltonian.
This is because every non-stoquastic local Hamiltonian term creates on the order of $2^n$ positive matrix entries in a global $n$-qubit Hamiltonian matrix due to tensoring with identities on the complement of its support. 

Our examples in the previous section show that it is notoriously difficult if not impossible to connect any notion of non-stoquasticity to the actual sample complexity incurred by a QMC procedure as measured by the inverse average sign. 
This is due to the dependence of the average sign on the combinatorics of Monte Carlo paths. 
However, those examples involved a large degree of fine-tuning. 
Therefore, one might hope to find a connection between non-stoquasticity and average sign for \emph{generic} cases.

So let us look at the connection between optimizing a non-stoquasticity measure $\nu_p$ and optimizing the QMC sampling complexity as in \eqref{eq:relative error minimization problem}. 
Our measure can be expressed in terms of the transfer matrix $T_m$ as 
\begin{equation}
\label{eq:nu_p}
	\nu_p(H) = \frac 1 D \frac  m {2\beta}  \norm{\abs{T_m} - T_m}_{\ell_p},
\end{equation}
where we assume that $\diag (\beta H/m) \leq \id$. 

Due to the unitary invariance of the trace, the optimization of the sampling complexity via \eqref{eq:relative error minimization problem} is equivalent to minimising 
\begin{equation}
	S(U) = \tr[|U T_m U^\dagger|^m] - \tr[T_m^m]. 
\end{equation}
Let us for the sake of clarity, suppress the explicit dependence on the unitary $U$ and define $\hat T_m = U T_m U^\dagger$.
If we define the positive and negative entries of the transfer matrix respectively as $\Delta_\pm = \frac12 \left(|\hat T_m| \pm \hat T_m \right)$, we can write 
\begin{align}
	S(U) &= \tr[|\hat T_m|^m] - \tr[\hat T_m^m] \\
	&= 2 \sum_{\substack{\vec s \in \{\pm\}^m: \\ \text{$\vec s$ odd}}} \tr[ \Delta_{s_1}\cdots \Delta_{s_m} ].
	\label{eq:s(u)}
\end{align}
The summation in the last line is restricted to all $\vec s \in \{\pm\}^m$ with an odd number of negative signs.
The resulting expression thus involves a summation over closed paths that contain an odd number of negative contributions such that $\Delta_{s_1}(\lambda_1|\lambda_2)\cdots \Delta_{s_m}(\lambda_m|\lambda_1) <  0 $. 
In particular, every such path contains at least one step with a negative contribution. 

The size of $S(U)$ thus depends both on the number of `negative paths' and their individual weight. 
% Suppose the negative entries of $\hat T_m$ are small compared to the positive entries and small compared to $1/m$. 
From Eq.~\eqref{eq:s(u)} we find that the contribution to $S(U)$ of paths with exactly one negative step has the form
\begin{align}
    2 m \sum_{\lambda_1, \lambda_2} \Delta_{-}(\lambda_1 | \lambda_2) \Delta_{+}^{m-1} (\lambda_2 | \lambda_1) , 
	\label{eq:one-step path integral}
\end{align}
using the cyclicity of the trace. This expression \eqref{eq:one-step path integral} is a weighted sum over the negative entries of $\hat T_m$, where the weights are given by the contribution $\Delta_{+}^{m-1} (\lambda_2 | \lambda_1)$ of all positive paths of length $m-1$. 

For a transfer matrix in which the positive entries do not significantly differ and their distribution relative to the negative entries is unstructured, we have constant $\Delta_{+}^{m-1} (\lambda_2 | \lambda_1) \approx  \norm{\Delta_+^{m-1}}_{\ell_\infty}$. 
Therefore, the linear term \eqref{eq:one-step path integral} scales approximately as
\begin{align}
	2m \norm{\Delta_-}_{\ell_1}\norm{\Delta_+^{m-1}}_{\ell_\infty} \propto D \, \nu_1(H).  
\end{align}

For higher-order negative contributions, we expect that $S(U)$ or, correspondingly, the average sign scales as $\exp(c \cdot D \, \nu_1(H))$ for some $c > 0$. 
Our expectation is based on the following observation: 
in the calculation of the inverse average sign, all paths of length $m$ with an odd number of negative steps contribute. 
Potentially, in each step every negative entry of $T_m$ appears. Then the sum of all negative entries of $T_m$ contributes. 
But the number of paths with $k \in 2\mb N_0 +1$ negative steps scales as $\binom{m}{k}$ which leads to an exponential growth in $\norm{\nonstoq H}_{\ell_1} $ and hence $D \, \nu_1(H)$. 
In the following section, we provide a brief numerical analysis confirming this expectation.

We further observe that if the positive entries of $\hat T_m$ are more structured, the weights appearing in Eq.~\eqref{eq:one-step path integral} might deviate from a uniform distribution. 
In such a case, other $\nu_p$-measures become meaningful as a measure of the inverse average sign since they saturate a corresponding H{\"o}lder bound. 

\subsection{Numerical analysis}
\label{sec:av sign numerics}

\begin{figure}[t]
	\includegraphics[width = \linewidth]{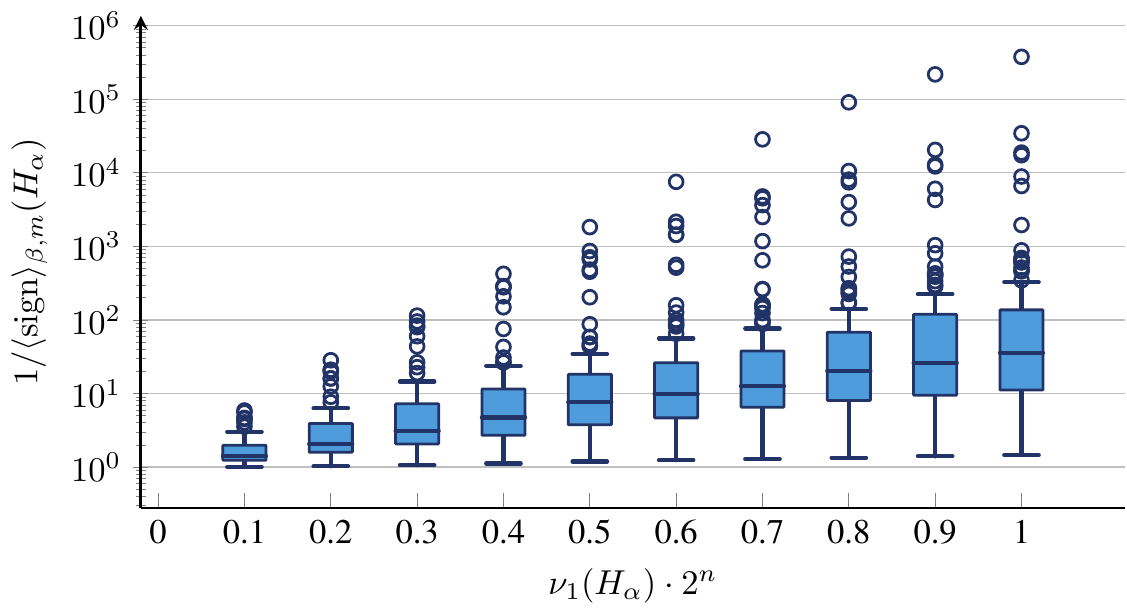}
	\caption{The figure shows the inverse average sign for $100$ randomly chosen instances of $5$-qubit Hamiltonians $H_\alpha$ for $\beta = 1$ and $ m = 100$ Monte Carlo steps as a function of $d \nu_1(H_\alpha) $. 
	We find a roughly exponential dependence of the inverse average sign with $\nu_1(H_\alpha)$ as $1/\l \sign \r_{\beta,m}(H_\alpha) \propto \exp(a \cdot d \nu_1(H_{\alpha}))$ for $a > 0 $. 
	\label{fig:avsign numerics}
	 }
\end{figure}

In this subsection, we provide evidence that $\nu_1(H)$ is indeed a very much meaningful measure of the sample complexity and hence the inverse average sign by exactly calculating the inverse average sign as a function of $\nu_1(H)$.
We do so by randomly drawing $2$-local Hamiltonians on a line of $n$ qubits of the form 
\begin{align}
	\label{eq:ti h}
	H = \sum_{i = 1}^{n}T_i(h) , 
\end{align}
where $h \in \mb R^{4 \times 4}$ is a nearest-neighbour interaction term and the translation operator $T_i $ acts as $T_i(h) = \id_d^{\otimes (i-1)} \otimes h \otimes  \id_d^{\otimes n - i - 1 }$. 
We choose each local term $h$ in an i.i.d.\  fashion
from the zero-centered Gaussian measure and projecting to Hermitian matrices. 
For each random instance $H$, 
we then consider the one-parameter Hamiltonian family 
\begin{align}
	H_\alpha = \frac{  H - \nonstoq H  + \alpha \nonstoq H }{2^n \nu_1( \nonstoq H)}. 
\end{align}
Note that $\nu_1(H_\alpha) = \alpha/2^n$. 
Fig.~\ref{fig:avsign numerics} shows that, generically, the average sign monotonously depends on the non-stoquasticity. 
Indeed, as expected for large $m$, the dependence is an exponential one.

%%% -------------------------------------------------------------------
%%% ------ Subsection: Computing non-stoquasticity-------------------------
\subsection{Computing the non-stoquasticity}
\label{sec:computing non-stoq}

Above, we have argued that a key problem of the average sign lies in the fact that it is not efficiently computable whenever there is a sign problem. 
But how does the non-stoquasticity measure $\nu_1$ fare in this respect? 
We now show that for arbitrary $2$-local Hamiltonians the non-stoquasticity measure $\nu_1$ can in fact be approximated up to an inverse polynomially small additive error in polynomial time. 
While this is not possible for arbitrary local Hamiltonians as a simple example shows, any $\nu_p$-measure can be efficiently computed exactly in polynomial time for local Hamiltonians acting on bounded-degree graphs. 

% A central ingredient in proving Theorem~\ref{thm:complexity of se} is an expression for the non-stoquasticity measure $\nu_1$ of 

We write a real $2$-local Hamiltonian with $1$-local terms as
\begin{align}
% \begin{split}
\label{eq:two-local h}
	H_{2+1} = & \sum_{i < j }  
	\bigl(a_{i,j} X_i X_j +  b_{i,j} Y_i Y_j + c_{i,j} Z_i Z_j \\&  + x_{i,j} X_i Z_j + x_{j,i} Z_i X_j\bigr) + \sum_i \bigl( \alpha_i X_i + \gamma_i Z_i
	\bigr)\, ,
	\nonumber 
% \end{split}
\end{align} 
parametrized by real coefficient vectors $a,b,c \in \mb R^{n(n-1)/2}, \, x \in \mb R^{n(n-1)}$ which are non-zero only on the edges $(i,j) \in E$ of the interaction Hamiltonian graph $G = (V,E)$ as well as vectors $\alpha, \gamma \in \mb R^n$. 
The Hamiltonian interaction graph is defined by a set $V$ of sites or vertices and the edge set 
\begin{multline}		
	 E= \big\{ (i,j) \in V\times V:\\ \neg ( a_{i,j}=  b_{i,j} =  c_{i,j} =  x_{i,j} =  x_{j,i} = 0 )\big \}. 
\end{multline}
We call $\mc N(i) = \{ j : (i,j) \in E\} $ the neighbourhood of site $i$ on the graph $G$, $\deg(i) = |\mc N(i)|$ the degree of site $i$ and $\deg(G) = \max_{i \in V} \deg(i)$ the degree of the overall graph. 
Notice that obtaining an expression for the non-stoquasticity is non-trivial since several local Hamiltonian terms may contribute to a particular entry of the global Hamiltonian matrix. 

\begin{lemma}[Non-stoquasticity of $(2+1)$-local Hamiltonians]
\label{lem:non-stoquasticity 2-local H}
	The non-stoquasticity measure $\nu_1$ of real $2$-local Hamiltonians with $1$-local terms of the form $H_{2+1}$ satisfies 
	\begin{align}
	\begin{split}
	\nu_1 (H_{2+1}) = & \sum_{i<j} \nu_1(a_{i,j} X_i X_j + b_{i,j} Y_i Y_j)\\& + \sum_i \nu_1\bigg(\alpha_i X_i + \sum_{j\in \mc N_{XZ}(i)} x_{i,j} X_i Z_j \bigg) , 
	\end{split}
	\end{align}
	and it holds that 
\begin{align}
	& \nu_1(a_{i,j} X_iX_j + b_{i,j}Y_i Y_j) =	 \label{eq:non-stoquasticity xx+yy} \\
	& \qquad  \frac 12 \sum_{i < j } \bigl( \max\{a_{i,j}+ b_{i,j},0\} + \max\{a_{i,j}- b_{i,j},0\} \bigr), 
\nonumber \\
 \begin{split}
 	\label{eq:non-stoquasticity xz+x}
	& \nu_1\bigg( \alpha_i X_i + \sum_{j \in \mc N_{XZ}(i)} x_{i,j} X_i Z_j \bigg) =  2^{- \deg_{XZ}(i)} \\ 
	&\qquad\times  \sum_{\lambda_{\mc N_{XZ}(i)}} \max  \bigg\{\alpha_i + \sum_{j \in \mc N_{XZ}(i)}(-1)^{\lambda_j} x_{i,j} ,0\bigg\} . 
	 \end{split}
\end{align}	
\end{lemma}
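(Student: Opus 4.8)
The plan is to work directly in the computational basis and to exploit that, writing $H$ for $H_{2+1}$, the quantity of interest is simply a sum of the positive off-diagonal matrix entries,
$\norm{\nonstoq H}_{\ell_1} = \sum_{\lambda \neq \lambda'} \max\{\bra{\lambda} H \ket{\lambda'},\,0\}$,
which I would organise according to the set $S = \{k : \lambda_k \neq \lambda'_k\}$ of qubits flipped between $\lambda$ and $\lambda'$, henceforth the \emph{flip pattern}. The central observation is that each species of term in $H$ populates entries of exactly one flip pattern: the diagonal terms $c_{i,j} Z_i Z_j$ and $\gamma_i Z_i$ have $S = \emptyset$ and contribute nothing to $\nonstoq H$; the terms $a_{i,j} X_i X_j$ and $b_{i,j} Y_i Y_j$ flip exactly $S = \{i,j\}$; and the terms $\alpha_i X_i$ together with $x_{i,j} X_i Z_j$ flip exactly $S = \{i\}$, whereas $x_{j,i} Z_i X_j$ flips $\{j\}$ and thus belongs to the group of site $j$.

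First I would establish the additive decomposition. Since distinct pairs give distinct two-element patterns, distinct sites give distinct one-element patterns, and no one-element pattern coincides with a two-element one, the off-diagonal index set partitions cleanly: an entry with $S = \{i,j\}$ receives contributions only from $a_{i,j} X_i X_j + b_{i,j} Y_i Y_j$, and an entry with $S = \{i\}$ only from $\alpha_i X_i + \sum_{j \in \mc N_{XZ}(i)} x_{i,j} X_i Z_j$. As $\max\{\cdot,0\}$ summed over a partition of the index set is additive, this yields the first displayed identity of the lemma. This is exactly the point addressing the cancellation caveat noted earlier: cancellations occur only \emph{within} a single flip-pattern block and are absorbed by evaluating that block exactly, while \emph{between} blocks no interference is possible because the supports of the corresponding matrix entries are disjoint.

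Next I would evaluate the two blocks. For the pair block I write the $4\times 4$ matrix of $a_{i,j} X \otimes X + b_{i,j} Y \otimes Y$ on sites $i,j$ in the basis $\ket{00},\ket{01},\ket{10},\ket{11}$; its only nonzero off-diagonal entries are $a_{i,j} + b_{i,j}$ linking $\ket{01}\leftrightarrow\ket{10}$ and $a_{i,j} - b_{i,j}$ linking $\ket{00}\leftrightarrow\ket{11}$, each in two symmetric positions. Tensoring with $\id$ on the remaining $n-2$ qubits replicates each entry $2^{n-2}$ times, so summing the positive parts and dividing by $D = 2^n$ produces $\tfrac12(\max\{a_{i,j}+b_{i,j},0\} + \max\{a_{i,j}-b_{i,j},0\})$, i.e.\ \eqref{eq:non-stoquasticity xx+yy}. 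For the single-site block I use $\alpha_i X_i + \sum_{j\in\mc N_{XZ}(i)} x_{i,j} X_i Z_j = X_i \otimes \bigl(\alpha_i \id + \sum_j x_{i,j} Z_j\bigr)$, so the entry flipping bit $i$ equals $\alpha_i + \sum_{j\in\mc N_{XZ}(i)} (-1)^{\lambda_j} x_{i,j}$ and depends only on the restriction of $\lambda$ to $\mc N_{XZ}(i)$. Since $2^{n-\deg_{XZ}(i)}$ full configurations are consistent with each assignment on $\mc N_{XZ}(i)$, summing positive parts and dividing by $2^n$ gives \eqref{eq:non-stoquasticity xz+x}.

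The two block evaluations are routine once the matrices are written down; the conceptual crux, and the step I expect to demand the most care, is the disjointness argument underpinning additivity. One must check that no joint contribution of, say, an $X_i X_j$ interaction and an $X_i Z_k$ term — or of $X_i Z_j$ and $X_i Z_k$ for $k \neq j$ — can ever land on a common matrix entry, which is precisely what the flip-pattern bookkeeping guarantees. It is this observation that upgrades the naive ``sum of local non-stoquasticities'' into an exact identity in spite of possible intra-block cancellation.
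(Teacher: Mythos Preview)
Your proposal is correct and follows essentially the same approach as the paper: both arguments compute the off-diagonal matrix elements of $H_{2+1}$ by grouping them according to which qubits are flipped, observe that the $X_iX_j+Y_iY_j$ block and the $X_i+\sum_j X_iZ_j$ block populate disjoint sets of entries, and then evaluate each block explicitly before summing positive parts and dividing by $2^n$. The paper's proof is terser---it writes down the $\ell_1$-norm of the off-diagonal part directly as a sum over configurations and then remarks that replacing $|\cdot|$ by $\max\{\cdot,0\}$ yields $\nu_1$---whereas you spell out the ``flip-pattern'' bookkeeping and the disjointness more carefully, but the content is the same.
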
		

Here, we have defined the $XZ$ neighbourhood $\mc N_{XZ}(i) = \{ j: x_{ij} \neq 0 \} $ of site $i$ as all vertices $j$ connected to $i$ by an $X_i Z_j$-edge. 
As useful shorthands, we also define the $XZ$ degree $\deg_{XZ}(i) = |\mc N_{XZ}(i)|$ and the restriction of a spin configuration $\lambda \in \{0,1\}^n$ to an $XZ$ neighbourhood as  $\lambda_{\mc N_{XZ}(i)} = (\lambda_j)_{j \in \mc N_{XZ}(i)} \in \{0,1\}^{\deg_{XZ}(i)}$. 
We conceive of summation over an empty set (the case that $\mc N_{XZ}(i) = \{\}$) as resulting in $0$ so that the corresponding term in Eq.~\eqref{eq:non-stoquasticity xz+x} vanishes. 

Notice that the non-stoquasticity of an $XZ$ term does not depend on the sign of its weight, while for $XX$ and $X$ terms it crucially does. 
 
\begin{proof}
We can determine the $\ell_1$-norm of the off-diagonal part of the Hamiltonian $H_{2+1}$ as
\begin{equation}
\begin{split}
	\norm{H&_{2+1}  -  \diag(H_{2+1})}_{\ell_1}  \\
	= & \sum_{\lambda \in \{0,1\}^n} \bigg\{ \sum_{i < j } |a_{i,j} -  (-1)^{\lambda_{i} + \lambda_j}b_{i,j}| \\
	& + \sum_i \bigg |\alpha_i + \sum_{j \in \mc N_{XZ}(i)} (-1)^{\lambda_j} x_{i,j} \bigg|  \bigg\}\\
	=  & 2^{n-1}  \sum_{i < j } \bigl( |a_{i,j} + b_{i,j}| + |a_{i,j} - b_{i,j}| \bigr) \\
	 & \hspace{-1ex} +   \sum_i  2^{n- \deg_{XZ}(i)}  \sum_{\lambda_{\mc N_{XZ}(i)} } \bigg |\alpha_i + \sum_{j \in \mc N_{XZ}(i)} (-1)^{\lambda_j} x_{i,j} \bigg| . \label{eq:ell_1 norm H2}
	\end{split}
\end{equation}
From Eq.~\eqref{eq:ell_1 norm H2} we can then directly calculate the non-stoquasticity $\nu_1$ of $H_2$ 
% \begin{align}
% \begin{split}
% 	\nu_1 & (H_2) =  \sum_{i < j } \max\{a_{i,j},0\} + \sum_i 2^{- \deg_{XZ}(i)}\\
% 	& \times  \sum_{\lambda_{\mc N_{XZ}(i)}} \max  \bigg\{\sum_{j \in \mc N_{XZ}(i)}(-1)^{\lambda_j} x_{i,j} ,0\bigg\} ,
% \end{split}
% \end{align}
by discarding all matrix entries with negative sign before taking the $\ell_1$-norm and dividing by $2^n$. 
\end{proof}

Now, clearly we can efficiently compute the term~\eqref{eq:non-stoquasticity xx+yy} for arbitrary graphs as the sum runs over at most $n(n-1)/2$ many terms. 
In the term~\eqref{eq:non-stoquasticity xz+x}, in contrast, the sum over spin configurations $\lambda_{\mc N_{XZ}(i)}$ in the $XZ$ neighbourhood of site $i$ runs over $2^{\deg_{XZ}(i)}$ many terms and hence this term is efficiently computable exactly in case the vertex degree $\deg_{XZ}(i)$ of any vertex $i$ grows at most logarithmically with $n$. 
This shows that for bounded-degree graphs such as regular lattices the non-stoquasticity can be computed efficiently. 

But what if the degree of the graph grows faster than logarithmically with $n$ so that the sum runs over super-polynomially many non-trivial terms?
The following Lemma shows that even in this case, that is, for $2$-local Hamiltonians acting on arbitrary graphs, the non-stoquasticity can be efficiently approximated up to any inverse polynomially small additive error using Monte Carlo sampling. 

\begin{theorem}
\label{thm:approximating non-stoq arb graph}
	The sum~\eqref{eq:non-stoquasticity xz+x} can be efficiently approximated up to additive error $\epsilon$ via Monte Carlo sampling with failure probability $\delta $ from $16 \deg_{XZ}(i) (\max_{j} |x_{i,j}|)^2 \log(2/\delta) /\epsilon^2 $ many iid.\ samples. 
\end{theorem}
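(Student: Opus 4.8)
The plan is to recognize the right-hand side of \eqref{eq:non-stoquasticity xz+x} as a plain expectation value and then estimate it by empirical averaging, controlling the error with a sharp concentration inequality. Writing $g(\lambda) = \alpha_i + \sum_{j \in \mc N_{XZ}(i)} (-1)^{\lambda_j} x_{i,j}$ and $f(\lambda) = \max\{g(\lambda), 0\}$, the prefactor $2^{-\deg_{XZ}(i)}$ in \eqref{eq:non-stoquasticity xz+x} is exactly the uniform normalization over the configurations $\lambda_{\mc N_{XZ}(i)} \in \{0,1\}^{\deg_{XZ}(i)}$, so that the target equals $\mathbb{E}_\lambda[f(\lambda)]$ with $\lambda$ drawn uniformly. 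Sampling such a $\lambda$ costs $\deg_{XZ}(i)$ fair coin flips and evaluating $f$ costs $O(\deg_{XZ}(i))$ arithmetic operations; hence the empirical mean $\hat\mu = s^{-1}\sum_{k=1}^s f(\lambda^{(k)})$ of $s$ i.i.d.\ samples is an efficiently computable unbiased estimator of \eqref{eq:non-stoquasticity xz+x}.

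First I would establish the two quantitative facts that drive the sample count. Viewing $f$ as a function of the independent signs $\sigma_j = (-1)^{\lambda_j} \in \{\pm 1\}$, flipping a single $\sigma_j$ changes $g$ by exactly $2|x_{i,j}|$; since $t \mapsto \max\{t,0\}$ is $1$-Lipschitz, it changes $f$ by at most $c_j \coloneqq 2|x_{i,j}|$. This bounded-differences property immediately yields $\var(f) \le \tfrac14\sum_j c_j^2 = \sum_{j} x_{i,j}^2 \le \deg_{XZ}(i)\,(\max_j|x_{i,j}|)^2$ and, more importantly, sub-Gaussian tails for the empirical average.

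Next I would apply McDiarmid's bounded-differences inequality to $\hat\mu$ regarded as a function of the $s\cdot\deg_{XZ}(i)$ independent signs of the whole sample. Flipping one sign in sample $k$ perturbs $\hat\mu$ by at most $c_j/s$, so the relevant sum of squared differences is $s^{-2}\sum_{k}\sum_j c_j^2 = s^{-1}\cdot 4\sum_j x_{i,j}^2$, and McDiarmid then gives
\begin{align}
\Pr\bigl[\,|\hat\mu - \mathbb{E}f| \ge \epsilon\,\bigr]
&\le 2\exp\!\left(-\frac{s\,\epsilon^2}{2\sum_j x_{i,j}^2}\right) \\
&\le 2\exp\!\left(-\frac{s\,\epsilon^2}{2\,\deg_{XZ}(i)\,(\max_j|x_{i,j}|)^2}\right).
\end{align}
Requiring the right-hand side to be at most $\delta$ and solving for $s$ reproduces the advertised sample count (up to the explicit constant), crucially with a $\log(2/\delta)$ rather than a $1/\delta$ dependence.

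The point to get right — and the step I expect to demand the most care — is precisely this balance between range and variance. A naive Chebyshev bound using $\var(f)$ does give the linear-in-$\deg_{XZ}(i)$ scaling but only with a $1/\delta$ failure dependence, whereas a plain Hoeffding bound using the full range $\sim \sum_j|x_{i,j}|$ of $f$ restores exponential concentration at the cost of a quadratic $\deg_{XZ}(i)^2$ factor. It is the bounded-differences route — exploiting that $f$ is a Lipschitz function of \emph{independent} signs rather than merely a bounded random variable — that simultaneously secures the linear dependence on $\deg_{XZ}(i)$ and the logarithmic dependence on $1/\delta$, and hence the stated polynomial sample complexity.
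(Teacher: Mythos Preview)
Your proof is correct and in fact slightly tighter than the paper's, but the route is genuinely different. The paper rewrites the sum as a Rademacher expectation just as you do, then argues that $f$ is $\ell_2$-Lipschitz with constant $(\max_j|x_{i,j}|)\sqrt{k_i}$ and \emph{separately convex}, and invokes a Talagrand-type concentration inequality for separately convex Lipschitz functions of independent bounded variables (Wainwright, Theorem~3.4); this yields the constant $16$ in the exponent. You instead extract the coordinate-wise bounded-differences constants $c_j = 2|x_{i,j}|$ and apply McDiarmid directly to the empirical average, which is more elementary (no convexity needed) and even gives a better constant ($2$ in place of $16$). Your closing discussion of why neither Chebyshev nor plain Hoeffding on the full range suffices is apt and clarifies exactly which structural feature---independence of the signs together with coordinate-wise Lipschitzness---is doing the work; the paper's convex-Lipschitz argument exploits the same structure through a heavier tool.
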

\begin{proof}
	For the proof we will use concentration of measure for Lipschitz functions.
	To this end we begin by noticing that the sum~\eqref{eq:non-stoquasticity xz+x} can be rewritten as a uniform expectation value over $k_i = \deg_{XZ}(i)$ many Rademacher ($\pm 1$) random variables as 
\begin{align}
\begin{split}
	\sum_{\lambda_{\mc N_{XZ}(i)}} & \max  \bigg\{\alpha_i + \sum_{j \in \mc N_{XZ}(i)}(-1)^{\lambda_j} x_{i,j} ,0\bigg\} \\
	&  = \mb E_{\sigma \in \{\pm 1\}^{k_i}} [f^{(i)}_{\alpha, x}(\sigma)] , 
	\label{eq:rademacher expectation}
\end{split}
\end{align}
where we have defined 
\begin{align}
\begin{split}
	f^{(i)}_{\alpha, x}:\, & \mb R^{k_i} \rightarrow \mb R \\
	& s \mapsto \max \bigg \{\alpha_i + \sum_{j \in \mc N_{XZ}(i)} s_j x_{i,j} ,0 \bigg \}. 
\end{split}
\end{align}
It can easily be seen that the function $f^{(i)}_{\alpha,x}$ is Lipschitz with constant $\big( \max_j | x_{i,j} |\big) k_i^{ 1/2}$:  
\begin{align}
	\big | f^{(i)}_{\alpha,x}(s) & - f^{(i)}_{\alpha,x}(s') \big| = \bigg| \sum_{j = 1}^{\deg_{XZ}(i)} x_{i,j}\bigl( s_i - s_i' \bigr) \bigg |\\
	& \leq \big( \max_j | x_{i,j} |\big) \norm{s - s'}_{\ell_1}\\
	& \leq \big( \max_j | x_{i,j} |\big) k_i^{1/2} \norm{s - s'}_{\ell_2} .
\end{align}
Here, we have used the fact that the $\ell_p$ norms on $\mb R^n$ satisfy 
Moreover, $f^{(i)}_{\alpha,x}$ is clearly \emph{separately convex}, that is, for each $k = 1, 2, \ldots, k_i$ the function $s_j \mapsto f^{(i)}_{\alpha,x}(s_1, s_2, \ldots, s_{j-1}, s_j, s_{j+1}, s_{j+2},\ldots, s_n)$ is convex for each fixed $(s_1, s_2, \ldots, s_{j-1}, s_{j+1}, s_{j+2},\ldots, s_n) \in \mb R^{k_i - 1}$. 

We can then apply \cite[Theorem 3.4]{wainwright_high-dimensional_2019} to obtain that the estimator 
\begin{align}
\hat f_{\alpha, x}^{(i)} = \frac  1m \sum_{l =1}^m f^{(i)}_{\alpha, x}(\sigma^{(l)}), 
\end{align}
for the $m$ Rademacher vectors $\sigma^{(l)} \in \{ \pm 1\}^{k_i}$ drawn iid.\ uniformly at random satisfies
\begin{align}
	\mb P_\sigma \left [ \bigl | \hat f_{\alpha, x}^{(i)} - \mb E_\sigma[ f_{\alpha, x}^{(i)}\sigma ]\bigr | \geq \epsilon \right] \leq 2 \ee^{- \frac{ m\epsilon^2}{16 k_i (\max_j |x_{i,j}|)^2 }}. 
\end{align}
This implies that with probability $1 - \delta $ the estimator satisfies 
\begin{align}
	 \bigl | \hat f_{\alpha, x}^{(i)} - \mb E_\sigma[ f_{\alpha, x}^{(i)}\sigma ]\bigr | \leq \epsilon 
\end{align}
whenever the number $m$ of independently drawn Rademacher vectors satisfies
\begin{align}
	m \geq  \frac{16 \,k_i (\max_j |x_{i,j}|)^2 }{\epsilon^2} \log(2/\delta). 
\end{align}
\end{proof}

In total we thus obtain a polynomial worst-case complexity of computing the non-stoquasticity of a $(2+1)$-local Hamiltonian of the form~\eqref{eq:two-local h} up to additive error $\epsilon$ with failure probability $\delta$ as given by
\begin{align}
	\frac{n (n-1)}{2} + \frac{16 \sum_i\deg_{XZ}(i)(\max_{i,j}x_{i,j})^2}{(\epsilon/n)^2 } \log \left(\frac 2 \delta \right). 
	 \label{eq:complexity non-stoq 2-local}
\end{align}

%%% -------------------------------------------------------------------
%%% ------ Section: Computational complexity  -------------------------
%%% -------------------------------------------------------------------

\section{Easing the sign problem: An algorithmic approach}
\label{sec:practical easing}

To demonstrate the feasibility of \se\, and to put our findings more closely into the context of
practical condensed matter problems, 
we numerically optimize the non-stoquasticity of certain nearest-neighbour Hamiltonians in quasi one-dimensional ladder geometries. 
Such systems are effectively described by translation-invariant 
Hamiltonians on $n$ $d$-dimensional quantum systems of the form \eqref{eq:ti h} with nearest-neighbour interaction term $h \in \mb R^{d^2 \times d^2}$. 
For the sake of simplicity, we specialize here to closed boundary conditions, identifying $n +1 = 1 $. 

We then optimize the non-stoquasticity of $H$ over on-site orthogonal basis choices of the type
\begin{align}
\label{eq:local basis choice}
	H  \mapsto O^{\otimes n} H (O^T)^{\otimes n} . 
\end{align}
On-site transformations are particularly simple to handle as they preserve both locality and translation-invariance of the Hamiltonian. 
Due to the translation-invariance of the problem 
the global non-stoquasticity measure can be expressed locally in terms of the transformed term $O^{\otimes 2 } h (O^T)^{\otimes 2} $ 
so that the problem has constant complexity in the system size. 
More precisely, for Hamiltonians of the form \eqref{eq:ti h} we can express the non-stoquasticity measure $\nu_1(H)  = n 2^{n - 3} \tilde \nu_1(h)$ in terms of an effective local measure 
\begin{align}
	\tilde \nu_1(h) =  \sum_{\substack{ijk;lmn: \\j\neq m, k = n}} \max \left\{ 0,\left(h \otimes \id + \id \otimes h  \right)_{ijk; lmn}\right\} . 
\end{align}
Optimizing $\nu_1(H)$ for the global Hamiltonian is therefore equivalent to the much smaller problem of  minimizing $\tilde \nu_1(h)$. 
While the non-stoquasticity measure $\nu_1$ can be efficiently evaluated, thus satisfying a necessary criterion for an efficient optimization algorithm, minimizing $\nu_1$ may and in fact \emph{will} still be a non-trivial task in general -- an intuition we make rigorous below.
This is because in optimizing the basis-dependent measure $\nu_1$ over quasi-local basis choices one is faced with a highly non-convex optimization problem of a high-order polynomial over the sphere of orthogonal matrices.  
Among the best developed multi-purpose methods for optimization over group manifolds such as the orthogonal group are conjugate gradient descent methods \cite{abrudan_conjugate_2009}. 
Compared to simple gradient-descent algorithms, conjugate gradient algorithms are capable to better incorporate the underlying group structure to the effect that they achieve much faster runtimes and better convergence properties. 

\begin{figure}
\centering
	\includegraphics{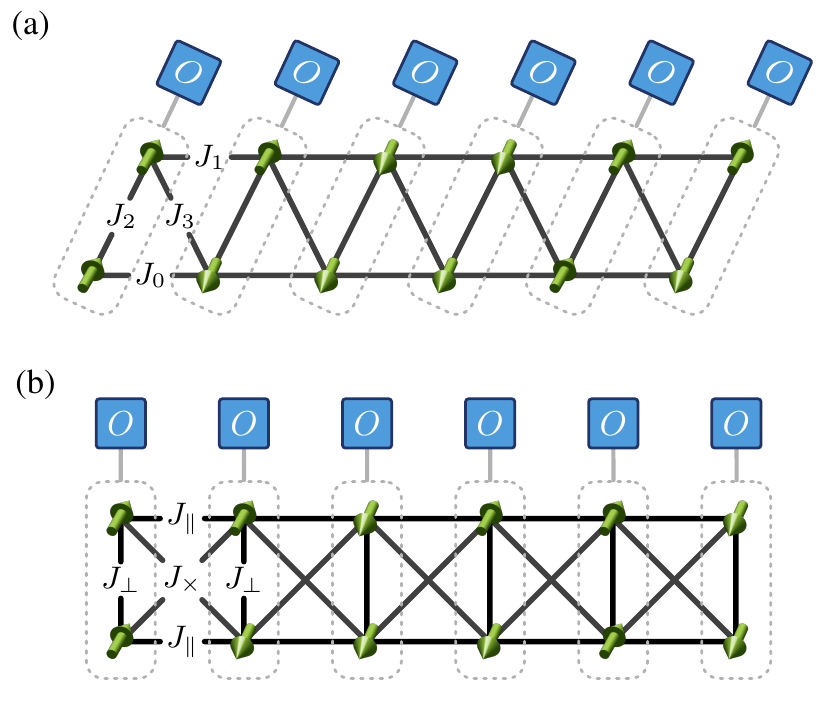}
	\caption{
	Quasi one-dimensional models with a sign problem. 
	Figure \emph{(a)} shows the lattice of the \jmodel-Heisenberg model on a triangular quasi one-dimensional lattice as given in Eq.~\eqref{eq:jmodel hamiltonian}. 
	Figure \emph{(b)} shows the lattice structure of the frustrated Heisenberg model \eqref{eq:frustrated hamiltonian} with couplings $J_\perp$, $J_\parallel$ and $J_\times$ on a square-lattice ladder with cross coupling. 
	In our simulations, we group sites to dimers as indicated in the figures and then optimize the measure $\tilde \nu_1(h)$ of the effective $2$-local terms $h$ over on-site orthogonal transformations $O^{\otimes 2}$. 
	\label{fig:ladders}
	}
\end{figure}

To practically minimize the non-stoquasticity $\tilde \nu_1$ over the orthogonal group $O(d)$ we have implemented a conjugate gradient descent algorithm following Ref.~\cite{abrudan_conjugate_2009}. 
Our implementation is publicly available~\cite{optimization_numerics} and detailed in Appendix~\ref{app:conjugate gradient}. 

We first benchmark the algorithm on Hamiltonians which we know to admit an on-site stoquastic basis. 
Specifically, we apply the algorithm to recover an on-site 
stoquastic basis of the random translation-invariant Hamiltonian
\begin{align}
\label{eq:random stoquastic h}
H = \sum_{i = 1}^{n} T_i\left ( O^{\otimes 2} (h - \nonstoq h) (O^T)^{\otimes 2} \right)	
\end{align} 
on $n$ qudits where the two-local term $h \in \mb R^{d^2 \times d^2 }$ is a Hamiltonian term with uniformly random spectrum expressed in a Haar-random basis 
and $O \in O(d)$ is a Haar-random on-site orthogonal matrix. 
In Fig.~\ref{fig:stoquastic plots}(a) we show the performance of the algorithm on randomly chosen instances of \eqref{eq:random stoquastic h} for different values of the local dimension $d$. 
In all but very few instances our algorithm essentially recovers the stoquastic basis of the random Hamiltonian. 
By construction, this can only be due to the fact that the algorithm gets stuck in local minima, indicating a potential limitation of first-order optimization techniques as a tool for easing the sign problem of general Hamiltonians. 

We then study frustrated anti-ferromagnetic Heisenberg Hamiltonians, i.e., Hamiltonians with positively weighted interaction terms $\vec S_i \cdot \vec S_j$, on different ladder geometries. 
Here, $\vec S_i = (X_i,Y_i,Z_i)^T$ is the spin operator at site $i$. 
The sign problem of frustrated ladder systems can in many cases actually be removed by going to a dimer basis \cite{nakamura_vanishing_1998,honecker_thermodynamic_2016,wessel_efficient_2017}. 
However -- and this is important for our approach -- in those cases the sign problem is not removed by finding a \emph{stoquastic local basis}, but rather by exploiting specific properties of the Monte Carlo simulation at hand, for example, that no negative paths occur in the simulation \cite{nakamura_vanishing_1998} or by exploiting specific properties of the Monte Carlo implementation at hand \cite{honecker_thermodynamic_2016,wessel_efficient_2017}. 
Therefore, frustrated Heisenberg ladders constitute the ideal playground to explore the methodology of easing the sign problem by (quasi-)local basis choices. 

\begin{figure*}[t]
	\centering
	\includegraphics{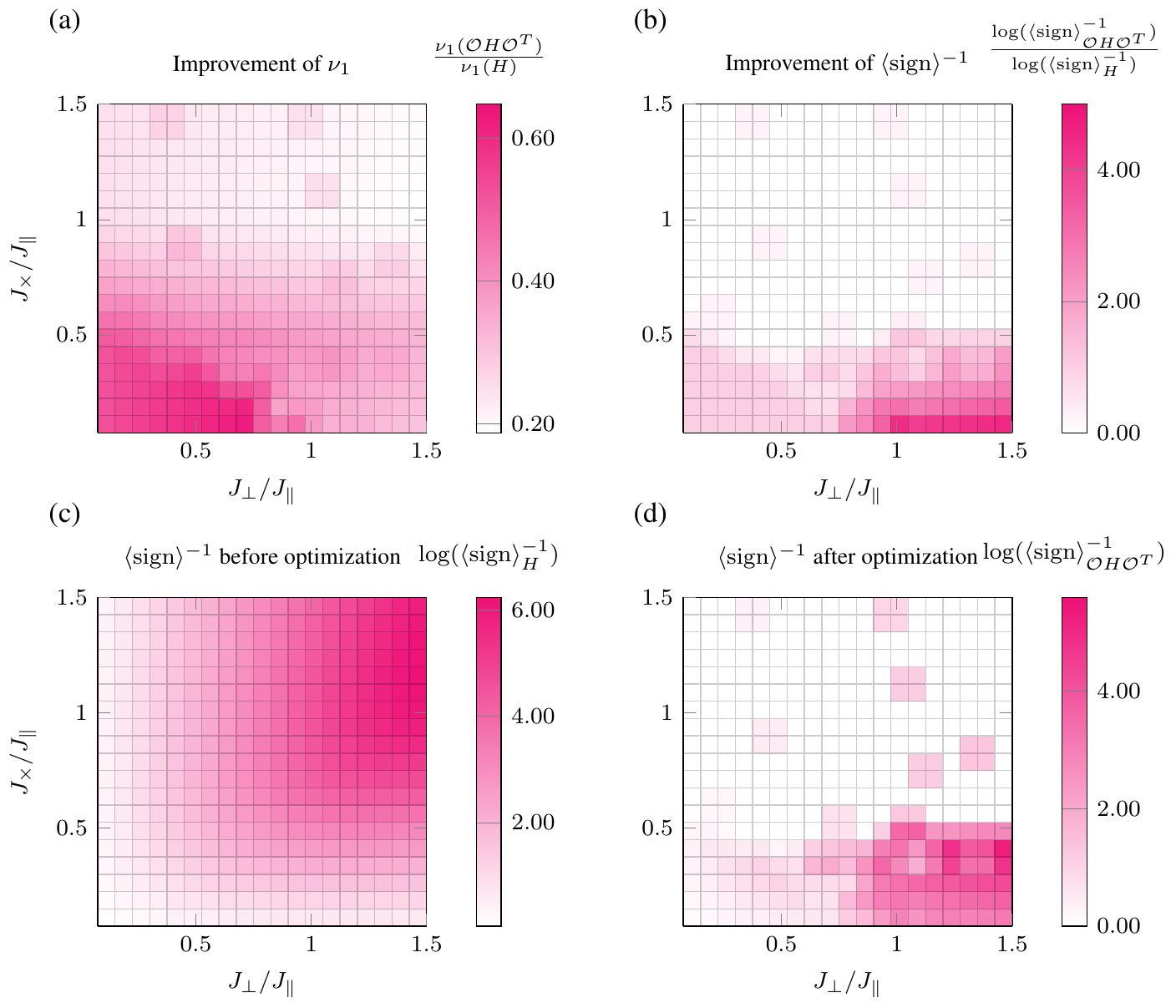}
\caption{
\label{fig:frustrated ladder av sign} 
Improvement of the average sign concomitant with an improvement of the non-stoquasticity of the frustrated Heisenberg ladder~\eqref{eq:frustrated hamiltonian}. 
Figure 
  \emph{(a)} shows the optimized non-stoquasticity $\nu_1$ in terms of its relative improvement compared to the computational basis. 
  \emph{(b)} We expect the inverse average sign to scale exponentially in the non-stoquasticity. Therefore, we plot the ratio of the logarithm of the inverse average sign before optimization to that after optimization. 
  We compute the average sign via exact diagonalization for a ladder of $2 \times 4$-sites, $m = 100$ Monte Carlo steps and inverse temperature $\beta = 1$.
  We also plot the logarithm of the inverse average sign \emph{(c)} before and \emph{(d)} after optimization of a local orthogonal basis. 
 }
\end{figure*}

The first model we study is the \jmodel-model \cite{nakamura_vanishing_1998} whose Hamiltonian is given by (see Fig.~\ref{fig:ladders}(a))
\begin{align}
\label{eq:jmodel hamiltonian}
	H_{\vec J} = \sum_{i = 1}^n \bigg(J_0 \vec S^1_i \vec S^1_{i+1} +  J_1 \vec S^2_i \vec S^2_{i+1} +  J_2 \vec S^1_i \vec S^2_i +  J_3\vec S^1_{i+1} \vec S^2_i \bigg), 
\end{align}
where $\vec S^1_i$ denotes the spin operator at site $i$ on the lower rung and $\vec S^2_i$ on the upper rung of the ladder, respectively, and $J_i \geq 0 $ for all $i$. 
Intriguingly, this Hamiltonian does not have a sign problem in the singlet-triplet dimer basis even though the Hamiltonian is not stoquastic in that basis. 
However, there exists a \emph{non-local} stoquastic basis for values of $J_3 \geq J_0 + J_1$~\cite{nakamura_vanishing_1998}.
We show the results of optimizing the non-stoquasticity of $H_{\vec J}$ with $J_0 = J_1 = J$ over a translation-invariant dimer basis (see Fig.~\ref{fig:ladders}(a)) in Fig.~\ref{fig:stoquastic plots}(b). 
We initialize our simulations in a Haar random orthogonal on-site basis. 
Interestingly, we find an improvement of the non-stoquasticity under on-site orthogonal basis choices that does not seem to correlate with the region in which a non-local stoquastic basis was found in Ref.~\cite{nakamura_vanishing_1998}. 
We view this as an indication that less local ansatz classes such as quasi-local circuits can further improve the non-stoquasticity for this model. 

We now apply the algorithm to the anti-ferromagnetic Heisenberg ladder studied in Refs.~\cite{honecker_thermodynamic_2016,wessel_efficient_2017}. 
The Hamiltonian of this system is given by (see Fig.~\ref{fig:ladders}(b))
\begin{align}
\label{eq:frustrated hamiltonian}
\begin{split}
	H_{J_\parallel,J_\perp,J_\times} = & \sum_{i = 1}^n \bigg(J_\parallel \left( \vec S^1_i \vec S^1_{i+1} +  \vec S^2_i \vec S^2_{i+1} \right)
	  +  J_\perp  \vec S^1_i \vec S^2_i \\
	  & \qquad + J_\times \left(\vec S^1_{i} \vec S^2_{i + 1}  + \vec S^1_{i + 1} \vec S^2_{i}  \right) \bigg), 
\end{split}
\end{align}
with interaction constants $J_\parallel,J_\perp,J_\times \geq 0 $. 
For this geometry, the situation is somewhat more involved: Refs.~\cite{honecker_thermodynamic_2016,wessel_efficient_2017} find that a sign-problem free QMC procedure exists, albeit for a slightly different QMC procedure than we consider here, namely stochastic series expansion (SSE) Monte Carlo \cite{sandvik_computational_2010}.
Similar to the world-line Monte Carlo method discussed here, SSE is based on an expansion of the exponential $\exp(- \beta H)$ albeit via a Taylor expansion as opposed to a product expansion. 
Specifically, for the partially frustrated case in which $J_\times \neq J_\parallel$ their solution of the sign problem exploits a specific sublattice structure of the Hamiltonian in combination with the SSE approach. 
We optimize the non-stoquasticity of dimer basis choices as shown in Fig.~\ref{fig:ladders}(b) when starting from a random initial point that is close to the identity.
Our results, shown in Fig.~\ref{fig:stoquastic plots}(c), qualitatively reflect the findings of \citet{wessel_efficient_2017} for SSE in terms of stoquasticity in that the non-stoquasticity can be significantly reduced for the fully frustrated case $J_\parallel = J_\times$, while it can be merely slightly improved for the partially frustrated case. 

At the same time, the algorithm does not recover a fully stoquastic basis for the frustrated ladder model $H_{J_\parallel,J_\perp,J_\times}$ as might be expected.
There may be several reasons for this: 
either the nearly sign-problem free QMC procedure found in Refs.~\cite{honecker_thermodynamic_2016,wessel_efficient_2017} is in fact specific to SSE in that no stoquastic dimer basis and hence no sign-problem free world-line Monte Carlo method exists in the orbit of orthogonal dimer bases, or the conjugate gradient algorithm gets stuck in local minima.  
In any case, the performance of our algorithm for both frustrated ladders demonstrates that the optimization landscape is generically an extremely rugged one, reflecting the computational hardness of the optimization problem in general. 

We now turn to showing 
the improvement of the average sign concomitant with the improvement in non-stoquasticity in Fig.~\ref{fig:frustrated ladder av sign}. 
We first observe that Figs.~\ref{fig:frustrated ladder av sign}(a) and (b) are compatible with an exponential dependence of the inverse average sign on the non-stoquasticity $\l \sign \r^{-1} \propto \exp(c \nu_1(H))$ as conjectured above: 
in the regions in which a significant improvement of the non-stoquasticity could be achieved by local basis choices, the inverse average sign could also be strongly improved. 
Importantly, while the Hamiltonian could not be made entirely stoquastic, the improvement in the inverse average sign reaches an extent to which nearly no sign problem remains in those regions. 
This shows that also moderate improvements in non-stoquasticity can yield tremendous improvements of the average sign. 
At the same time, a severe sign problem remains -- and actually becomes worse -- in a small region of the parameter space (around $J_\perp/J_\parallel \gtrsim 3/4$ and $J_\times/J_\parallel \lesssim 1/2$) even though the non-stoquasticity could be reduced to some extent in that region. 
This may reflect open questions about the relation between average sign and non-stoquasticity that arose in our earlier discussion in Section~\ref{sec:av sign vs nonstoq}: 
while in generic cases the two notions of severeness of the sign problem are expected to be closely related, there is no general simple correspondence between them. 

Our findings demonstrate both the feasibility of minimizing the non-stoquasticity in order to ease the sign problem by optimizing over suitably chosen ansatz classes of unitary/orthogonal transformations and potential obstacles to a universal solution of the sign problem. 
In particular, for translation-invariant problems -- while it may well be computationally infeasible -- the complexity of the optimization problem only scales with the locality of the Hamiltonian, the local dimension and the depth of the circuit. 
We expect, however, that there exists no algorithm with polynomial runtime in all of these parameters that always solves the optimisation problem. 

Our findings also indicate that more general ansatz classes yield the potential to further improve non-stoquasticity. 
Different classes of orthogonal transformations can be straightforwardly incorporated in our algorithmic approach. 
A detailed study of different ansatz classes and their potential for easing the sign problem is, however, beyond the scope of this work. 
It is the subject of ongoing and future efforts to study the optimal basis choice in terms of the non-stoquasticity for both deeper circuits and further models as well as the connection between the average sign and the non-stoquasticity. 

%%% -------------------------------------------------------------------
%%% ------ Section: Computational complexity  -------------------------
%%% -------------------------------------------------------------------

\section{Easing the sign problem: computational complexity }
\label{sec:complexity}

Let us now focus on a more fundamental question, namely, how far an approach that optimizes the non-stoquasticity can carry in principle. 
We have explored the potential of easing using state-of-the-art optimization algorithms; let us now turn to its boundaries, a glimpse of which we have already witnessed in the shape of a rugged optimization landscape. 
The method of choice for this task is the theory of computational complexity.

We analyze the computational complexity of easing the sign problem under particularly simple basis choices, namely, real on-site Clifford and orthogonal transformations. 
In both cases, we show that easing the sign problem is an \np-complete task even in cases in which deciding whether the sign problem can be cured is efficiently solvable \cite{klassen_two-local_2018}, namely for XYZ Hamiltonians as given by 
\begin{align}
\label{eq:xyz hamiltonian}
	H_{\text{XYZ}} = \sum_{i < j } \left(a_{i,j} X_i X_j + b_{i,j} Y_i Y_j + c_{i,j} Z_i Z_j \right).
\end{align}
Like Refs.~\cite{troyer_computational_2005,marvian_computational_2018,klassen_two-local_2018,klassen_hardness_2019}, we allow for arbitrary interaction graphs.

A central ingredient in proving Theorem~\ref{thm:complexity of se} is an expression for the non-stoquasticity measure $\nu_1$ of strictly $2$-local Hamiltonians of the form 
\begin{equation}
\begin{split}
\label{eq:strictly two-local h}
	H_2 = \sum_{i < j } & 
	\bigl(a_{i,j} X_i X_j +  c_{i,j} Z_i Z_j 
	% \\&
	 + x_{i,j} X_i Z_j + x_{j,i} Z_i X_j
	\bigr)\, . 
\end{split}
\end{equation} 
It is sufficient to restrict to Hamiltonians of the form~\eqref{eq:strictly two-local h} because the orbit of XYZ Hamiltonians under on-site orthogonal (Clifford) transformations does not reach $YY$ terms. 

It is a direct consequence of Lemma~\ref{lem:non-stoquasticity 2-local H} that 
\begin{align}
\begin{split}
\nu_1 & (H_2) = \sum_{i<j} \nu_1(a_{i,j} X_i X_j) + \sum_i \nu_1\bigg(\sum_{j\in \mc N_{XZ}(i)} x_{i,j} X_i Z_j \bigg) , 
\end{split}
\end{align}
where the $XZ$ neighbourhood $\mc N_{XZ}(i)$ of a vertex $i$ and related notions are defined in Sec.~\ref{sec:computing non-stoq}. 
More specifically, following Eqs.~\eqref{eq:non-stoquasticity xx+yy} and \eqref{eq:non-stoquasticity xz+x} we find that 
\begin{align}
 & \nu_1(a_{i,j} X_iX_j) =  \sum_{i < j } \max\{a_{i,j},0\}, 
 \label{eq:non-stoquasticity xx} \\
\begin{split}
	\label{eq:non-stoquasticity xz}
& \nu_1\bigg( \sum_{j \in \mc N_{XZ}(i)} x_{i,j} X_i Z_j \bigg) =  2^{- \deg_{XZ}(i)} \\ 
&\qquad\times  \sum_{\lambda_{\mc N_{XZ}(i)}} \max  \bigg\{\sum_{j \in \mc N_{XZ}(i)}(-1)^{\lambda_j} x_{i,j} ,0\bigg\} . 
 \end{split}
\end{align}	

Since for the proof of hardness we need analytical expressions of the non-stoquasticity, we cannot resort to the sampling algorithm to evaluate the non-stoquasticity of $XZ$ terms as proposed in Sec.~\ref{sec:computing non-stoq}. 
We analytically bound the contribution of a vertex with non-trivial $XZ$ neighbourhood with the following lemma. 
\begin{lemma}[$XZ$ non-stoquasticity]
\label{lem:xz-non-stoquasticity}
	The following bound is true for $k \in \mb N$
	\begin{align}
		\sum_{\lambda \in \{0,1\}^k} \max\bigg\{ \sum_{j= 1}^k (-1)^{\lambda_j} x_j ,0 \bigg\} \geq \max_j |x_j| \cdot 2^{k-1} .  
		\label{eq:nonstoq arbitrary xz vertex}
	\end{align} 
\end{lemma}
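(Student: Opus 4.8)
The plan is to prove the bound by a pairing argument that exploits a \emph{global} cancellation rather than a pointwise estimate. First I would observe that the left-hand side is invariant under flipping the sign of any single $x_j$: for each coordinate the summation over $\lambda_j \in \{0,1\}$ ranges over both signs $(-1)^{\lambda_j} = \pm 1$, so replacing $x_j$ by $-x_j$ merely relabels the summation index $\lambda_j \mapsto 1 - \lambda_j$ and leaves the total unchanged. The right-hand side $\max_j |x_j|$ is likewise unchanged. Hence without loss of generality I may assume $x_j \geq 0$ for all $j$, and I fix an index $j^\ast$ with $x_{j^\ast} = \max_j x_j = \max_j |x_j|$.

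Next I would partition the $2^k$ configurations $\lambda \in \{0,1\}^k$ into $2^{k-1}$ pairs, where each pair $\{\lambda, \lambda'\}$ consists of the two configurations that differ exactly in the coordinate $j^\ast$. Writing $S = \sum_{j \neq j^\ast} (-1)^{\lambda_j} x_j$ for the part of the weighted sum common to the pair, the two values contributed are $S + x_{j^\ast}$ and $S - x_{j^\ast}$. Using the elementary identities $\max\{a,0\} = (a + |a|)/2$ and $|S+x_{j^\ast}| + |S - x_{j^\ast}| = 2\max\{|S|, x_{j^\ast}\}$, the contribution of a single pair evaluates to
\begin{align}
\max\{S + x_{j^\ast}, 0\} + \max\{S - x_{j^\ast}, 0\} = S + \max\{|S|, x_{j^\ast}\} .
\end{align}

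The step I expect to be the crux is recognizing that one must \emph{not} bound each pair individually: since $S$ can be arbitrarily negative, the per-pair quantity $S + \max\{|S|, x_{j^\ast}\}$ can drop to $0$, so a pairwise bound of $x_{j^\ast}$ simply fails. The resolution is to sum over all pairs first and let the $S$-terms cancel globally. As $\mu$ runs over $\{0,1\}^{k-1}$, the values $S = \sum_{j \neq j^\ast}(-1)^{\mu_j} x_j$ enumerate the common parts of the pairs, and for every fixed $j \neq j^\ast$ the signs $(-1)^{\mu_j}$ sum to zero over these $\mu$; thus $\sum_{\text{pairs}} S = 0$. Summing the displayed identity over all $2^{k-1}$ pairs therefore leaves
\begin{align}
\sum_{\lambda \in \{0,1\}^k} \max\Big\{ \sum_{j=1}^k (-1)^{\lambda_j} x_j, 0 \Big\} = \sum_{\text{pairs}} \max\{|S|, x_{j^\ast}\} \geq 2^{k-1} x_{j^\ast},
\end{align}
where the inequality holds because each summand is at least $x_{j^\ast}$. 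Since $x_{j^\ast} = \max_j |x_j|$, this is exactly the claimed bound. Everything besides the global-cancellation insight is routine manipulation of absolute values, so I would keep the write-up short once that observation is in place.
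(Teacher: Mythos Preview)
Your argument is correct, and it is genuinely different from the paper's proof. The paper proceeds by induction on $k$: after reducing to $x_1 \ge \cdots \ge x_k \ge 0$, it peels off the \emph{smallest} coordinate $x_k$ and uses the pointwise inequality $\max\{a+b,0\}+\max\{a-b,0\}\ge 2\max\{a,0\}$ to reduce to the $(k-1)$-dimensional sum, then invokes the induction hypothesis. In contrast, you pair on the \emph{largest} coordinate and exploit the exact identity $\max\{S+x_{j^\ast},0\}+\max\{S-x_{j^\ast},0\}=S+\max\{|S|,x_{j^\ast}\}$ together with the global cancellation $\sum_{\text{pairs}} S = 0$. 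Your route avoids induction altogether and yields the sharper intermediate equality $\sum_\lambda \max\{\cdot,0\}=\sum_{\mu}\max\{|S(\mu)|,x_{j^\ast}\}$ before the final bound, which is a potentially useful closed form. The paper's approach, on the other hand, is slightly more robust in that the per-pair inequality it uses is itself a valid lower bound, so no cancellation across pairs is needed; your observation that the per-pair estimate fails and one must sum first is the essential (and correctly handled) subtlety of your method.
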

\begin{proof}
	Let us assume wlog.\ that $x_1 \ge x_2 \ge \ldots \ge x_k \ge 0 $, all terms being positive and non-increasingly ordered. 
	This does not restrict generality as all possible combinations of signs appear in the sum \eqref{eq:nonstoq arbitrary xz vertex}. 
	We prove the claim by induction. 
	For $k=1$, the statement is true by immediate inspection. 
	For the induction step, we use the following inequality for $a,b \in \mb R$
	\begin{align}
		\label{eq:sum of max}
		\max\{a + b,0 \} + \max\{a - b,0 \} \geq 2 \max\{a ,0 \} , 
	\end{align} 
	which can be easily checked by checking the three cases $a \geq |b|$, $ a \le - |b| $ and $ - |b| < a < |b|$. 
	We then calculate 
	\begin{align}
	\label{eq:lhs calculation xz neighbourhood}
	& \sum_{\lambda \in \{0,1\}^k}  \max\bigg\{ \sum_{j= 1}^k (-1)^{\lambda_j} x_j ,0 \bigg\} \\
	\begin{split}
	= & 
	\sum_{\lambda_1,\dots, \lambda_{k-1}\in \{0,1\}} 
	\max\bigg\{x_k+ \sum_{j= 1}^{k-1} (-1)^{\lambda_j} x_j ,0 \bigg\}\\
	& + 
	\sum_{\lambda_1,\dots, \lambda_{k-1}\in \{0,1\}} 
	\max\bigg\{-x_k+ \sum_{j= 1}^{k-1} (-1)^{\lambda_j} x_j ,0 \bigg\} . 
	\end{split}\\
	\ge &\,  2  \sum_{\lambda' \in \{0,1\}^{k-1}} 
	\max\bigg\{\sum_{j= 1}^{k-1} (-1)^{\lambda'_j} x_j ,0 \bigg\}\\
	\stackrel{\text{I.H.}}{\ge } & 2 \cdot 2^{k-2  } |x_1| = 2^{k-1} |x_1|, 
	\end{align}
	where we have used \eqref{eq:sum of max} in the second to last and the induction hypothesis in the last step. 
	This proves the claim. 
\end{proof}
In the proof of Theorem~\ref{thm:complexity of se} we will use that Lemma~\ref{lem:non-stoquasticity 2-local H} implies that every term $a_{i,j} X_iX_j$ contributes an additional cost $\max\{ a_{i,j}, 0 \}$ to the non-stoquasticity of $H_2$. 
Moreover, since $\max_{j\in [k]} |x_j|\ge \sum_{j = 1}^k |x_j|/k$, Lemmas~\ref{lem:non-stoquasticity 2-local H} and \ref{lem:xz-non-stoquasticity} imply that we can view every term $x_{i,j} X_i Z_j$ of $H_2$ as contributing at least a cost $|x_{i,j}|/ (2 \deg(G))$ to the non-stoquasticity of $H_2$, where $G$ is the interaction graph of $H_2$. 

We are now ready to show that with respect to the non-stoquasticity measure $\nu_1$ easing the sign problem for $2$-local XYZ Hamiltonians with on-site Cliffords is \np-complete on arbitrary graphs. 
We restate Theorem~\ref{thm:complexity of se}\ref{item:se i} here. 
\begin{theorem}[\se\, under orthogonal Clifford transformations]
\label{thm:clifford hardness}
	\se\ is \np-complete for $2$-local Hamiltonians on an arbitrary graph, in particular for XYZ Hamiltonians, under on-site orthogonal Clifford transformations, that is, the real group generated by $\{X,Z,W\}$ with $W$ the Hadamard matrix. 
\end{theorem}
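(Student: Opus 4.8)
The plan is to establish both \np-membership and \np-hardness, the latter by a polynomial reduction from the promise version of \maxcut. Membership is immediate from Theorem~\ref{thm:approximating non-stoq arb graph}: a witness is an on-site Clifford $U$, and since a real single-qubit Clifford conjugates each Pauli to a signed Pauli, never turning an $X$ or $Z$ factor into a $Y$, the Hamiltonian $U H U^\dagger$ obtained from an XYZ instance is again a real $2$-local Hamiltonian of the form~\eqref{eq:two-local h} (carrying only $XX$, $YY$, $ZZ$, $XZ$, $ZX$ terms). Its non-stoquasticity can thus be approximated to additive error below $(B - A)/2 \ge 1/(2\poly(n))$ in polynomial time, which suffices to verify the YES-case~\eqref{eq:se yes}.

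For hardness, I would take a \maxcut\ instance on $G = (V, E)$ with anti-ferromagnetic Ising Hamiltonian $H = \sum_{(i,j) \in E} Z_iZ_j$ of ground-state energy $E_0$, and build the $2$-local (indeed XYZ) Hamiltonian
\begin{align}
	H' = \sum_{(i,j) \in E} \Bigl( X_iX_j + C\bigl( Z_iZ_j - Z_iZ_{\xi_{i,j}} - Z_jZ_{\xi_{i,j}} \bigr) \Bigr),
\end{align}
with one ancilla qubit $\xi_{i,j}$ per edge and the constant $C = 2\deg(G)$. The first step is to compute $\nu_1$ on the restricted family $Z_1^{s_1} \cdots Z_n^{s_n}$, $s \in \{0,1\}^n$, acting on the vertices only. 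Such a map sends $X_iX_j \mapsto (-1)^{s_i + s_j} X_iX_j$ and fixes every $ZZ$ penalty term, so by Eq.~\eqref{eq:non-stoquasticity xx} edge $(i,j)$ contributes $1$ to $\nu_1$ if $s_i = s_j$ and $0$ if $s_i \neq s_j$, while the diagonal $ZZ$ terms contribute nothing. Hence $\nu_1 = |E| - \mathrm{cut}(s)$, and minimising over $s$ yields $|E|$ minus the maximum cut, which equals $(|E| + E_0)/2$. Setting the \se\ thresholds to the affine images of the \maxcut\ thresholds under $E_0 \mapsto (|E| + E_0)/2$ preserves the inverse-polynomial gap and matches YES with YES and NO with NO.

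The crux --- and the step I expect to be the main obstacle --- is to show that this restricted family is globally optimal, i.e.\ $\min_U \nu_1(U H' U^\dagger) = (|E| + E_0)/2$ over all on-site Cliffords $U$. Because we are in the Clifford rather than the continuous orthogonal setting, this is a finite case analysis: each single-qubit real Clifford either preserves the computational structure ($Z \mapsto \pm Z$, $X \mapsto \pm X$) or is Hadamard-like ($Z \mapsto \pm X$, $X \mapsto \pm Z$). On the purely structure-preserving subfamily the penalty terms remain diagonal (cost $0$) and the $X_iX_j$ terms only change sign, so the optimum over it reproduces the MAXCUT value above. It then remains to rule out any benefit from a Hadamard-like factor. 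Whenever such a factor is applied at a site $i$, $j$, or $\xi_{i,j}$ appearing in a penalty term, that term is turned from a diagonal $ZZ$ term into an $XZ$ term with coefficient of magnitude $C$; by Lemmas~\ref{lem:non-stoquasticity 2-local H} and~\ref{lem:xz-non-stoquasticity} this injects non-stoquasticity at least $C/2 = \deg(G)$ localised at that site. Against this, the only non-stoquasticity a Hadamard-like factor at a vertex $i$ can save lies in its at most $\deg(G)$ incident $X_iX_j$ edges, and it is zero at an ancilla, which only carries penalty terms.

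The delicate part is to convert these per-site estimates into a genuine global bound: one must rule out all \emph{joint} deviations rather than single-site ones, account for the sign cancellations that arise when several $XZ$ contributions share a site --- precisely where the lower bound of Lemma~\ref{lem:xz-non-stoquasticity} is indispensable --- and check that with $C = 2\deg(G)$ the penalty outweighs the achievable savings so that no Hadamard-like factor is ever advantageous. I would organise this as an exchange argument: starting from any candidate $U$, replace each Hadamard-like single-site factor by the best structure-preserving one without increasing $\nu_1$, thereby reducing to the diagonal subfamily already analysed. Once $\min_U \nu_1(U H' U^\dagger) = (|E| + E_0)/2$ is secured, the reduction is complete; together with membership this proves \np-completeness, and since $H'$ has vanishing $YY$ coefficients that stay zero under real transformations, the result holds in particular for XYZ Hamiltonians.
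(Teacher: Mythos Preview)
Your approach is the paper's: same \maxcut-to-XYZ construction with ancilla triangles, same use of Lemmas~\ref{lem:non-stoquasticity 2-local H} and~\ref{lem:xz-non-stoquasticity}, and you correctly flag the global case analysis as the crux. Two points where your sketch drifts from what actually closes.

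First, the constant. The paper takes $C = 4\deg(G) = 2\deg(G')$, not $2\deg(G)$. This matters because the paper's argument is \emph{per-edge}: it uses the weaker corollary of Lemma~\ref{lem:xz-non-stoquasticity} that each $x_{i,j}X_iZ_j$ term contributes at least $|x_{i,j}|/(2\deg(G'))$ to $\nu_1$, and then checks, for each of the finitely many Clifford patterns on an edge triple $(i,j,\xi_{i,j})$, that the resulting $XZ$ and $XX$ terms cost at least~$1$. That needs $C/(2\deg(G')) \ge 1$; with your $C = 2\deg(G)$ the per-edge bound gives only $1/2$, which is not enough to dominate the unit cost of an uncured $X_iX_j$.

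Second, the exchange argument. You invoke the sharper per-site bound $\max_j|x_j|/2 = C/2$ from Lemma~\ref{lem:xz-non-stoquasticity} directly, which is legitimate, but then the argument is no longer edge-local: Hadamards on adjacent vertices feed $XZ$ terms into the same site, and since the lemma controls only the \emph{maximum} coefficient at a site rather than the sum, swapping one Hadamard for a structure-preserving factor can redistribute cost among neighbouring sites in ways your sketch does not track. The paper sidesteps this entirely by a brute case split --- Hadamard on one endpoint, on both endpoints, on both plus the ancilla --- bounding each case separately against the edge cost~$1$. If you want to keep the exchange viewpoint you will need either a larger $C$ with a genuinely global accounting, or to fall back on the per-edge analysis the paper carries out.
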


\begin{proof}
Clearly the problem is in \np, since one can simply receive a (polynomial-size) description of the transformation in the Yes-case, and then calculate the measure of non-stoquasticity efficiently for XYZ Hamiltonians, verifying the solution. 

To prove \np-hardness, we encode the \maxcut\ problem in the \se\ problem.
A \maxcut\ instance can be phrased in terms of asking whether an anti-ferromagnetic (AF) Ising model on a graph $G = (V,E)$ with $e = |E|$ edges on $v = |V|$ spins
\begin{align}
	H = \sum_{(i,j)\in E} Z_i Z_j , 
\end{align}
has ground-state energy $\lambda_{\text{min}}(H)$ below $A$ or above $B$ with constants $B -  A\geq 1/\poly(v)$. 
This is because in the Ising model one gets energy $-1$ for a $(0,1)$  or $(1,0)$ -edge and $+1$ for a $(0,0)$  or $(1,1)$  edge. 

Let us now encode the \maxcut\ problem phrased in terms of the AF Ising model problem into \se\ for the XYZ Hamiltonian.
We will design a Hamiltonian $H'$, and ask if on-site orthogonal Clifford transformations can decrease its measure of non-stoquasticity $\nu_1$ below $A$, or whether it remains above $B$ for any Clifford basis choice.

For each AF edge between spins $i,j$ in the AF Ising model, the new Hamiltonian $H'$ will have an edge 
\begin{align}
	h_{i,j} = X_i X_j. \label{Hx}
\end{align}
On top of that, for every edge $(i,j) \in E$ we add one ancilla qubit $\xi_{i,j}$ as shown in Figure~\ref{fig:maxcut embedding}, and interactions 
\begin{align}
	h_{i,j}^{(\xi)} = C \left(Z_i Z_j - Z_i Z_{\xi_{i,j}} -   Z_{\xi_{i,j}}Z_j \right) , \label{Hanc}
\end{align}
where $C = 4 \deg(G)$. 
Note that the additional terms are diagonal and hence stoquastic. 
The total new Hamiltonian then reads 
\begin{align}
	H' = \sum_{(i,j) \in E} \left[X_i X_j + C \left(Z_i Z_j - Z_i Z_{\xi_{i,j}} -   Z_{\xi_{i,j}}Z_j \right) \right], \label{eq:embedded maxcut}
\end{align}
and acts on $n = v + e$ qubits. 
We construct $H'$ so that an attempt to decrease the non-stoquasticity $\nu_1$ by swapping $Z$ and $X$ operators via Hadamard transformations will fail, and so the best one can do is to choose a sign in front of each local $X$ operator. Of course, this then becomes the original, hard, \maxcut\ problem in disguise. Let us prove this.

Let $\mc N'(i) = \{ j: (i,j) \in E'\} = \mc N(i) \cup \{ \xi_{i,j}: (i,j) \in E \}$ be the neighbourhood of site $i$ on the augmented graph $G' = (V',E')$ on which $H'$ lives. 
We start the proof by observing that the degree $\deg'(i) = |\mc N'(i)|$ of a site $i$ on the augmented graph satisifies $ \deg'(i) = 2 \deg(i)$, and hence $\deg'(G) =  2\deg(G)$.

\paragraph{Orthogonal Clifford transformations}

First, let us note that any element of the orthogonal single-qubit Clifford group can be written as 
\begin{align}
	C = \pm W^w X^x Z^z , 
\end{align}
where we denote the Hadamard matrix with $W$ and $w,x,z \in \{0,1\}$. 
Since the global sign is irrelevant, a real $n$-qubit Clifford of the form $C = C_1 \otimes \cdots \otimes C_n$ is parametrized by binary vectors $\vec w,\vec x, \vec z \in \{0,1\}^n$. 

How does $H'$ transform under real single-qubit Clifford transformations? 
By definition $C Z C^\dagger \in \{ \pm Z, \pm X\}$ and likewise for $X$. 
Therefore, the transformed Hamiltonian will be of the form \eqref{eq:strictly two-local h}. 
Throughout the proof, we will use that every term $a_{i,j} X_i X_j$ contributes at least $\max\{ a_{i,j},0\}$ to the non-stoquasticity, while every term $x_{i,j} X_i Z_j$ contributes at least $|x_{i,j}| /(2\deg(G')) = |x_{i,j}|/(4 \deg(G))$ as shown by Lemmas~\ref{lem:non-stoquasticity 2-local H} and \ref{lem:xz-non-stoquasticity} above.

We now show that \maxcut\, can be embedded into \se\, under on-site orthogonal Clifford transformations. 
To do so, we need to show two things: 
first, that in the yes-case that $\lambda_{\text{min}}(H) \leq A$, the non-stoquasticity of $H'$ can be brought below $A$ using on-site orthogonal Clifford transformations. 
Second, we show that in the no-case that $\lambda_{\text{min}}(H) \geq B$, the non-stoquasticity of $H'$ cannot be brought below $B$ using on-site orthogonal Clifford transformations.

\paragraph{Yes-case: (Diagonal) transformations that map $X$ to $\pm X$ ($\vec{w} = 0 $).}
These transformations only change the sign in front of $X_i$, keeping its form. 
At the same time they only change the signs of the $Z_i Z_j$ terms, keeping them diagonal and hence stoquastic. 
The transformed $X_i X_j$ terms \eqref{Hx} will be stoquastic if and only if exactly one of the transformations of the $X$ at sites $i,j$ is a $Z$-flip. 

We can view the coefficient $z_i$ as a spin $s_i$ in the original AF Ising model: 
for $z_i = 1$, $X_i \rightarrow -X_i$, corresponding to a spin $s_i = 1$ in the original AF Ising model,
while for $z_i = 0$, $X_i \rightarrow X_i$, which we view it as the Ising spin $s_i=0$.
\begin{center}
	\includegraphics[width = \columnwidth]{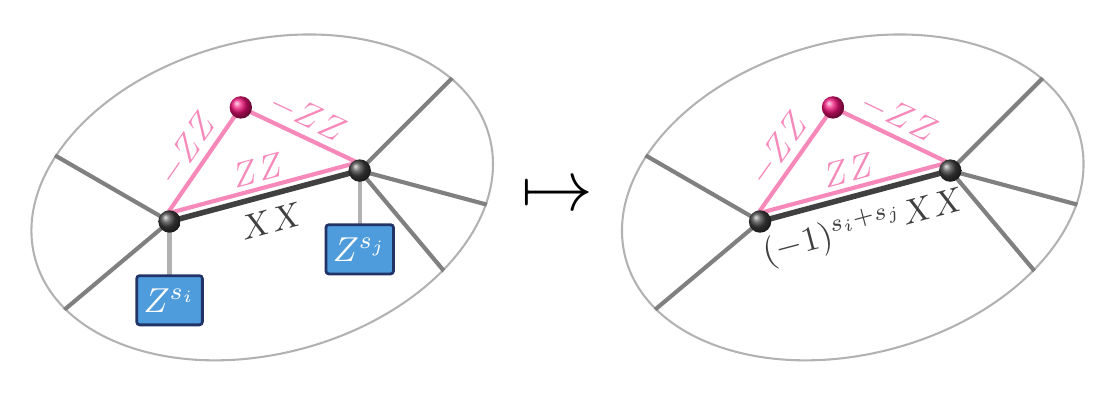}
\end{center}
Each such Clifford transformation thus corresponds to a particular state of the original AF Ising model as given by a spin configuration $\vec s \in \{0,1\}^v$. 
Whenever the transformations on neighbouring sites result in a stoquastic interaction $-X_i X_j$ in the transformed XYZ Hamiltonian, we have a $(0,1)$  or $(1,0)$  anti-ferromagnetic Ising edge with cost $0$. 
On the other hand, each non-stoquastic $X_i X_j$ term in the XYZ Hamiltonian has cost $1$, while the corresponding edge in the Ising model is $(0,0)$  or $(1,1)$  also with cost $1$.

What is the amount of sign easing we can hope to achieve? 
We have argued above that only diagonal transformations which map $X_i \mapsto \pm X_i$ potentially ease the sign problem since we designed the interactions so that a Hadamard transformation always incurs a larger cost than keeping an $X_i X_j$ term non-stoquastic. 
For those transformations, we have a one-to-one correspondence with the ground state of the original AF Ising model. 
Hence, the original AF Ising model ground state energy $\lambda_{\min}(H)$ is also the optimal number of non-stoquastic terms $X_iX_j$ which one can achieve via on-site orthogonal Clifford transformations, each adding an additional cost $1$ to the non-stoquasticity measure $\nu_1$.

In the yes-case we can therefore achieve non-stoquasticity
\begin{align}
	{\nu_1}(\textrm{{\em yes}}) \leq A, 
\end{align}
by choosing $\vec x,\vec w = 0$ and $(z_1, \ldots, z_v)^T =  \vec{s}_0$, the ground state of $H$. 

We now show that in the no-case, the non-stoquasticity measure will be at least
\begin{align}
	{\nu_1}(\textrm{{\em no}}) \geq B.
\end{align}

\paragraph{No-case: (Hadamard) transformations that map $X$ to $\pm Z$ ($\vec{w} \neq 0 $).}
We have designed the additional Hamiltonian term \eqref{Hanc} so that such transformations result in large non-stoquasticity. 
Specifically, we show that for any choice of $\vec{z}$, choosing $\vec{x} = \vec{w} = 0$ achieves the optimal non-stoquasticity in the orbit of orthogonal Clifford transformations.

It is sufficient to show that any Clifford transformation on an edge $(i,j)$ (and its ancilla qubit $\xi_{i,j}$) that is non-stoquastic for a given choice of $\vec{z}$ can only increase the non-stoquasticity. 

To begin with, note that choosing $x_i= 1$ results in $Z_i \mapsto -Z_i$, $X_i \mapsto X_i$, and choosing $w_i =1$ maps $X_i \mapsto Z_i$ and $Z_i \mapsto X_i$.
We obtain the following transformation rules of Pauli $Z_i$ and $X_i$, given choices of $x_i$ and $w_i$: 
\begin{center}
\begin{tabular}{>{\ } cc <{\ }|>{\ }rr <{\quad}}
\toprule
	$x_i$ & $w_i$ & $Z_i$ & $X_i$\\ \midrule
	0 & 0 & $Z_i$ & $X_i$ \\
	0 & 1 & $X_i$ & $Z_i$ \\
	1 & 0 & $-Z_i$ & $X_i$ \\
	1 & 1 & $-X_i$ & $Z_i$ \\\bottomrule
\end{tabular}
\end{center}

First, suppose that for an edge $(i,j)$, a Hadamard transformation is performed on qubit $i$, but not $j$ so that we have $w_i = 1$, $w_j = 0$. 
Then for some choice of $x_i,x_j$ the transformed edge is given by 
\begin{align}
	W_i (X_i X_j \pm C Z_i Z_j )W_i = Z_i X_j \pm C X_i Z_j , 
\end{align}
\begin{center}
	\includegraphics[width = \columnwidth]{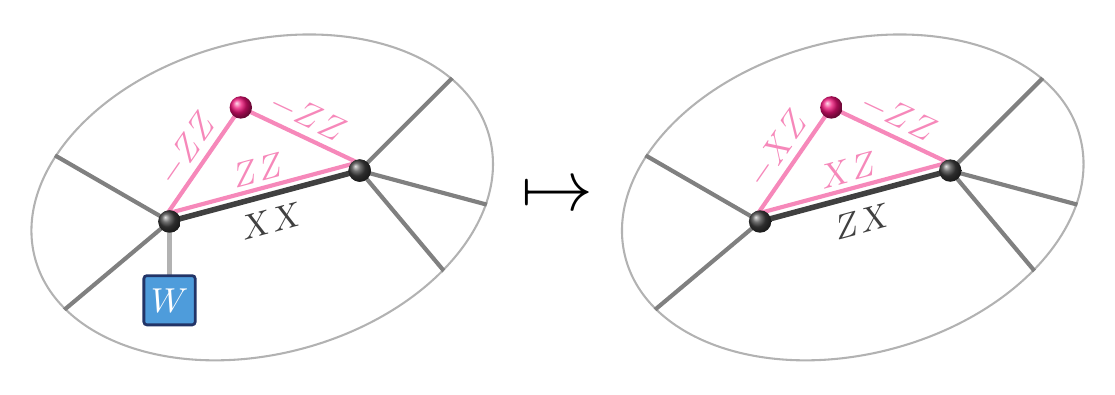}
\end{center}
and has non-stoquasticity at least $ ( C + 1)/(2\deg{G'})$.

Now, suppose that a Hadamard transformation is performed on both qubit $i$ and $j$ but not its ancilla qubit. 
Then the transformed term is given by 
\begin{equation}
\begin{split}
	W_i & W_j  X_i^{x_i} X_j^{x_j} X_{\xi_{i,j}}^{x_{\xi_{i,j}}} ( h_{i,j} + h_{i,j}^{(\xi)})X_i^{x_i} X_j^{x_j} X_{\xi_{i,j}}^{x_{\xi_{i,j}}}  W_i W_j \\
	 & = \pm Z_i Z_j + C \left( \pm X_i X_j \pm X_i Z_{\xi_{i,j}} \pm Z_{\xi_{i,j}}X_j \right), 
\end{split}
\end{equation}
\begin{center}
	\includegraphics[width = \columnwidth]{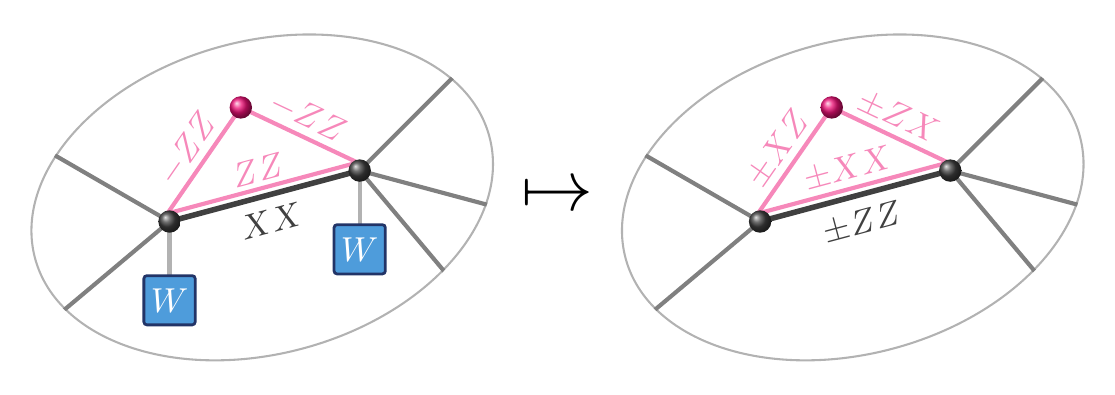}
\end{center}
with non-stoquasticity cost at least 
\begin{align}
	\nu_1\left( C ( \pm X_i Z_{\xi_{i,j}} \pm Z_{\xi_{i,j}}X_j)\right) =  2 C /(2\deg{G'}). 
\end{align}

Could the edge be possibly cured by performing a Hadamard transformation on the ancilla qubit as well? 
In this case, we get 
\begin{align}
\begin{split}
	W_i W_j & W_{\xi_{i,j}}( h_{i,j} + h_{i,j}^{(\xi)}) W_i W_jW_{\xi_{i,j}} \\
	& = Z_i Z_j + C \left(X_i X_j -  X_i X_{\xi_{i,j}} -  X_{\xi_{i,j}}X_j \right), 
\end{split}
\end{align}
\begin{center}
	\includegraphics[width = \columnwidth]{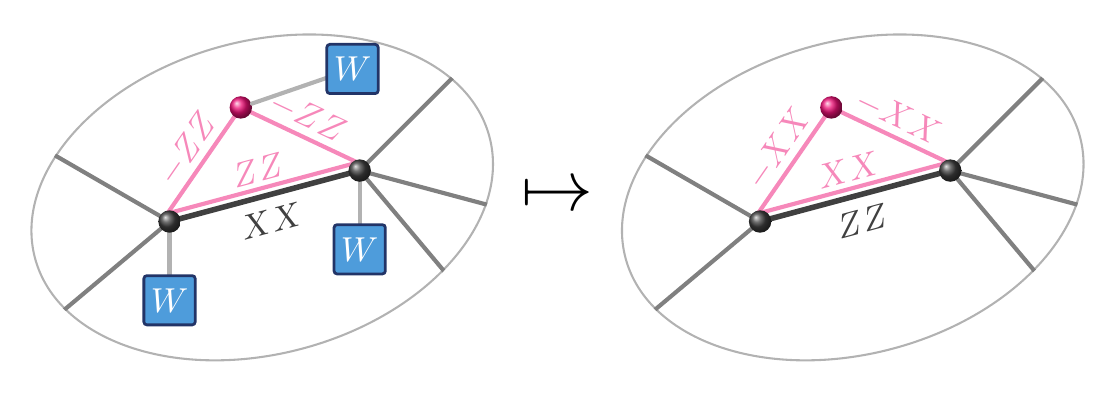}
\end{center}
with non-stoquasticity 
\begin{align}
	\nu_1(W_i W_j W_{\xi_{i,j}}( h_{i,j} + h_{i,j}^{(\xi)}) W_i W_jW_{\xi_{i,j}} ) =  C. 
	\label{eq:edge cost H flip} 
\end{align}
Because of the frustrated arrangement of the signs of the $ZZ $ terms, no local sign flip of those terms (achieved by choices of $x_i,x_j,x_{\xi_{i,j}} \neq 0$) can cure the sign problem of an ancillary triangle, leaving it lower bounded by \eqref{eq:edge cost H flip}.

On the other hand, the original cost incurred from local sign flips via $Z$-transformations, is given by 
\begin{align}
	\nu_1(X_i X_j) = 1 , 
	\label{eq: maximum edge cost}
\end{align}
which is always smaller than the cost incurred if additional $X$ or $W$ transformations are applied since we chose $C$ such that $C /(2\deg{G'})  = 1$. 
Therefore, in the no-case of the original AF Ising model the non-stoquasticity of $H'$ cannot be brought below 
\begin{align}
	{\nu_1}(\textrm{{\em no}}) \geq B, 
\end{align}
with the optimal choice achieved for $\vec x,\vec w = 0$ and $(z_1, \ldots, z_v)^T = \vec{s}_0$. 

\end{proof}

\begin{theorem}[\se\ under orthogonal transformations]
\label{thm:orthogonal hardness}
	\se\ is \np-complete for $2$-local Hamiltonians on an arbitrary graph, in particular, for XYZ Hamiltonians under on-site orthogonal transformations. 
\end{theorem}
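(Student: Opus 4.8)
The proof proceeds in the same spirit as Theorem~\ref{thm:clifford hardness}, the only new ingredient being that the single-qubit Clifford group is replaced by the full continuous group $O(2)$. Containment in \np\ is unchanged: given a polynomial-size description of the on-site transformation, the conjugated Hamiltonian is again of the form~\eqref{eq:strictly two-local h}, and Theorem~\ref{thm:approximating non-stoq arb graph} lets us approximate its non-stoquasticity to any inverse-polynomial accuracy, certifying the YES-case~\eqref{eq:se yes}. For hardness the plan is to reduce from promise-\maxcut\ using the \emph{same} gadget Hamiltonian $H'$ of Eq.~\eqref{eq:embedded maxcut}, with the penalty constant $C = \Theta(\deg(G))$ chosen slightly larger than before. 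The entire task is then to show that, even over all on-site orthogonal transformations, the non-stoquasticity of $H'$ is still minimised at the diagonal Clifford points $Z_1^{s_1}\cdots Z_v^{s_v}$, so that the minimum again reproduces the \maxcut\ thresholds $A$ and $B$.

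The first step is a structural reduction. Since $H'$ contains only $XX$ and $ZZ$ terms, and any real $O$ satisfies $OYO^T = \det(O)\,Y$ while mapping $X$ and $Z$ into the plane they span, conjugation by $O^{\otimes n}$ never produces $Y$-type operators and keeps $H'$ inside the family~\eqref{eq:strictly two-local h}. Consequently Lemmas~\ref{lem:non-stoquasticity 2-local H} and~\ref{lem:xz-non-stoquasticity} still govern the non-stoquasticity: every generated $X_iX_j$ term contributes $\max\{a_{i,j},0\}$ and every generated $X_iZ_j$ term at least $|x_{i,j}|/(2\deg(G'))$. I would then parametrise the action of each site transformation on the traceless symmetric sector by unit vectors, writing $O_a Z_a O_a^T = q_a Z_a + p_a X_a$ and $O_a X_a O_a^T = q_a' Z_a + p_a' X_a$ with $p_a^2+q_a^2 = 1$ and $(p_a,q_a)\perp(p_a',q_a')$. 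The Clifford points are exactly those with $p_a = 0$.

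With this parametrisation the analysis splits as in the Clifford proof. When $p_a = 0$ for every qubit, each frustrated triangle $C(Z_iZ_j - Z_iZ_{\xi_{i,j}} - Z_{\xi_{i,j}}Z_j)$ stays diagonal and costs nothing, while the edge becomes $\pm X_iX_j$; this is precisely the Clifford regime, and choosing the signs according to the \maxcut\ ground state attains $\nu_1 \le A$. The hard part, which I expect to be the main obstacle, is the converse quantitative statement for $p_a \neq 0$: the goal is to establish a continuous analogue of Lemma~\ref{lem:xz-non-stoquasticity} showing that a single rotated frustrated triangle has non-stoquasticity at least $C\,\kappa\,\max_{a\in\{i,j,\xi_{i,j}\}}|p_a|$ for an absolute constant $\kappa>0$, vanishing only at the Clifford points. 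The frustrated sign pattern $(+,-,-)$ is essential here: expanding the triangle in $\zeta_a = q_a Z_a + p_a X_a$ produces off-diagonal $X_iZ_j$ weight whose dominant, $C$-scaled coefficient is $\sim C p_i q_j$ and cannot be cancelled by any orthogonal choice of the three angles simultaneously, so the $XZ$ lower bound forces a contribution growing \emph{linearly} in $\max_a|p_a|$.

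Finally I would weigh this penalty against the only possible benefit of rotating, namely a reduction of the non-stoquastic $X_iX_j$ edge cost. Since $p_a' = \pm q_a$, the $XX$ coefficient of a rotated edge is $\pm q_iq_j$, so the saving relative to the Clifford value is $1 - q_iq_j = O(p_i^2 + p_j^2)$, second order in the $p_a$ (and a genuine sign flip of the edge, the thing that actually helps, requires $q_iq_j$ to change sign and hence some $p_a$ to be large, which is already heavily penalised). Choosing $C$ large enough that $C\kappa$ exceeds the maximal per-edge saving of $1$ therefore makes the linear triangle penalty dominate the quadratic edge saving, so any non-stoquasticity-minimising transformation must have all $p_a = 0$. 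This collapses the continuous problem onto the diagonal Clifford transformations already analysed in Theorem~\ref{thm:clifford hardness}, whence $\nu_1 \ge B$ in the NO-case. Together with the \np-hardness of promise-\maxcut\ on bounded-degree graphs, this establishes \np-completeness.
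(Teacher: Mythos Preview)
Your overall strategy—same gadget, show the continuous orthogonal optimum still lies at diagonal Clifford points—is the right one and matches the paper. But the core quantitative step is wrong, and the error is exactly where you flag the ``main obstacle.''

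You analyse the edge $X_iX_j$ and the penalty triangle $C(Z_iZ_j - Z_iZ_{\xi}-Z_jZ_\xi)$ separately, claiming the edge saving is $1-q_iq_j=O(p_i^2+p_j^2)$ while the triangle penalty is linear, $\gtrsim C\kappa\max_a|p_a|$. This decomposition is not legitimate because the non-stoquasticity is taken \emph{after} summing all contributions to a given Pauli word. The rotated triangle term $CZ_iZ_j$ itself produces an $X_iX_j$ contribution $Cp_ip_j$, so the total $X_iX_j$ coefficient is $q_iq_j+Cp_ip_j$. Choosing $p_i,p_j$ of opposite sign with $|p_i|\sim|p_j|\sim 1/\sqrt{C}$ drives this coefficient to zero: the entire edge cost of $1$ is wiped out, not just an $O(p^2)$ fraction. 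Your ``linear penalty beats quadratic saving'' picture therefore collapses precisely in the dangerous regime. Correspondingly, the $XZ$ penalty at $|p_a|\sim 1/\sqrt{C}$ is only $\sim \sqrt{C}/\deg(G')$, so your choice $C=\Theta(\deg G)$ is too small; one genuinely needs $C=\Theta\bigl((\deg G')^2\bigr)$, as the paper takes.

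The paper's proof confronts this head-on: it parametrises each site by a rotation angle and splits into three angular sectors. The delicate sector is exactly the one above (opposite-sign small angles), and there the argument is \emph{not} perturbative. Instead one fixes the residual $X_iX_j$ cost at some value $\bar\epsilon\in[0,1]$, shows via a constrained-minimisation lemma that achieving $\bar\epsilon$ forces $|\theta_i|+|\theta_j|\geq 2\arctan\sqrt{(1-\bar\epsilon)/(C+\bar\epsilon)}$, and then verifies that the resulting $XZ$ penalty $(C+1)\sin(|\theta_i|+|\theta_j|)/(2\deg G')$ plus $\bar\epsilon$ is always at least $1$ when $C=(2\deg G')^2$. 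To repair your proof you would need an argument of this kind; the first-order balance you sketch does not suffice.
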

\begin{proof}
	We proceed analogously to the proof for orthogonal Clifford transformations, showing that in the yes-case, there exists a product orthogonal transformation $O = O_1 \cdots O_n$ such that $\nu_1(O H' O^T) \leq A$, while in the no-case there exists no such transformation $P$ with $\nu_1(P H' P^T) \leq B$. 

	The yes-case is clear: In this case, the energy of $\vec s$ is below $A$. 
	Then by our construction, the sign problem can be eased below $A$ with the orthogonal transformation 
	\begin{align}
		O = \prod_{i \in V} Z_i^{s_i} . 
		\label{eq:optimal Z trafo}
	\end{align}

	We now need to show that in the no-case, any orthogonal transformation incurs non-stoquasticity above $B$.
	We first remark that the orthogonal group $O(2)$ decomposes into two sectors with determinant $\pm 1$, respectively. 
	Therefore, any $2 \times 2$ orthogonal matrix can be written as
	\begin{align}
			O_a(\theta)  =
		\begin{pmatrix} \cos \theta & -a \sin \theta\\ \sin \theta & a \cos \theta
		\end{pmatrix} , 
	\end{align}
	which for $a = \det(O_a(\theta)) = -1$ is a reflection and for $a = \det(O_a(\theta)) = +1 $ a rotation by an angle $\theta$.
	Note that the following composition laws hold 
	\begin{align}
	 	O_{-1}(\theta)O_{1}(\phi)&  = O_{-1}(\theta- \phi), \\
	 	O_{1}(\theta)O_{-1}(\phi) & = O_{-1}(\theta + \phi), \\
	 	O_1(\theta)O_1(\phi)&  = O_{1}(\theta +  \phi), \\
	 	O_{-1}(\theta)O_{-1}(\phi)&  = O_{1}(\theta - \phi). 
	\end{align} 
	 
	Now observe three facts: 
	First, any reflection by an angle $\theta$ can be written as a product of a reflection across the $X$-axis and a rotation as
	\begin{align}
		\begin{pmatrix} \cos \theta &  \sin \theta\\ \sin \theta & - \cos \theta
		\end{pmatrix} = \begin{pmatrix} \cos \theta &  - \sin \theta\\ \sin \theta & \cos \theta
		\end{pmatrix} Z = R(\theta) Z ,   
	\end{align}
	where $R(\theta)$ is the rotation by an angle $\theta$. 
	Second, for any Hermitian matrix $H$ and any angle $\theta$, it holds that 
	\begin{align}
		O(\theta) H O(\theta)^T = O(\theta + \pi) H O(\theta + \pi)^T , 
	\end{align}
	so that it suffices to restrict to angles $\theta \in [-\pi/4, 3\pi/4]$ in an interval of length $\pi$. 
	Third, a rotation by an angle $\pi/2$ can be decomposed into two reflections as 
	\begin{align}
		R(\pi/2) = X Z .  
	\end{align}
	Taken together, these facts imply that an arbitrary single-qubit orthogonal transformation is given by 
	\begin{align}
		\label{eq:orthogonal standard form}
		O(\theta,z,p) =  R\left(\theta + (\pi/2)^p\right) \cdot Z^z = R(\theta) \cdot X^p Z^{p + z } , 
	\end{align}
	where the rotation angle is given by $\theta \in [ -\pi/4, \pi/4]$, $z \in \{ 0,1\}$ fixes whether or not a $Z$-flip is applied, and $p \in \{0,1\}$ mods out a rotation by an angle $\pi/2$. 
	Now define $O(\vec\theta,\vec z, \vec p ) = \prod_i O_i(\theta_i,z_i,p_i)$.

	We now need to show that, in the no-case, for any choice of $\vec \theta \in [-\pi/4,\pi/4]^n$, $\vec z, \vec p \in \{0,1\}^n$ it holds that 
	\begin{align}
		\nu_1 \left(O(\vec \theta, \vec z, \vec p )H'O(\vec \theta, \vec z, \vec p )^T\right) \geq b . 
	\end{align}
	To complete the proof, we use that the action of $R(\theta)$ on Pauli-$X$ and $Z$ matrices is given by 
	\begin{align}
		R(\theta) Z R(\theta )^T & = \cos(2\theta) Z + \sin(2\theta) X,
		\label{eq:orthogonal effect Z}\\
		R(\theta) X R(\theta )^T & = \cos(2\theta) X - \sin(2\theta) Z . 
		\label{eq:orthogonal effect X}
	\end{align}
	In Appendix \ref{sec:orthogonal calculations}, we show that given any choice of $\vec z$ and $\vec p$, no choice of $\vec \theta$ can decrease the non-stoquasticity of an edge with non-zero non-stoquasticity below $1$.
	That is, we analyze transformations of the following form: 
	\begin{center}
	\includegraphics[width = .7\columnwidth]{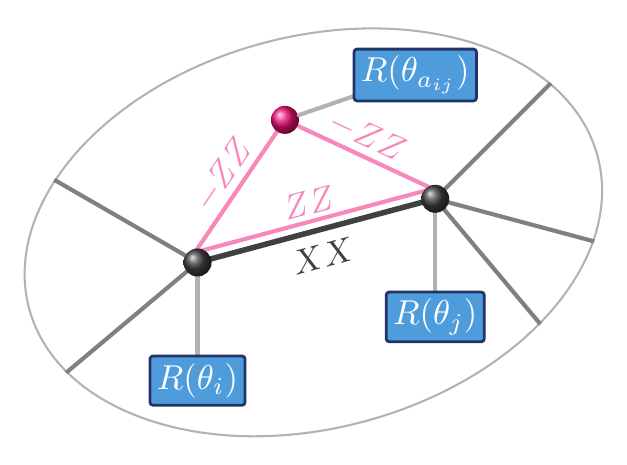} 
	\end{center}
	We do this by using the standard form \eqref{eq:orthogonal standard form} of the local transformations in terms of $X$, $Z$ and restricted rotation matrices with angles in $[- \pi/4, \pi/4 ] $. 
	We split the proof into three parts, analyzing three different (continous) regions for rotation angle choices.
	The only difference in the construction when compared to the Clifford-case is that here we choose $C =  (2 \deg G')^2$. 
	This completes the proof. 
\end{proof}

Notice that since \maxcut\ is \np-hard already on subgraphs of the double-layered square lattice \cite{barahona_computational_1982}, which has constant degree, there are hard instances of the \se\ problem on low-dimensional lattices with constant coupling strength. 

We remark that one can easily extend the proofs of Theorems~\ref{thm:clifford hardness} and \ref{thm:orthogonal hardness} for different non-stoquasticity measures $\nu_p$ with $1< p < \infty$. 
To see this, note that the decision problem for $\nu_p(H)$ is equivalent to the problem for $\nu_p(H)^p$. 

For terms $x_{i,j}X_i Z_j$ we can then use the trivial bound 
\begin{align}
\begin{split}		
	\sum_{\lambda \in \{0,1\}^k} \max\bigg\{ \sum_{j= 1}^k (-1)^{\lambda_j} x_{i,j} ,0 \bigg\}^p \\
	\geq 2^{p (k-\deg(i))}  \sum_j|x_j|^p  
	\label{eq:nonstoq arbitrary xz vertex p norm}
\end{split}
\end{align}
instead of Lemma~\ref{lem:xz-non-stoquasticity}. 
Thus, every term $x_{i,j}X_i Z_j$ contributes at least $2^{ -p\deg G'}  |x_{i,j}|^p$ to $\nu_p^p$, while a term $a_{i,j}X_iX_j$ contributes
\begin{align}
	\nu_p(a_{i,j}X_i X_j)^p  = (\max\{a_{i,j},0\})^p , 
\end{align}
to the total non-stoquasticity of $H'$. 

For general $\ell_p$-non-stoquasticity measures $\nu_p$ one therefore need merely choose $C = 2^{\deg (G')} = 4^{\deg (G)}$ to prove Theorem~\ref{thm:clifford hardness} and $C = 2^{2 \deg (G')} = 16^{\deg (G) }$ for Theorem~\ref{thm:orthogonal hardness}.

\ifarxiv
\section*{Acknowledgements}

%%% -------------------------------------------------------------------
%%% ------ Bibliography -----------------------------------------------
%%% -------------------------------------------------------------------
\bibliographystyle{./myapsrev4-1}
\bibliography{signs}
\fi

%%% -------------------------------------------------------------------
%%% ------ APPENDIX -------------------------
%%% -------------------------------------------------------------------

\widetext
\begin{appendix}
% \ifjournal
% \begin{bibunit}
% \fi

%!TEX root = qmc_approx_stoq_paper.tex

%%% -------------------------------------------------------------------
%%% ------ Section: Conjugate Gradient details -------------------------
%%% -------------------------------------------------------------------
\section{Conjugate gradient descent for sign easing}
\label{app:conjugate gradient}

In this appendix, we provide details on the numerical implementation \cite{optimization_numerics} of the conjugate gradient descent algorithm for minimizing the non-stoquasticity $\nu_1$ over orthogonal circuits. 
In this work, we focus on the ansatz class of translation-invariant on-site orthogonal transformations, but other classes such as constant-depth circuits can be implemented analogously. 

\subsection{Translation-invariant formulation of the non-stoquasticity measure}

We begin by deriving a simple expression for the non-stoquasticity measure for translation-invariant, one-dimensional, nearest-neighbour Hamiltonians.  
Our formulation is based on the observation that by translation-invariance, the measure $\nu_1(H)$ only depends on the local Hamiltonian term $h$ and is therefore independent of the system size. 
Since Hamiltonian terms with overlapping support can contribute to the same matrix element in the global Hamiltonian matrix it is thus sufficient to optimize certain sums of matrix elements of $h$ rather than make $h$ itself stoquastic -- a condition required only by the stronger notion of term-wise sxtoquasticity. 

Recall that we consider real nearest-neighbour Hamiltonians with closed boundary conditions on a line of $n$ local systems with dimension $d$
\begin{align}
	H = \sum_{i =1 }^n T_i(h) , 
\end{align}
where $h \in \mb R^{d^2 \times d^2}$ is the local term and we use the notation $T_i(h) = \id^{\otimes (i -1)} \otimes h\otimes \id^{\otimes (n - i - 1)}$. 
Closed boundary conditions identify $n+1 = 1$. 

Specifically, we can calculate 
\begin{align}
\label{eq:effective non-stoquasticity}
 \nu_1(H) = n 2^{n - 3} \sum_{\substack{i,j,k;l,m,n: \\k\neq l, m = n}} \max \left\{ 0,\left(h \otimes \id + \id \otimes h  \right)_{i,k,m;j,l,n}\right\} \equiv n 2^{n-3} \tilde \nu_1(h). 
\end{align}
This reduces the problem of minimizing $\nu_1(H)$ over on-site orthogonal bases to minimizing $\tilde \nu_1(h)$ over such bases. 
To derive Eq.~\eqref{eq:effective non-stoquasticity} we re-express the global non-stoquasticity measure as follows:  
\begin{align}
	\nu_1(H)&  = \sum_{\substack{(i_1, \ldots, i_n)\neq\\(j_1, \ldots,j_n)}}\max\left\{ 0, \bra{i_1, \ldots, i_n} \sum_{l=1}^n T_l(h) \ket{j_1, \ldots, j_n}\right\}\\
	 & =  \sum_{p = 1}^n \sum_{\substack{i_1, \ldots, i_n,\\j_p, j_{p+1}: \\ i_{p+1}\neq j_{p+1} }}
	 \max\left\{ 0, \bra{i_1, \ldots, i_n} (T_p(h) + T_{p+1}(h) )
	 \ket{i_1, \ldots, i_{p-1},j_p,  j_{p + 1},i_{p+2}, \ldots, i_n}\right\}\\
 &= 2^{n -3} \sum_{p = 1}^n \sum_{\substack{i_p, \ldots, i_{p + 2},\\j_p,j_{p+1}: \\
 i_{p+1}\neq j_{p+1}
 }}
 \max\left\{ 0, \bra{i_p, \ldots, i_{p + 2}} 
 (h \otimes \id + \id \otimes h)
 \ket{j_p, j_{p + 1},i_{p+2}}\right\}
 \\
  &= n 2^{n -3} \sum_{\substack{i_1, \ldots, i_{3},\\j_1, j_{2}: \\
 i_{2}\neq j_{2} }}\max\left\{ 0, \bra{i_1, i_2, i_3} (h \otimes \id + \id \otimes h ) \ket{j_1, j_2, i_3} \right\} . 
 \end{align}
In the first step, we have used that the condition ($i_1, \ldots, i_n)\neq (j_1, \ldots,j_n)$ implies that at least one of the indices differs. 
Summing over $p $, we can let this index be the $(p +1)^\text{st}$ one. 
To avoid double-counting, we then divide the sum over all strings which differ on at most two nearest neighbours into a sum over all strings which potentially differ on two nearest neighbours \emph{left} of the $(p+1)^\text{st}$ index. 
Patches which differ on more than two nearest neighbours vanish for nearest-neighbour Hamiltonians. 
In the second step, we use that all terms with support left of the $p^\text{th}$ qubit vanish, since the basis strings differ on the $(p +1)^\text{st}$ qubit.
In the last step, we use the translation-invariance again to account for the sum over $p$, incurring a factor $n$.

\subsection{Gradient of the objective function}

As an ansatz class we choose on-site orthogonal transformations in $O(d)$, where $d$ is the dimension of a local constituent of the system. 
More precisely, we consider transformations of the type (see Eq.~\eqref{eq:local basis choice}) 
\begin{align}
	H = \sum_{i = 1}^{n} T_i(h) \mapsto O^{\otimes n} H (O^T)^{\otimes n} ,
\end{align}
which locally amounts to 
\begin{align}
	h \mapsto h(O) \coloneqq (O \otimes O) h (O^T \otimes O^T) . 
\end{align}

The key ingredient for the conjugate gradient descent algorithm is the derivative of the objective function $ \tilde \nu_1(h(O))$ with respect to the orthogonal matrix $O$. 
The gradient is given by 
\begin{align}
	\label{eq:global gradient}
	\frac{\partial}{\partial O} \tilde\nu_1(h(O))  = \sum_{i,j} \frac{\partial \tilde \nu_1}{\partial h(i|j)} \cdot \frac{\partial h(i|j)}{\partial O} .
\end{align}
We can further expand the terms of the global gradient \eqref{eq:global gradient}: 
in particular, we can express the gradient of the effective local terms as a conjugation $ h( O) = \mathcal C( O) h \mathcal C^\dagger( O)$ of the local term $h$ by the orthorgonal circuit $\mathcal C( O)  = O \otimes O $. 
We will now  derive expressions for the measure and the gradients that the algorithm has to evaluate.

\paragraph{The objective function gradient}

We first determine the gradient of the objective function. 
Since we will also make use of different measures $\nu_p$ as defined in Eq.~\eqref{eq:nu_p}, we write the objective function for different $p \geq 1$ as 
\begin{align}
	\tilde\nu_p^p(h) = \sum_{\substack{i,j,k;l,m,n: \\k\neq l, m = n}} \max \left\{ \left(h \otimes \id + \id \otimes h  \right)_{i,k,m;j,l,n}  ,0 \right\}^p. 
\end{align}
Then, the gradient of the objective function is given by 
\begin{align}
 \left. \frac{\partial \tilde \nu_p^p}{\partial h(x|y)}\right \vert_{h(0)} =p  \sum_{\substack{i,j,k;l,m,n: \\k\neq l, m = n}}\left( \ket{x}\bra{y} \otimes \id + \id \otimes \ket{x} \bra {y} \right)_{i,k,m;j,l,n}   \max \left\{ \left(h \otimes \id + \id \otimes h  \right)_{i,k,m;j,l,n}  ,0 \right\}^{p-1}
\end{align}

\paragraph{Gradient of the transformed Hamiltonian}
We can expand the gradient of the transformed Hamiltonian term by the orthogonal matrix as 
\begin{equation}
	\frac{\partial h(i|j)}{\partial O} 
		= 
		\sum_{m,n} 
		\frac{
			\partial\operatorname{adj}_h(\mathcal C)(i | j)
		}{
			\partial \mathcal C(m | n)
		}
		\frac{
			\partial\mathcal C(m|n)
		}{
			\partial O ,  
		}
\end{equation}
where by $\operatorname{adj}_h(\mathcal C) $ we denote the adjunction map $h \mapsto \mc C h \mc C^T$. 
The derivative of the adjoint action of $\mc C$ on $h$ is given by 
\begin{align}
	\frac{
		\partial \operatorname{adj}_h(\mathcal C)(i | j)
	}{
		\partial \mathcal C(m|n)
	} & = 
		\sum_{k,l}\frac\partial{\partial \mathcal C(m|n)} \mathcal C(i | k) h(k|l) \mathcal C(j|l) \\
		&\hspace{-2cm}= \sum_{k,l}\left[\delta_{m,i} \delta_{n,k} h(k|l) \mathcal C(j|l) + \mathcal C(i|k)h(k|l) \delta_{m,j}\delta_{n,l}\right] \\
		&\hspace{-2cm}= \braket m i \bra j\mathcal C h^T\ket n + \braket m j \bra i\mathcal C h \ket n 
\end{align}

From this expression, we can directly read off its matrix form 
\begin{align}
	\frac{
		\partial \operatorname{ad}_h(\mathcal C)(i | j)
	}{
		\partial \mathcal C
	}	= \ket i \bra j \mathcal C h^T + \ket j \bra i  \mathcal C h. 
\end{align}

It remains to compute the gradient of the circuit with respect to the orthogonal matrix. 
We obtain with $\bra m = \bra {m_1}\otimes \bra{m_2}$
\begin{align}
	\frac{\partial\, \mathcal C(m|n) } {\partial O(k|l)} = \sum_{i = 1}^{2}  \delta_{m_i,k} \delta_{n_i,l} O(m_1|n_1)O(m_2|n_2) = \sum_{i = 1}^{2} \braket{m_{i}}{k} \braket{l}{n_i} O(m_1|n_1) O(m_2|n_2),  
\end{align}
which expressed in matrix form is then given by 
\begin{align}		
	\frac{\partial\, {\mathcal C}_{l_e}(m|n) } {\partial O} 	=\ket{n_1}\bra{m_1} \bra{m_2} O \ket{n_2}  + \ket{n_2}\bra{m_2} \bra{m_1} O \ket{n_1} . 
\end{align}

\subsection{Algorithmic procedure}
\label{sec:algorithmic procedure}

We start our conjugate gradient algorithm either at the identity matrix (with or without a small perturbation) or a Haar-randomly chosen orthogonal matrix as indicated at the respective places in the main text. 

Since the minimization of the $\tilde \nu_1$-measure is numerically not well behaved, we improve the performance of the algorithm in several ways. 
First, we observe that the measure $\tilde \nu_2$ given by the Frobenius norm of the non-stoquastic part of the Hamiltonian is numerically much better behaved. 
This is due to the $\ell_2$-norm being continuously differentiable while the $\ell_1$-norm is only subdifferentible. In particular, at its minima the gradient of the $\ell_1$-norm is discontinuous and never vanishes. 
For this reason, rather than optimizing the $\ell_1$-norm of the non-stoquastic part, we optimize a smooth approximation thereof as given by~\cite{schmidt_fast_2007}
\begin{align}
	\label{eq:smooth norm}
	\nu_{1,\alpha}(H) \coloneqq \sum_{i\neq j} f_\alpha\left( H_{i,j} \right) , 
\end{align}
with $f_\alpha(x) =  x + \alpha^{-1} \log(1 + \exp(-\alpha x))$. 

To achieve the best possible performance, we then carry out a hybrid approach: 
First, we pre-optimize by minimizing $\tilde \nu_2$ using our conjugate gradient descent algorithm. 
Second, starting at the minimizer obtained in the Frobenius norm optimization, we minimize the smooth non-stoquasticity measure $\tilde \nu_{1,\alpha}$. We choose the value $\alpha = 50$, $\alpha = 100$ and $\alpha = 40$, for the random stoquastic Hamiltonians, the \jmodel-model, and the frustrated ladder model, respectively. 
We then compare the result to a direct minimization of the non-stoquasticity $\tilde \nu_{1,\alpha}$ starting from the original initial point and choose whichever of the minimizations 
performed best.
The exact details of our optimization algorithms can be found at~Ref.~\cite{optimization_numerics} together with code to reproduce the figures shown here. 
Our code framework can be easily adapted for optimization of other Hamiltonian models and more general circuit architectures. 

%%% -------------------------------------------------------------------
%%% ------ Section: Calculations for the orthogonal trafos -------------------------
%%% -------------------------------------------

\section{Orthogonal transformations of the penalty terms (proof of Theorem~\ref{thm:orthogonal hardness})}
\label{sec:orthogonal calculations}

\begin{proof}[Proof of Theorem~\ref{thm:orthogonal hardness} (continued)]

As in the proof for orthogonal Clifford transformations, we will show that for any given choice of $Z$-transformations one cannot further decrease the non-stoquasticity by exploiting the additional freedom offered by the full orthogonal group. 
Above, we have argued that applying an arbitrary orthogonal transformation at a single site can be reduced to applying $R(\theta) X^p Z^{p + z}$ with $\theta \in [- \pi/4,\pi/4]$ and $z,p \in \{0,1\}$. 
We will now show that for any choice of $\vec z$ and $\vec p$, a rotation by angles $\vec \theta \in [- \pi/4,\pi/4]^n$ cannot decrease the non-stoquasticity any further. 

Analogously to the proof for Clifford-transformations, we discuss all possible transformations by dividing them into different cases. In each case the non-stoquasticity of an uncured edge $(i,j)$ and its ancilla qubit $\xi_{i,j}$ cannot be eased below its previous value of $1$. 
The additional difficulty we encouter here is that the orthogonal group is continuous as opposed to the orthogonal Clifford group, which is a discrete and rather `small' group. 

Given a choice of $\vec z$, consider an edge $(i,j)$ with a non-trivial contribution to $\nu_1$ and its corresponding ancilla qubit $\xi_{i,j}$.
We begin, supposing that $p_i = p_j = p_{\xi_{i,j}} = 0 $ so that the $X$-flips act trivially on all three qubits. 
We now analyze the effect of the remaining rotations $R(\theta)$ on each of the qubits. 
\begin{center}
	\includegraphics{fig_orthogonal_trafo.pdf}
\end{center}

Specifically, we apply rotations with angles $\theta_i/2$, $\theta_j/2$, $\theta_{\xi_{i,j}}/2$ with  $\theta_i, \theta_j, \theta_{\xi_{i,j}} \in [-\pi/2, \pi/2]$ to the three qubits.
Note that we consider rotations by half-angles $\theta \rightarrow \theta/2$ while at the same time doubling the interval $[-\pi/4, \pi/4] \rightarrow [-\pi/2, \pi/2]$ to ease notation later in the proof. 
The effect of rotations on each vertex of an edge $(i,j)$ is given by 
\begin{equation}
\begin{split}
X_i X_j & + C Z_i Z_j \mapsto \\& [C \cos(\theta_i) \cos(\theta_j)- \sin(\theta_i) \sin(\theta_j)] Z_i Z_j \\
& + [C \cos(\theta_i) \sin(\theta_j) -  \sin(\theta_i) \cos(\theta_j)] Z_i X_j \\
& + [C \sin(\theta_i) \cos(\theta_j) - \cos(\theta_i) \sin(\theta_j)] X_i Z_j \\
& + [C \sin(\theta_i) \sin(\theta_j) + \cos(\theta_i) \cos(\theta_j)] X_i X_j ,  
\end{split}
\end{equation}
and likewise for the edges $(i,\xi_{i,j})$ and $(j,\xi_{i,j})$. 
The non-stoquasticity of the transformed  Hamiltonian terms corresponding to the three qubits is then given by  
\begin{align}
	\nu_1
	\bigg( R_i&(\theta_i/2)  R_j(\theta_j/2) R_{\xi_{i,j}}(\theta_{\xi_{i,j}}/2) 
		\left(X_i X_j  + C ( Z_i Z_j  - Z_i Z_{\xi_{i,j}}- Z_j Z_{\xi_{i,j}}) \right)
	R_i(\theta_i/2)^T R_j(\theta_j/2)^T R_{\xi_{i,j}}(\theta_{\xi_{i,j}}/2)^T \bigg) \\
	\geq & \, \max \left\{C \sin(\theta_i) \sin(\theta_j) + \cos(\theta_i) \cos(\theta_j), 0 \right\} \label{eq:i}\\
	& + (2\deg{G'})^{-1} \cdot
	\bigg( 
		|C \sin(\theta_i) \cos(\theta_j) - \cos(\theta_i) \sin(\theta_j)|
		+ |C \sin(\theta_j) \cos(\theta_i) - \cos(\theta_j) \sin(\theta_i)| 
	\bigg) \label{eq:ii}  \\
	& + \bigg(\max \left\{-C \sin(\theta_i) \sin(\theta_{\xi_{i,j}}), 0 \right\}
	 + \max \left\{-C \sin(\theta_j) \sin(\theta_{\xi_{i,j}}), 0 \right\} 
	\bigg) \label{eq:iii}\\
	& + (2\deg{G'})^{-1} \cdot C 
	\bigg( 
		|\cos(\theta_i) \sin(\theta_{\xi_{i,j}})| 
		+ |\sin(\theta_i) \cos(\theta_{\xi_{i,j}})| 
		+ |\cos(\theta_j) \sin(\theta_{\xi_{i,j}})| 
		+ |\sin(\theta_j) \cos(\theta_{\xi_{i,j}})| 
	\bigg) \label{eq:iv} . 
\end{align}
Note that the terms \eqref{eq:i} and \eqref{eq:iii} stem from the $XX$ interactions with a positive sign and therefore depend on the signs of $\sin$ terms. 
Conversely, the terms \eqref{eq:ii} and \eqref{eq:iv} stem from the $XZ$ interactions and therefore involve absolute values. 

\begin{figure}
	\centering
	\includegraphics{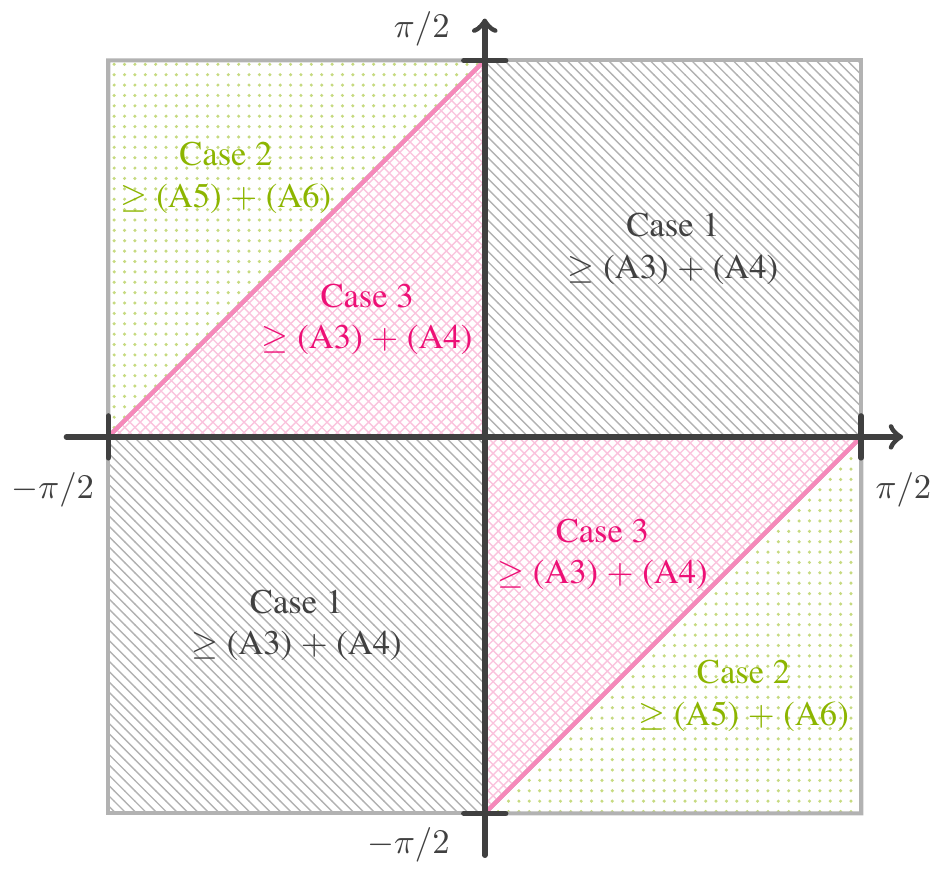}
	\caption{We divide the derivation of a lower bound on the non-stoquasticity of an edge $(i,j)$ and its ancilla qubit $\xi_{i,j}$ on which we apply rotations $R_i(\theta_i/2) R_j(\theta_j/2) R_{\xi_{i,j}}(\theta_{\xi_{i,j}}/2) $ with $\theta_i,\theta_j,\theta_{\xi_{i,j}} \in [\pi/2,\pi/2]$ into three distinct cases.
	In each case we use different terms of the expression \eqref{eq:i}-\eqref{eq:iv} to lower bound the non-stoquasticity.  
	\label{fig:cases orthogonal proof} 
	}	
\end{figure}

We divide the allowed rotations into different sectors corresponding to the different combinations of the signs of $\sin(\theta_i)$ and $\sin(\theta_j)$ as shown in Fig.~\ref{fig:cases orthogonal proof}.
Which of the terms in \eqref{eq:i} and \eqref{eq:iii} are non-trivial precisely depends on these combinations. 
We divide the cases as follows:
 first, the sectors in which $\sign(\theta_i) = \sign(\theta_j)$. 
 Second, the sectors in which $\sign(\theta_i) = - \sign(\theta_j)$ and $|\theta_i | + |\theta_j | \leq \pi/2$.
Third, the sectors in which $\sign(\theta_i) = - \sign(\theta_j)$ and $|\theta_i | + |\theta_j | \geq \pi/2$.
Taken together, the three cases cover the entire range of allowed angles. 
Moreover, in all cases we allow arbitrary choices of $\theta_{\xi_{i,j}} \in [-\pi/2,\pi/2]$.
We now proceed to lower-bound the non-stoquasticity of the Hamiltonian terms acting on the three qubits, where in each case we will use different terms of Eqs.~\eqref{eq:i}-\eqref{eq:iv}.

We first discuss the case in which $ \sign(\theta_i) = \sign(\theta_j)$. 
In this case, it suffices to consider the terms \eqref{eq:i} and \eqref{eq:ii}. 
Observe that in this case both $C \sin(\theta_i) \sin(\theta_j) \ge 0 $ and $\cos(\theta_i)\cos(\theta_j) \geq 0$. 
Moreover, to our choice of $C$ and noting that 
$(|C \sin(\theta_i) \cos(\theta_j) - \cos(\theta_i) \sin(\theta_j)|
+ |C \sin(\theta_j) \cos(\theta_i) - \cos(\theta_j) \sin(\theta_i)| )\geq C |\sin(\theta_i - \theta_j)| $, we obtain 
\begin{align}
	\eqref{eq:i} + \eqref{eq:ii} \geq  \cos(\theta_i -  \theta_j )
	+ C /(2 \deg{G'}) \cdot |\sin(\theta_i -  \theta_j )| \geq  1 .
	\label{eq:bound equal signs}
\end{align}

Second, we discuss the case in which $\sign(\theta_i) = - \sign(\theta_j)$ and $\pi/2 < |\theta_i| + |\theta_j| \leq \pi$.
In this case, we consider the terms \eqref{eq:iii} and \eqref{eq:iv}: 
\begin{align}
	\eqref{eq:iii} +  \eqref{eq:iv}  \geq  & \,   C /(2 \deg{G'}) \cdot  \bigg( |\cos(\theta_{\xi_{i,j}})| ( \sin(|\theta_i|) + \sin(|\theta_j|)) \\
	& +  |\sin(\theta_{\xi_{i,j}})| \left( 2 \deg(G')  \min \{ \sin(|\theta_i|), \sin(|\theta_j|)\} +  |\cos(\theta_i)| + |\cos(\theta_j)| \right) \bigg) \\
	& \geq 1 . 
\end{align}
Here, we have used that 
\begin{align}
	\sin(|\theta_i|) + \sin(|\theta_j|) = 2 \sin \left( \frac{ | x | + | y | } { 2} \right) \cos \left( \frac{ | x | - | y | } { 2} \right) \geq 1, 
\end{align}
for $\pi/2 \leq |\theta_i| + |\theta_j| \leq \pi$ so that $|x| - |y| \leq \pi/2$ and the definition of $C =  (2\deg{G'})^2 $. 

\begin{figure}[t]
	\centering
	\includegraphics{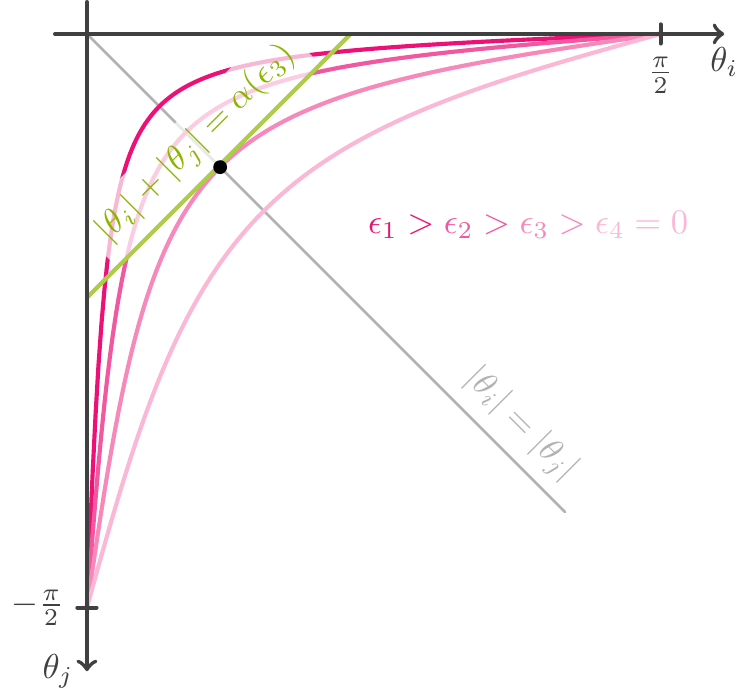}
	\caption{Illustration of the lower bound in case 3: Every $\theta_i$ and $\theta_j$ such that $|\theta_i| + |\theta_j| \leq \pi/2$ defines a contour line $\epsilon(\theta_i, \theta_j) = \epsilon_1, \epsilon_2, \epsilon_3, \epsilon_4 = 0$ (shades of pink). 
	Given $\theta_i, \theta_j$ and defining $\overline \epsilon = \epsilon(\theta_i, \theta_j)$, Lemma ~\ref{lem:minimal value} implies the lower bound $|\theta_i| + |\theta_j| \geq \alpha(\overline \epsilon)$ as defined in Eq.~\eqref{def:alpha(epsilon)}. 
	We then obtain $\eqref{eq:i} + \eqref{eq:ii} \ge  \overline \epsilon + (C+1) \sin(\alpha(\overline \epsilon)) \geq 1$. 
	\label{fig:minimize arctan}
	}
\end{figure}

Finally, we have the remaining case $\sign(\theta_i)= -  \sign(\theta_j) $ and $|\theta_i|+ |\theta_j| \leq \pi/2$. 
In this case, it is again sufficient to consider terms \eqref{eq:i} and \eqref{eq:ii}. 
This is the hardest case since the $\sin$ terms in \eqref{eq:i} increase much faster than the $\cos$ terms decrease due to the factor of $C > 1$. 
Therefore, we cannot find a bound in terms of a sum-of-angles rule as in the previous cases. 
Instead, to
lower-bound the terms terms \eqref{eq:i} and \eqref{eq:ii} in this case, we proceed as follows: 
We reduce the problem of minimizing the sum $\eqref{eq:i} + \eqref{eq:ii} \geq 1$ to a one-dimensional problem by noting two facts: 
first, the term \eqref{eq:ii} depends only on the sum $|\theta_i| + |\theta_j|$. 
Moreover, it increases monotonously in this sum.
Second, for every choice of $\theta_i, \theta_j$, the value of \eqref{eq:i} takes on its minimal value $\overline \epsilon$ 
at $|\theta_i| = |\theta_j| \eqqcolon \alpha(\overline \epsilon)/2$. 
Hence, the sum $\eqref{eq:i} + \eqref{eq:ii}$ is lower bounded by the sum of $\overline \epsilon$ and \eqref{eq:ii} evaluated at $\alpha(\overline \epsilon)$.
The proof is concluded by a lower bound on the latter term.
We now elaborate those steps one-by-one.

Let us begin by expressing \eqref{eq:ii} as a function of $|\theta_i| + |\theta_j|$ 
\begin{align}
		 & |C \sin(\theta_i) \cos(\theta_j) - \cos(\theta_i) \sin(\theta_j)|
		+ |C \sin(\theta_j) \cos(\theta_i) - \cos(\theta_j) \sin(\theta_i)| \\
		& = C \sin(|\theta_i|) \cos(\theta_j) + \cos(\theta_i) \sin(|\theta_j|)
		+ C \sin(|\theta_j|) \cos(\theta_i) + \cos(\theta_j) \sin(|\theta_i|) \\
		& = (C + 1) \sin (|\theta_i| + |\theta_j|) , 
\end{align}
where we have used the fact that $\sign(\sin(\theta_i)) = - \sign(\sin(\theta_j))$ and that the cosines are non-negative. 
For $|\theta_i| + |\theta_j| \leq \pi/2$ this is a monotonously increasing function in $|\theta_i| + |\theta_j|$. 
We now define the value of the term \eqref{eq:i} to be 
\begin{align}
	\epsilon(\theta_i, \theta_j) \coloneqq \max\{ - C \sin|\theta_i| \sin |\theta_j| + \cos |\theta_i| \cos |\theta_j|,0\}  \geq 0 . 
	\label{eq:condition i}
\end{align}

For every choice of $\alpha = \alpha(\theta_i, \theta_j) \coloneqq |\theta_i| + |\theta_j|$, the minimal value of 
\begin{align}
\label{eq:minimize}
	\eqref{eq:i} + \eqref{eq:ii}  =  \epsilon(\theta_i, \theta_j) + (C+1)/(2 \deg{G'}) \cdot \sin(\alpha(\theta_i,\theta_j)) , 
\end{align}
is therefore attained at the minimal value of $\epsilon(\theta_i,\theta_j)$ subject to the constraint $|\theta_i| +  |\theta_j| = \alpha \leq \pi/2$. 
Covnersely, the value is attained at the minimal value of $\alpha(\theta_i, \theta_j)$ subject to the constraint \eqref{eq:condition i}. 
This reduces the problem to a one dimensional problem, which we exploit explicitly in the following lemma. 
The intuition behind this lemma is shown in Fig.~\ref{fig:minimize arctan}.
\begin{lemma}
\label{lem:minimal value}
For any fixed value $\pi/2 \geq |\theta_i| + |\theta_j| = \alpha \geq 0 $, the minimal value $\overline \epsilon(\alpha)$ of $\epsilon(\theta_i,\theta_j)$ is achieved at $|\theta_i| = |\theta_j| = \alpha/2$. 
Moreover, for every $\theta_i, \theta_j$ such that $\epsilon(\theta_i, \theta_j) \geq \overline \epsilon(\alpha)$ it holds that $|\theta_i| + |\theta_j| \geq \alpha$.
\end{lemma}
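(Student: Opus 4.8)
The plan is to pass to the non-negative variables $u = |\theta_i|$ and $v = |\theta_j|$, so that the quantity to control becomes $g(u,v) = \cos u\cos v - C\sin u\sin v$ with $\epsilon = \max\{g,0\}$; note that only $|\theta_i|,|\theta_j|$ enter, so the domain constraint $\sign\theta_i = -\sign\theta_j$ is already built in and it suffices to optimise over $u,v\ge 0$ with $u+v\le\pi/2$. The decisive step is to diagonalise the angular dependence by the sum--difference substitution $s = u+v$, $\delta = u-v$: the product-to-sum identities yield
\begin{align}
    g = \tfrac{1-C}{2}\cos\delta + \tfrac{1+C}{2}\cos s .
\end{align}
On a fixed line $s=\alpha$ the only free variable is $\delta$, and since $C>1$ the coefficient $\tfrac{1-C}{2}$ is negative, so $g$ is minimised exactly where $\cos\delta$ is largest, i.e.\ at $\delta=0$, that is $u=v=\alpha/2$. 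Because $x\mapsto\max\{x,0\}$ is non-decreasing, the minimiser of $\epsilon$ along the line coincides with that of $g$, which proves the first claim and identifies the line-minimum as $\overline\epsilon(\alpha)=\max\{\tfrac{1-C}{2}+\tfrac{1+C}{2}\cos\alpha,\,0\}$.

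Next I would establish that $\alpha\mapsto\overline\epsilon(\alpha)$ is strictly decreasing on the branch where it is positive: differentiating the inner expression gives $-\tfrac{1+C}{2}\sin\alpha<0$ for $\alpha\in(0,\pi/2]$, and it becomes flat (identically $0$) once $\cos\alpha\le\tfrac{C-1}{C+1}$. Restricting to the strictly decreasing branch makes the inverse $\alpha(\cdot)$ well defined, which is precisely the map entering Eq.~\eqref{def:alpha(epsilon)} and Fig.~\ref{fig:minimize arctan}.

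The second claim then follows by combining the first with this monotonicity. Take an admissible pair and set $\beta = |\theta_i|+|\theta_j|$. By the first claim the value at any point of the line $u+v=\beta$ is at least the line-minimum, so $\epsilon(\theta_i,\theta_j)\ge\overline\epsilon(\beta)$. Letting $\alpha$ be the parameter coupled to the point through $\overline\epsilon(\alpha)=\epsilon(\theta_i,\theta_j)$, i.e.\ $\alpha=\alpha(\overline\epsilon)$, we obtain $\overline\epsilon(\alpha)\ge\overline\epsilon(\beta)$, and strict monotonicity of $\overline\epsilon$ forces $\beta\ge\alpha$, with equality exactly at the midpoint $u=v=\alpha/2$ — the contour-touching configuration depicted in Fig.~\ref{fig:minimize arctan}.

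The trigonometric reduction is routine; the real obstacle I anticipate is the bookkeeping around the kink introduced by $\max\{\cdot,0\}$. One must cleanly separate the regime $g>0$, where $\overline\epsilon$ is strictly monotone and invertible, from the flat regime $g\le 0$ where $\epsilon\equiv 0$, and verify that the inverse $\alpha(\overline\epsilon)$ — and hence the bound $|\theta_i|+|\theta_j|\ge\alpha(\overline\epsilon)$ — stays well defined and correctly oriented across this transition and at the boundary $u+v=\pi/2$. Making the comparison $\overline\epsilon(\alpha)\ge\overline\epsilon(\beta)$ airtight through this case distinction, rather than the underlying calculation, is where the care is needed.
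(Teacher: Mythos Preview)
Your approach coincides with the paper's: both pass to sum/difference variables $s=|\theta_i|+|\theta_j|$, $\delta=|\theta_i|-|\theta_j|$ and use product-to-sum identities to obtain $g=\tfrac12\bigl((C+1)\cos s-(C-1)\cos\delta\bigr)$, from which the first claim is immediate since $C>1$; for the second claim the paper argues by contraposition, which is just your explicit monotonicity-of-$\overline\epsilon$ argument in contrapositive form. Your attention to the $\max\{\cdot,0\}$ kink and the well-definedness of the inverse $\alpha(\overline\epsilon)$ is warranted and in fact goes beyond what the paper spells out.
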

\begin{proof}
	Let $|\theta_i| = (\alpha - \delta)/2$, $|\theta_j| = (\alpha +\delta)/2$ for $0 \leq \delta \leq \pi/2$. Then 
	\begin{align}
		\epsilon(\theta_i, \theta_j) & = -  C \sin\left( \frac{\alpha - \delta}2 \right )\sin\left( \frac{\alpha + \delta}{2} \right) + \cos\left (\frac {\alpha + \delta} 2\right )\cos\left ( \frac{\alpha - \delta}2 \right ) \\
		& = \frac C2 \left( \cos \alpha- \cos \delta \right) + \frac 12 \left( \cos \alpha + \cos \delta \right)\\
		& = \frac12 \left((C + 1 ) \cos \alpha - (C-1) \cos \delta\right) , 
	\end{align}
	which is minimal at $\delta = 0 $. 

	The second part of the lemma can be be seen by contraposition: 
	Assume $|\theta_i| + |\theta_j| < \alpha(\overline\epsilon)$.
	Then
	\begin{align}
		\epsilon(\theta_i, \theta_j) & = \frac12 \left((C + 1 ) \cos( |\theta_i| + |\theta_j| )- (C-1) \cos (|\theta_i| - |\theta_j|)\right) \\
		& \leq \frac12 (C + 1 ) \cos( |\theta_i| + |\theta_j| ) < \frac12 (C + 1 ) \cos \alpha \equiv \overline \epsilon(\alpha) , 
	\end{align}
	where we have used the assumption and the monotonicity of the cosine in the interval $[0,\pi/2]$ in the last inequality and its non-negativity in the interval $[-\pi/2, \pi/2]$ in the second to last inequality.
\end{proof}

Now, for every choice of $\theta_i, \theta_j$, the minimal value of $ \overline{\epsilon}(\alpha)$ of $\epsilon(\theta_i, \theta_j)$ corresponding to $\alpha = |\theta_i| + |\theta_j|$ is attained at $|\theta_i| = |\theta_j| = \alpha/2$. 
Correspondingly, we can re-express $\alpha$ in terms of $\overline \epsilon$ as
\begin{align}
\label{def:alpha(epsilon)}
	\overline \epsilon \equiv \overline \epsilon(\alpha)  = - C \sin^2(\alpha/2) + \cos^2(\alpha/2) \quad \Leftrightarrow \quad&  \alpha(\overline \epsilon) = 2 \arctan \sqrt{\frac{1 - \overline\epsilon}{ C + \overline \epsilon}} . 
\end{align}
The second part of Lemma~\ref{lem:minimal value} states that for all $\theta_i, \theta_j$ such that $\epsilon(\theta_i, \theta_j) \geq \overline{\epsilon}(\alpha) \geq 0$ we have $|\theta_i| +  |\theta_j| \geq \alpha(\overline \epsilon)$ and consequently $\sin(|\theta_i| +  |\theta_j|) \geq \sin(\alpha(\overline \epsilon))$. 
Given $\theta_i, \theta_j$ and defining $\overline \epsilon \coloneqq \epsilon(\theta_i, \theta_j)$ this implies the lower bound
\begin{align}
	\eqref{eq:i} + \eqref{eq:ii} \ge \overline \epsilon + (C+1) /(2 \deg{G'}) \cdot \sin(\alpha(\overline \epsilon)) , 
\end{align}
where we have used the equivalence \eqref{def:alpha(epsilon)}. 

It remains to lower-bound $\sin(\alpha(\overline \epsilon))$. Define $x = \sqrt{(1 - \overline\epsilon)/(C + \overline \epsilon)}$. We can then rewrite
\begin{align}
	\sin( \alpha(\overline \epsilon)) = \sin \left( 2 \arctan x \right) = 2 \sin(\arctan x ) \cos(\arctan x)
	& = 2  \frac{ x} { 1 + x^2} \geq x , 
\end{align}
for $x \leq 1$, where we have used that $\sin(\arctan(x)) = x \cos( \arctan x) = x /\sqrt{1 + x^2}$.
We can also bound 
\begin{align}
	x = \sqrt{\frac{1 - \overline\epsilon}{ C + \overline \epsilon}} \ge \sqrt {\frac 1{C }} \sqrt{\frac{1 - \overline\epsilon}{ 1 + \overline \epsilon/C}} \ge \sqrt{\frac 1{C }} \sqrt{\frac{1 - \overline\epsilon}{ 1 + \overline \epsilon}} \geq \frac{1 - \overline \epsilon}{\sqrt C }, 
\end{align}
where the last inequality can be seen by squaring both sides and using $0 \leq \overline \epsilon < 1 $. 
Combining everything we obtain
\begin{align}
	\eqref{eq:i} + \eqref{eq:ii} \geq  \overline \epsilon + \frac{\sqrt{C}} {2 \deg{G'}} \cdot \left( 1 - \overline \epsilon \right) = \overline \epsilon \left( 1 -   \frac{\sqrt C }{2\deg{G'}} \right) + \frac{\sqrt{C} }{2 \deg G'}    = 1. 
\end{align}
due to our choice of $C = (2 \deg G')^2 $. 

To conclude the proof, we discuss the effect of applying $X$-flips to each of the sites. 
Applying $X_i X_j$ or $X_{\xi_{i,j}}$ merely alters the signs of the terms in \eqref{eq:iii}. 
But since we did not constrain the sign of $\theta_{\xi_{i,j}}$ in the proof, everything remains unchanged. 
Suppose an $X$-flip is applied to either qubit $i$ or $j$, or either both qubit $i$ and $\xi_{i,j}$ or either both qubit $j$ and $\xi_{i,j}$ . 
Assuming wlog.\ that qubit $i$ is $X$-flipped, we achieve the same lower bounds as before by identifying $\theta_i \mapsto  - \theta_i$.

\end{proof}

% \putbib

\end{appendix}

\ifjournal
\putbib
\end{bibunit}
\fi

\end{document}